\providecommand{\tabularnewline}{\\}
\newenvironment{cellvarwidth}[1][t]
    {\begin{varwidth}[#1]{\linewidth}}
    {\@finalstrut\@arstrutbox\end{varwidth}}
\providecommand{\algorithmname}{Algorithm}
\def\RSthmtxt{theorem~}\newref{thm}{name = \RSthmtxt}}
\def\RSlemtxt{lemma~}\newref{lem}{name = \RSlemtxt}}
\numberwithin{equation}{section}
\numberwithin{figure}{section}
\theoremstyle{plain}
\newtheorem{thm}{\protect\theoremname}
\theoremstyle{definition}
\newtheorem{defn}[thm]{\protect\definitionname}
\theoremstyle{definition}
\newtheorem{example}[thm]{\protect\examplename}
\theoremstyle{definition}
\newtheorem{problem}[thm]{\protect\problemname}
\newenvironment{lyxcode}
	{\par\begin{list}{}{
		\setlength{\rightmargin}{\leftmargin}
		\setlength{\listparindent}{0pt}
		\raggedright
		\setlength{\itemsep}{0pt}
		\setlength{\parsep}{0pt}
		\normalfont\ttfamily}%
	 \item[]}
	{\end{list}}
\theoremstyle{plain}
\newtheorem{prop}[thm]{\protect\propositionname}
\theoremstyle{plain}
\newtheorem{lem}[thm]{\protect\lemmaname}
\theoremstyle{remark}
\newtheorem*{rem*}{\protect\remarkname}
\theoremstyle{remark}
\newtheorem{rem}[thm]{\protect\remarkname}
\pgfplotsset{compat=1.17}
\providecommand{\definitionname}{Definition}
\providecommand{\examplename}{Example}
\providecommand{\lemmaname}{Lemma}
\providecommand{\problemname}{Problem}
\providecommand{\propositionname}{Proposition}
\providecommand{\remarkname}{Remark}
\providecommand{\theoremname}{Theorem}
\begin{document}
\title{Quantum Petri Nets with Quantum Event Structures semantics}
\author{Julien Saan JOACHIM\thanks{\protect\url{julien.joachim@ens-paris-saclay.fr} Ecole Normale Supérieure
Paris-Saclay, France}\and Marc de Visme\thanks{\protect\url{mdevisme@lmf.cnrs.fr} Université Paris-Saclay, CNRS,
ENS Paris-Saclay, INRIA, Laboratoire Méthodes Formelles, 91190,Gif-sur-Yvette,
France}\and Stefan Haar\thanks{\protect\url{stefan.haar@inria.fr} INRIA, France}}

\maketitle

\newcommand{\circled}[1]{%
  \tikz[baseline=(char.base)]\node[anchor=south west, draw,rectangle, rounded corners=7pt, inner sep=0pt, minimum size=5mm,
    text height=2mm](char){\ensuremath{#1}} ;}

\newcommand{\ssubset}{
  \mathrel{
    \mathrlap{\subset}
    \hphantom{\ll}
    \mathllap{\subset}
  }
}
\newcommand\widerc[1]{%
\tikz[baseline=(wideArcAnchor.base)]{
    \node[inner sep=0] (wideArcAnchor) {$#1$}; 
    \coordinate (wideArcAnchorA) at ($(wideArcAnchor.north west) + (0.1em,0.0ex)$);
    \coordinate (wideArcAnchorB) at ($(wideArcAnchor.north east) + (-0.1em,0.0ex)$);
    \draw[line width=0.1ex,line cap=round,out=45,in=135] (wideArcAnchorA) to (wideArcAnchorB);
}}

\global\long\def\ext{\relbar\joinrel\subset}%

\global\long\def\ci#1{\circled{#1}}%

\global\long\def\tr{\mathbf{tr}}%
\global\long\def\Id#1{\mathbf{Id}_{#1}}%

\global\long\def\minext{\relbar\joinrel\ssubset}%

\global\long\def\upclo#1{\left\uparrow #1\right\uparrow }%
\global\long\def\fanout#1#2{#1fanout#2}%
\global\long\def\fanin#1#2{#1fanin#2}%

\global\long\def\pre#1{\phantom{}^{\bullet}#1}%
\global\long\def\post#1{#1^{\bullet}}%

\global\long\def\m#1{\mathbf{m}_{#1}}%
\global\long\def\widearc#1{\widehat{#1}}%

\global\long\def\qot{\widetilde{Q_{0}}}%

\global\long\def\int#1#2{\left[#1;#2\right]}%

\global\long\def\d#1#2{d\left[#1;#2\right]}%

\global\long\def\dv#1#2{\dfrac{d}{v}\left[#1;#2\right]}%

\global\long\def\qd#1#2{\mathtt{d}\left[#1;#2\right]}%

\global\long\def\qdv#1#2{\mathtt{d}\left[#1;#2\right]}%
\global\long\def\operator#1{\mathbb{O}\left(#1\right)}%
\global\long\def\rest#1{E_{\left|#1\right.}}%
\global\long\def\cutpn#1{\widearc{\mathbf{\mathbf{#1}}}}%
\global\long\def\cut#1{\widearc{#1}}%

\global\long\def\restg#1#2{{#1}_{\left|#2\right.}}%

\begin{abstract}
Classical Petri nets provide a canonical model of concurrency, with
unfolding semantics linking nets, occurrence nets, and event structures.
No comparable framework exists for quantum concurrency: existing ``quantum
Petri nets'' lack rigorous concurrent and sound quantum semantics,
analysis tools, and unfolding theory.

We introduce \emph{Quantum Petri Nets} (QPNs), Petri nets equipped
with a quantum valuation compatible with the quantum event structure
semantics of Clairambault, De Visme, and Winskel (2019). Our contributions
are: (i) a local definition of \emph{Quantum Occurrence Nets} (LQONs)
compatible with quantum event structures, (ii) a construction of QPNs
with a well-defined unfolding semantics, (iii) a compositional framework
for QPNs.

This establishes a semantically well grounded model of quantum concurrency,
bridging Petri net theory and quantum programming.
\end{abstract}

\section{Introduction}

Petri nets offer a rigorous and compositional foundation for modeling
concurrency, causality, and synchronization in classical systems.
Their structural clarity and rich semantic theory--linking safe Petri
nets, occurrence nets, and event structures via a chain of categorical
co-reflections called the ``unfolding semantics'' \autocite{winskelPetriNetsAlgebras1987,winskelEventStructures1987,nielsenPetriNetsEvent1981}--have
made them central to the study of distributed computation.

In contrast, quantum concurrency still lacks a similarly grounded
semantic framework. Recent progress in quantum programming languages--notably
the $Q\Lambda$ quantum $\lambda$-calculus \autocite{selingerLambdaCalculusQuantum2006}
and its full abstraction \autocite{clairambaultFullAbstractionQuantum2020}--opens
the door to such a foundation. Notably, Quantum Event Structure have
enabled the development of a concurrent game semantics for $Q\Lambda$,
by encoding quantum computations as strategies over event structures
annotated with quantum valuations \autocites{clairambaultConcurrentQuantumStrategies2019}{clairambaultGameSemanticsQuantum2019}.
They hence capture the causal relations while still enabling quantum
non-local behaviors in a compact and compositional way. 

\paragraph{Previous Works}
Two previous attempts have sought to define “quantum Petri nets” by
adapting classical nets to quantum systems--typically by associating
quantum states with tokens and unitary or measurement operations with
transitions \autocite{letiaQuantumPetriNets2021,schmidtHowBakeQuantum2021}.
However, these models suffer from theoretical limitations, of which
a lack of rigorous handling of concurrency and entanglement, weak
or absent support for composition, limited analytical or verification
tools, and critically no formal unfolding semantics. As a result,
none of these models have emerged as viable foundations for reasoning
about concurrent and parallel quantum systems.

\paragraph{Problem.}

There is currently no definition of a compositional, semantically
well-grounded model for quantum concurrency parallelism akin to classical
Petri nets.

\paragraph{Contributions.}

This work addresses the above gap by introducing a new model of Quantum
Petri Nets (QPNs) that is grounded in the concurrent game semantics
of $\lambda$-calculus \autocite{clairambaultGameSemanticsQuantum2019},
and is equipped with a well-defined unfolding semantics -- based
on Quantum Event Structures. Our contributions include:
\begin{itemize}
\item A definition of \emph{Local Quantum Occurrence Nets (LQONs)}, as Petri
Nets annotated with local quantum valuations, in direct correspondence
with Quantum Event Structures. The key semantic constraint--the drop-condition--ensures
that those valuations correspond to sub-density operators, whose traces
are probability valuations on quantum states. {[}Section \ref{sec:Local Quantum Occurrence Nets}{]}
\item A construction of Quantum Petri Nets, as Petri nets annotated with
local quantum valuations, preserving semantic correspondence through
an unfolding construction. {[}Section \ref{sec:Quantum Petri Nets}{]}
\item A simple criterion for checking (almost locally) the ``drop-condition''
with a reasonable combinatorial complexity. {[}Section \ref{subsec:Local Drop Condition}{]}
\item A definition of compositional operations (parallel composition and
joins) for QPNs, and a sufficient condition under which these preserve
the some required quantum properties. {[}Section \ref{subsec:Interaction of Quantum Petri Nets}{]}
\end{itemize}

\paragraph{Future Work}

Further combinatorial simplifications of the ``drop-condition''
could be explored. A proof of a categorical co-reflection between
Quantum Event Structures and LQONs, or an adjunction between QPN and
its unfolding, would strengthen the formal grounding of the approach.
Finally, one should complete the compositional framework of QPNs to
characterize exactly those compositions that preserve all properties
of QPNs after being applied. 

\paragraph{Outline}

The remainder of this article is structured as follows. Section \ref{sec:Preliminaries}
recalls background material on Petri nets, occurrence nets, and event
structures, along with their relationships via unfolding semantics.
It then introduces de Visme’s notion of Quantum Event Structure, which
serves as the semantic foundation of our model. Sections \ref{sec:Global Quantum Occurrence Nets}
and \ref{sec:Local Quantum Occurrence Nets} introduce the notions
of Global and Local Quantum Occurrence Nets, respectively. Finally,
Section \ref{sec:Quantum Petri Nets} defines Quantum Petri Nets,
and presents their compositional semantics.

\section{Preliminaries \label{sec:Preliminaries}}

We assume a certain familiarity with Petri nets, and only precise
the relevant notions and notations for the rest of the discussion. 

\subsection{Classical Petri Nets}

A Petri net is a tuple of the form $N=\left(P,T,F,\m 0\right)$ ,
with $P\cap T=\emptyset$, where the elements of $P$ are called places
, those of $T$ transitions, where $F\subseteq\left(P\times T\right)\cup\left(T\times P\right)$
is a set of tuples called ``flow relation'', and where $\m 0\overset{multi}{\subseteq}P$
is a multiset called the ``initial marking''. Places are represented
in circles, events in squares -e.g. the place $\ci a\in P$ and transition
$\boxed{e}\in T$. It forms a bipartite graph where $F$ is the set
of edges and $P\sqcup T$ is the vertex partition, with distinguished
vertices specified by $\m 0$. The pre-places of a transition $\boxed{t}\in T$
are denoted by $\pre{\boxed{t}}=\left\{ \ci p\in P|\ci p\rightarrow\boxed{t}\in F\right\} $.
Similarly, its post-places are $\post{\boxed{t}}=\left\{ \ci p\in P|\boxed{t}\rightarrow\ci p\in F\right\} $.
Pre and Post transitions of a place $\ci p$ are denoted in a similar
fashion. A marking of a Petri Net $N$ is a multiset of places $\m{}\overset{multi}{\subseteq}P$,
and reachable markings for the classical discrete semantics are defined
in the usual way. If $N$ is a safe net, $\m{}\subseteq P$ is a subset
of $P$.\\

Let us simply recall that Petri nets are a powerful formalism for
modeling and analyzing concurrent and parallel systems due to their
explicit representation of causality, concurrency, synchronization,
and non-determinism through token flow and transition firing. In their
classical version, they also benefit from several analysis techniques
to check system properties-such as reachability, liveness, and invariants--while
numerous extensions (e.g., continuous, timed, colored, stochastic
Petri nets) enable scalable modeling of real-time, data-rich, and
probabilistic concurrent systems. This motivates the need for a parallel
tool for the analysis of quantum systems, and a robust definition
of Quantum Petri Nets.\\

Occurrence nets are constructions--actually some acyclic Petri Nets--that
are closely linked to Petri nets through a process called the ``Unfolding
semantics''. They directly make explicit the causality, concurrency
and conflict relation between distinct executions of a Petri Net.
More precisely, the concurrency is represented with a partial-order
relation, that alleviates -among other things- the burden of the combinatorial
state explosion cause by the interleaving semantics.

The exact Unfolding process will be further explored in Section \ref{sec:Quantum Petri Nets},
where after having defined Quantum Occurrence Nets, we will parallel
the link between classical Occurrence Nets and Petri net in order
to define Quantum Petri Nets. However, Occurrence nets will benefit
from an immediate introduction, since the elaboration of Quantum Occurrence
Nets is the object of the next section.

\subsection{Occurrence Nets}
\begin{defn}
Occurrence Nets are acyclic Petri Nets, whose places are called ``Conditions''
and transitions ``Events'' (cf. Section \ref{subsec:Event structure}
for the link with Event Structures). 

Fix a safe Petri net $O=\left(C,E,F,C_{0}\right)$ for the rest of
this subsection. We let the causality relation be denoted as $<$,
the transitive closure of $F$; and $\leq$ the reflexive closure
of $<$. Causality thus partially orders the events of $O$. Further,
if $\boxed{e}\in E$ is an event, let $\left[\boxed{e}\right]:=\left\{ \boxed{e'}\in E|\boxed{e'}\le\boxed{e}\right\} $
be the (backward) ``cone'' of $\boxed{e}$, and $\left[\boxed{e}\right):=\left[\boxed{e}\right]\setminus\boxed{e}$
the pre-cone of $\boxed{e}$. Conflict is defined as such.
\end{defn}

\begin{tcolorbox}[breakable,enhanced,title=Conflict \autocite{haarComputingRevealsRelation2013},colback=gray!5!white,
colframe=gray!75!black]

\begin{defn}
Two events $\boxed{x},\boxed{x\prime}$ are in conflict, written $\boxed{x}\#\boxed{x\prime}$
if there exist $\boxed{e},\boxed{e\prime}\in E$ such that:
\end{defn}

\begin{center}
\begin{tabular}{ccc}
1. $\boxed{e}\neq\boxed{e\prime}$ & 2. $\pre{\boxed{e}\cap\pre{\boxed{e'}\neq\emptyset}}$, & 3. $\boxed{e}\leq\boxed{x}$ and $\boxed{e'}\leq\boxed{x'}$.\tabularnewline
\end{tabular}
\par\end{center}
\end{tcolorbox}
\begin{tcolorbox}[breakable,enhanced,title=Occurrence Net \autocite{haarComputingRevealsRelation2013},colback=gray!5!white,
colframe=gray!75!black]

\begin{defn}
A net $O$ is called an \emph{Occurrence Net} if it satisfies the
following properties:
\begin{description}
\item [{No~self-conflict:}] $\forall x\in C\cup E,\hspace*{1em}not\left[x\#x\right]$
\item [{$<$~is~acyclic:}] i.e. $\leq$ is a partial order
\item [{Finite~cones}] all events are causally dependent on a finitely
many events, i.e. $\left|\left[\boxed{e}\right]\right|<\infty$
\item [{No~backward~branching:}] all conditions $\ci c$ satisfy $\left|\pre{\ci c}\right|\leq1$
\item [{Minimal~nodes:$C_{0}\subseteq C$}] is the set of $\leq$-minimal
nodes.
\end{description}
\end{defn}

\end{tcolorbox}

Figures \ref{fig:A-safe-Petri-net} and \ref{fig:Occurrence Net}
represents a (safe) Petri net comprised of $4$ places and $4$ transitions,
and an occurrence net. This Occurrence net turns out to be a prefix
of its unfolding (see Definition \ref{def:Branching-Process-Unfolding}).
\begin{figure}[H]
\centering
\begin{centering}
\subfloat[{\label{fig:A-safe-Petri-net}A safe Petri Net, with initial marking
$\protect\m 0=\left\{ 1,4\right\} $ {[}\autocite{haarComputingRevealsRelation2013}{]}}]{\includegraphics[width=0.35\columnwidth]{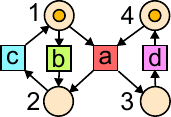}} \subfloat[\label{fig:Occurrence Net}A prefix of the unfolding of the left petri
net. This is an example of Occurrence Net.]{\includegraphics[width=0.5\columnwidth]{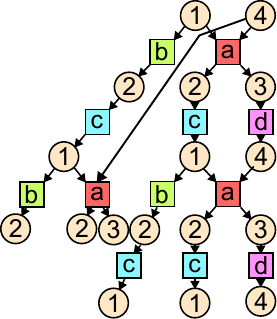}}
\par\end{centering}
\caption{}
\end{figure}

\subsection{Event structure (with polarities)\label{subsec:Event structure}}

Occurrence Nets are tightly linked to Event Structures, which are
partial-orders usually used to denote concurrent systems. This functorial
link was made explicit in \autocite[Theorem 3.4.11]{winskelEventStructures1987}
where Winskel establishes a co-reflection between the categories of
occurrence nets and event structures, which we will not need here.
On a high-level however, an Event structure can be seen as an occurrence
Net where one has discarded the conditions, while remembering the
causality relation along with the events in conflict. 

\begin{tcolorbox}[breakable,enhanced,title=Event Structure, colback=gray!5!white, colframe=gray!75!black]

\begin{defn}
An event structure is a tuple $E=\left(\underline{E},\leq,\#,pol\right)$,
where
\begin{itemize}
\item $\underline{E}$ is a set of elements called events,
\item $\leq$ is a partial order called causality,
\item $\#$ is a symmetric and irreflexive binary relation called conflict,
\item and $pol:\underline{E}\to\left\{ \ominus,0,\oplus\right\} $ associates
a polarity to every event,\\
such that:
\end{itemize}
\begin{description}
\item [{Finite~basis}] Writing $[e]=\left\{ a\in\underline{E}\mid a\leq e\right\} $,
we have $[e]$ finite for all $e\in\underline{E}$.
\item [{Heredity}] Whenever $a\#b\leq c$, we have $a\#c$.
\end{description}
\end{defn}

Many equivalent presentations of an event structure exists. In particular,
instead of providing the causality and conflict, one can provide the
immediate causality and minimal conflict, or one can provide the set
of configurations:
\begin{description}
\item [{Immediate~causality}] We write $a\rightarrowtriangle b$ whenever
$a<b$ and there is no $c$ such that $a<c<b$.
\item [{Minimal~conflict}] We write $a\sim b$ whenever $a\#b$ and there
is no $a'<a$ such that $a'\#b$, and no $b'<b$ such that $a\#b'$.
\item [{(Finite)~Configurations}] We write $C(E)$ for the set of finite
subsets $x\subseteq_{\textup{fin}}\underline{E}$ such that whenever
$a\leq b\in x$ we have $a\in x$, and whenever $a\#b\in x$ we have
$a\notin x$. Equivalently, whenever $a\rightarrowtriangle b\in x$
we have $a\in x$, and whenever $a\sim b\in x$ we have $a\notin x$.

$\vartriangleright$We say that a configuration $x$ \emph{enables}
an event $a\notin x$, and we write $x\ext a$ $x\cup\left\{ a\right\} $
is a configuration
\end{description}
\end{tcolorbox}

\paragraph{Polarities}

In game semantics, the positive polarity ($\oplus$, or ``Player'')
represents the program under analysis, while the negative polarity
($\ominus$, or ``Opponent'') models its environment---effectively,
“every program other than the one being studied.” Under this perspective,
negative events represent information received from the environment,
whereas positive events correspond to the program's output to it.
The neutral polarity ($0$) designates events that do not involve
external interaction, typically encoding internal computation or communication
within the program itself.

When considering sets of events, we write $x^{\ominus}$ for the restriction
of $x\subseteq E$ to only the events of negative polarity, and we
write $x\subseteq^{\ominus}y$ whenever $x\subseteq y$ and all the
events of $y\backslash x$ are negatives. We also use $(-)^{\oplus}$
for positive events and $(-)^{0,\oplus}$ for non-negative events.\\
We are more specifically interested in ``race-free'' event-structures
(see Definition\ref{def:quantum game}.$(1)$) , that guarantee that
the event structure is \emph{``deterministic'', }in the sense given
by \autocite{winskelDeterministicConcurrentStrategies2012}. Among
other properties, race-freeness describe exactly those event-structures
for which the quantum copy-cat strategy $c\mkern-12mu c$ is deterministic,
so that we have a probabilistic identity strategy (w.r.t. composition).
This property will be heavily used in Section \ref{subsec:Interaction of Quantum Petri Nets}
to define the interaction of two Quantum Petri Nets. 
\begin{example}
An example of Event Structure with polarities is given in on the left
of Figure \ref{fig:An Example of Quantum Event Structure}--ignoring
the quantum annotation $Q\left(\ldots\right)$ on its right, which
illustrate a notion seen further down.
\end{example}

\subsection{Quantum Event Structures}

In our case, \autocite{clairambaultConcurrentQuantumStrategies2019,clairambaultFullAbstractionQuantum2020}
assimilate basic Event Structures to game arenas, endowed with some
suitably constrained completely positive operators, in order to define
Quantum Event Structures. The latter are used to denote the paradigmatic
Quantum Lambda Calculus \autocite{selingerLambdaCalculusQuantum2006}
with a Game Semantics.

This justifies a definition of Quantum Occurrence Nets, by porting
the functorial correspondence between classical Occurrence Nets and
Events structures to de Visme's notion of Quantum Event Structures.
As such, we present a new framework for Quantum Occurrence Nets in
Sections \ref{sec:Global Quantum Occurrence Nets} and \ref{sec:Local Quantum Occurrence Nets}
immediately after this final introduction of Quantum Event structures.

\begin{tcolorbox}[breakable,enhanced,title=Quantum Game, colback=blue!5!white, colframe=blue!75!black]

\begin{defn}
\label{def:quantum game}A quantum game is a tuple $E=\left(\underline{E},\leq,\#,p,H\right)$
where $\left(\underline{E},\leq,\#,p\right)$ is a $\text{race-free}^{(1)}$
event structure with polarity, and $H:\underline{E}^{\ominus,\oplus}\to\mathsf{HilbertSpace}$
associates a finite dimensional Hilbert space to every event of non-neutral
polarity.

$(1)$ An event structure is race-free if only negative events can
be in conflict with another negative event. That is, $a\sim b\implies p\left(a\right)=p\left(b\right)=\ominus\,OR\,p\left(a\right)\neq\ominus\neq p\left(b\right)$
\end{defn}

\end{tcolorbox}

Quantum Event Structures decorate quantum games with quantum operators,
that we define as hermitian preserving-Completely Positive-Trace Non
Increasing Maps ( written $CPTNI$). An finite-dimensional operator
$M$ is positive if it has a real non negative spectrum. The Löwner
order $\sqsupseteq$ orders partially the category of Positive Maps,
such that $M\sqsupseteq N\iff M-N\text{ is positive}.$ The map $M$
is Completely positive if for every Hilbert space $H$, and every
(sub)density matrix $\rho$, $\left(M\otimes\Id H\right)\left(\rho\right)$
is positive. $M$ is Trace non increasing if $\rho\mapsto\tr\left(M\left(\rho\right)\right)$
is non increasing. $CPTNI$ maps forms a (non cartesian) symmetric
monoidal category. 
\begin{tcolorbox}[breakable,enhanced,title=Quantum Event Structure \autocite{clairambaultConcurrentQuantumStrategies2019},
colback=blue!5!white, colframe=blue!75!black]

\begin{defn}
\label{def:Quantum Event Structure}A quantum event structure is a
tuple $E=\left(\underline{E},\leq,\#,pol,H,Q\right)$ where $\left(\underline{E},\leq,\#,p,H\right)$
is a quantum game, and $Q$ is an operation on both configurations
and intervals of configurations such that:
\end{defn}

\begin{itemize}
\item For $x\in C(E)$, $Q(x)$ is a finite dimensional Hilbert space.
\item For $x,y\in C(E)$ such that $x\subseteq y$, 
\[
Q\left(x\subseteq y\right)\in CPTNI\left(Q\left(x\right)\otimes H\left(\left(y\backslash x\right)^{\ominus}\right),H\left(\left(y\backslash x\right)^{\oplus}\right)\otimes Q\left(y\right)\right)
\]
\end{itemize}
\begin{description}
\item [{Obliviousness:}] For $x\subseteq^{\ominus}y$, then $Q(y)=Q(x)\otimes{\displaystyle \bigotimes_{e\in y\backslash x}}H(e)$.
Additionally, $Q\left(x\subseteq^{\ominus}y\right)=\Id{Q(x)\otimes H\left(y\setminus x\right)}$.
When $x$ is the smallest configuration negatively included in $y$,
then those two equations are exact and not ``up to an implicit permutation''
\item [{Functoriality:}] For $x\in C(E)$, $Q\left(x\subseteq x\right)=\Id{Q(x)}$.
Additionally, whenever $x\subseteq y\subseteq z$, we have 
\[
Q(x\subseteq z)=\left(\Id{H\left(\left(y\backslash x\right)^{\oplus}\right)}\otimes Q\left(y\subseteq z\right)\right)\circ\left(Q\left(x\subseteq y\right)\otimes\Id{H\left(\left(z\backslash y\right)^{\ominus}\right)}\right)
\]
\item [{Drop~condition}] For$x,y_{1},\dots,y_{n}\in C(E)$, such that
$x\subseteq^{0,\oplus}y_{i}$ for all $1\leq i\leq n$, if we write
$y_{I}=\bigcup_{i\in I}y_{i}$ for $\varnothing\neq I\subseteq\{1,\dots,n\}$,
we have:
\[
d(x;y_{1},\dots,y_{n}):=\sum_{\substack{I\subseteq\{1,\dots,n\}\\
\text{s.t. }y_{I}\in C(E)
}
}(-1)^{|I|}\tr_{Q\left(y_{I}\right)\otimes H\left(y_{I}\backslash x\right)}\circ Q\left(x\subseteq y_{I}\right)\sqsupseteq0
\]
\end{description}
\end{tcolorbox}

The Obliviousness property transcribes the fact that the Event Structure
solely describes the behaviour of the program (i.e. the Player, $\oplus$
events), and is oblivious to the environment (i.e. the Opponent, $\ominus$
events). The functoriality property ensures compositionnality. Finally,
the Drop Condition axiomatizes the condition so that $Q$ defines
a proper quantum valuation. 

Indeed, Quantum Event Structures are derived from Probabilisitc Event
structures where a probability valuation $v:C(E)\rightarrow\left[0,1\right]$
is defined on \emph{every configuratio}n of $C(E)$--in a similar
way the completely positive operator $Q$ is, as they form Scott-open
sets on the directed-complete-partial-order defined by $\left(C(E),\supseteq\right)$.
When this valuation satisfies the drop condition, it gives it the
structure of a \emph{continuous} (resp. quantum) valuation on Scott-open
sets \autocite{winskelProbabilisticQuantumEvent2014}, which makes
$v$ (resp. $Q$) a proper probability (resp. sub-density) distribution
on the Borel algebra formed by configurations\autocite{alvarez-manillaExtensionResultContinuous1998,clairambaultConcurrentQuantumStrategies2019}.
Hence, $\tr Q$ defines a proper probability valuation.
\begin{example}
An example of Quantum Event Structure is given in Figure \ref{fig:An Example of Quantum Event Structure}.
\begin{figure}[H]
\centering
\begin{centering}

\tikzset{every picture/.style={line width=0.75pt}} 

\begin{tikzpicture}[x=0.75pt,y=0.75pt,yscale=-1,xscale=1]

\draw (71,52.4) node [anchor=north west][inner sep=0.75pt]    {$a^{\ominus }$};
\draw (31,102.4) node [anchor=north west][inner sep=0.75pt]    {$b^{0}$};
\draw (111,103.4) node [anchor=north west][inner sep=0.75pt]    {$c^{\oplus }$};
\draw (141,22.4) node [anchor=north west][inner sep=0.75pt]    {$ \begin{array}{l}
Q( \emptyset \subseteq \{a\}) =Id_{\mathbb{C}^{8}}\\
Q(\{a\} \subseteq \{a,b\}) =Id_{\mathbb{C}^{8}}\\
Q(\{a\} \subseteq \{a,c\}) =X\otimes Id_{\mathbb{C}^{4}}
\end{array}$};
\draw (141,102.4) node [anchor=north west][inner sep=0.75pt]    {$ \begin{array}{l}
Q( \emptyset ) =\mathbb{C}^{4}\\
Q( a) =\mathbb{C}^{8} \ \ H( a) =\mathbb{C}^{2}\\
Q( b) =\mathbb{C}^{8}\\
Q( c) =\mathbb{C}^{4} \ \ H( c) =\mathbb{C}^{2}
\end{array}$};
\draw    (69.84,74) -- (51.43,96.45) ;
\draw [shift={(50.16,98)}, rotate = 309.35] [color={rgb, 255:red, 0; green, 0; blue, 0 }  ][line width=0.75]    (10.93,-3.29) .. controls (6.95,-1.4) and (3.31,-0.3) .. (0,0) .. controls (3.31,0.3) and (6.95,1.4) .. (10.93,3.29)   ;
\draw    (90.57,74) -- (108.71,97.42) ;
\draw [shift={(109.93,99)}, rotate = 232.24] [color={rgb, 255:red, 0; green, 0; blue, 0 }  ][line width=0.75]    (10.93,-3.29) .. controls (6.95,-1.4) and (3.31,-0.3) .. (0,0) .. controls (3.31,0.3) and (6.95,1.4) .. (10.93,3.29)   ;
\draw    (51,111.14) .. controls (52.69,109.49) and (54.35,109.51) .. (56,111.2) .. controls (57.65,112.89) and (59.31,112.91) .. (61,111.27) .. controls (62.69,109.62) and (64.35,109.64) .. (66,111.33) .. controls (67.65,113.02) and (69.31,113.04) .. (71,111.39) .. controls (72.69,109.74) and (74.35,109.76) .. (76,111.45) .. controls (77.65,113.14) and (79.31,113.16) .. (81,111.52) .. controls (82.69,109.87) and (84.35,109.89) .. (86,111.58) .. controls (87.65,113.27) and (89.31,113.29) .. (91,111.64) .. controls (92.69,109.99) and (94.35,110.01) .. (96,111.7) .. controls (97.65,113.39) and (99.31,113.41) .. (101,111.76) .. controls (102.69,110.12) and (104.35,110.14) .. (106,111.83) -- (108,111.85) -- (108,111.85) ;

\end{tikzpicture}
\caption{\label{fig:An Example of Quantum Event Structure}An Example of Quantum
Event Structure {[}see Definition \ref{def:Quantum Event Structure}{]}.
It verifies the 3 axioms : Obliviousness; Functoriality--where the
definition of $Q\left(\emptyset\subseteq\left\{ a,b\right\} \right)$
and $Q\left(\emptyset\subseteq\left\{ a,c\right\} \right)$ is left
implicit so that the functoriality is verified; and the drop condition.}
\par\end{centering}
\end{figure}
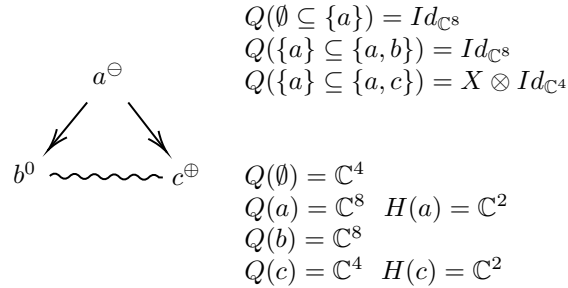
\end{example}

\section{Global Quantum Occurrence Nets\label{sec:Global Quantum Occurrence Nets}}

\global\long\def\ES#1{\mathbf{EventStructure}\left(#1\right)}%

\begin{problem}
Why a global annotation for Quantum Event Structures?\label{prob:Why-a-global}

De Visme's notion's of Quantum Event Structures defines a Global quantum
annotation $Q$, in the sense that the operator $Q$ is defined on
every interval of configuration $x\subseteq y$ instead of single
events. This allows to take into account entanglement phenomena, where
other attempts like Delbeque's in \autocite{delbecqueGameSemanticsQuantum2008}
imputed unreasonable restrictions with respect to this regard. Additionally,
as mentioned in the comment of Definition \ref{def:Quantum Event Structure},
it is the very fact that $Q$ is defined on configurations (i.e. Scott-open
sets) that allows it to be a proper quantum valuation.

However, global properties are unwieldy for semantic and verificational
purposes, as they involve checking a potentially exponential number
of cases. It is furthermore unnatural w.r.t. the reality of concurrent
computation, as one could expect a sensible semantic model for distributed
systems to represent the fact that each node (event of the event structure)
is independant of the one-another, and that the computations should
be local. 

For this reason, after deriving--from de Visme's Quantum Event Structures--a
global definition of Occurrence Nets in (this) Section \ref{sec:Global Quantum Occurrence Nets},
we establish, in Section \ref{sec:Local Quantum Occurrence Nets},
a framework for Local Quantum Occurrence Nets, where global entanglements
is still allowed globally but operations are defined locally.
\end{problem}

Taking advantage of the co-reflection between standard Event Structures
and Occurrence nets \autocite[Theorem 3.4.11]{winskelEventStructures1987},
a first approach is to directly transpose the quantum annotations
$Q$ and $H$ of a Quantum Event Structure onto an Occurrence Net
to obtain a Quantum Occurrence Net. This construction yields a framework
that we call Global Quantum Occurrence Net.

Let us denote by $\ES O$ the Event Structure associated to an occurrence
net $O$, obtained by discarding the conditions, while remembering
the causality relation along with the events in conflict ( forward
functor in \autocite{nielsenPetriNetsEvent1981}). Cuts are the maximal
elements for the causality order within a configuration. It is known
that each configuration corresponds to exactly one cut (and conversely).
But there is also a one-to-one correspondence between the configurations
$x\in C\left(\ES O\right)$ and the markings of $O$. We make use
of the latter in the following.

\begin{tcolorbox}[breakable,enhanced,title=Cut of a configuration and associated marking,
colback=cyan!5!white, colframe=cyan!75!black]

\begin{defn}
\label{def:Cut-of-a} Let $E$ be an event structure. To each configuration
$x\in C(E)$ corresponds an unique cut $\cut x$ - the set of maximal
elements of $x$ w.r.t. $\leq$ in the event structure.

By an abuse of notation, we identify $\cut x$ with the corresponding
reachable marking in the associated Occurrence Net : $\cutpn x:=\bigsqcup_{e\in\widearc x}\post{\boxed{e}}$
(the union is disjoint because in an occurrence net, a condition has
at most one pre-event). It is the set of post-conditions of the events
of the cut $\cut x$. \\

Conversely, if $\cutpn x$ is a marking of an Occurrence net, define
the associated canonical cut $\cut x:=\bigcup_{\ci a\in\cutpn x}\pre{\ci a}$
(the pre-events of the conditions of $\cutpn x$), along with the
associated configuration $x$ (the downward closure of $\cut x$ w.r.t
$\leq$). 
\end{defn}

\end{tcolorbox}

\begin{example}
An example of the correspondence between configurations, cuts and
markings is depicted in Figure \ref{fig:correspondance config and cut}.
\end{example}

\begin{figure}[H]
\centering
\begin{centering}
\subfloat[\label{fig:configurations of an event structure}An event structure,
represented with three configurations $x\subseteq y\subseteq z\in C(E)$.]{

\tikzset{every picture/.style={line width=0.75pt}} 

\begin{tikzpicture}[x=0.75pt,y=0.75pt,yscale=-1,xscale=1]

\draw  [draw opacity=0][fill={rgb, 255:red, 126; green, 211; blue, 33 }  ,fill opacity=1 ][blur shadow={shadow xshift=0.75pt,shadow yshift=0pt, shadow blur radius=2.25pt, shadow blur steps=4 ,shadow opacity=100}] (50.03,63.88) .. controls (62.73,65.28) and (192.33,54.48) .. (204.03,63.88) .. controls (215.73,73.28) and (207.33,225.48) .. (204.03,243.88) .. controls (200.73,262.28) and (116.33,269.48) .. (104.03,253.88) .. controls (91.73,238.28) and (106.73,153.28) .. (104.03,143.88) .. controls (101.33,134.48) and (49.72,142.42) .. (44.03,133.88) .. controls (38.34,125.35) and (37.33,62.48) .. (50.03,63.88) -- cycle ;
\draw  [draw opacity=0][fill={rgb, 255:red, 245; green, 166; blue, 35 }  ,fill opacity=1 ][blur shadow={shadow xshift=0.75pt,shadow yshift=0pt, shadow blur radius=2.25pt, shadow blur steps=4 ,shadow opacity=100}] (56.03,71.88) .. controls (68.73,73.28) and (188.33,62.48) .. (200.03,71.88) .. controls (211.73,81.28) and (196.33,160.48) .. (194.03,173.88) .. controls (191.73,187.28) and (119.73,194.28) .. (114.03,183.88) .. controls (108.33,173.48) and (116.73,143.28) .. (114.03,133.88) .. controls (111.33,124.48) and (59.72,132.42) .. (54.03,123.88) .. controls (48.34,115.35) and (43.33,70.48) .. (56.03,71.88) -- cycle ;
\draw  [draw opacity=0][fill={rgb, 255:red, 80; green, 227; blue, 194 }  ,fill opacity=1 ][blur shadow={shadow xshift=0.75pt,shadow yshift=0pt, shadow blur radius=2.25pt, shadow blur steps=4 ,shadow opacity=100}] (54.03,89.88) .. controls (54.03,86.57) and (56.71,83.88) .. (60.03,83.88) -- (194.03,83.88) .. controls (197.34,83.88) and (200.03,86.57) .. (200.03,89.88) -- (200.03,107.88) .. controls (200.03,111.2) and (197.34,113.88) .. (194.03,113.88) -- (60.03,113.88) .. controls (56.71,113.88) and (54.03,111.2) .. (54.03,107.88) -- cycle ;
\draw [line width=1.5]  [dash pattern={on 1.69pt off 2.76pt}]  (196,100) -- (240,100) ;
\draw [line width=1.5]  [dash pattern={on 1.69pt off 2.76pt}]  (194,150) -- (240,150) ;
\draw [line width=1.5]  [dash pattern={on 1.69pt off 2.76pt}]  (199,230) -- (240,230) ;

\draw (61.03,90.28) node [anchor=north west][inner sep=0.75pt]    {$a$};
\draw (59.03,162.28) node [anchor=north west][inner sep=0.75pt]    {$a'$};
\draw (57.03,234.28) node [anchor=north west][inner sep=0.75pt]    {$a''$};
\draw (115.03,90.28) node [anchor=north west][inner sep=0.75pt]    {$b$};
\draw (113.03,162.28) node [anchor=north west][inner sep=0.75pt]    {$b'$};
\draw (111.03,234.28) node [anchor=north west][inner sep=0.75pt]    {$b''$};
\draw (169.03,90.28) node [anchor=north west][inner sep=0.75pt]    {$c$};
\draw (168.03,162.28) node [anchor=north west][inner sep=0.75pt]    {$c'$};
\draw (165.03,234.28) node [anchor=north west][inner sep=0.75pt]    {$c''$};
\draw  [draw opacity=0][fill={rgb, 255:red, 80; green, 227; blue, 194 }  ,fill opacity=1 ]  (240,79) -- (259,79) -- (259,112) -- (240,112) -- cycle  ;
\draw (243,83.4) node [anchor=north west][inner sep=0.75pt]  [font=\Large]  {$x$};
\draw (307,61) node [anchor=north west][inner sep=0.75pt]  [font=\Large,rotate=-90] [align=left] {some configurations};
\draw  [draw opacity=0][fill={rgb, 255:red, 245; green, 166; blue, 35 }  ,fill opacity=1 ]  (240,129) -- (259,129) -- (259,162) -- (240,162) -- cycle  ;
\draw (243,133.4) node [anchor=north west][inner sep=0.75pt]  [font=\Large]  {$y$};
\draw  [draw opacity=0][fill={rgb, 255:red, 126; green, 211; blue, 33 }  ,fill opacity=1 ]  (241,208) -- (259,208) -- (259,236) -- (241,236) -- cycle  ;
\draw (244,212.4) node [anchor=north west][inner sep=0.75pt]  [font=\large]  {$z$};
\draw    (38,8) -- (218,8) -- (218,42) -- (38,42) -- cycle  ;
\draw (128,25) node  [font=\Large] [align=left] {Event Structure $\displaystyle E$};
\draw    (174.44,181.88) -- (174.13,226.88) ;
\draw [shift={(174.11,229.88)}, rotate = 270.4] [fill={rgb, 255:red, 0; green, 0; blue, 0 }  ][line width=0.08]  [draw opacity=0] (8.93,-4.29) -- (0,0) -- (8.93,4.29) -- cycle    ;
\draw    (174.94,109.88) -- (174.63,154.88) ;
\draw [shift={(174.61,157.88)}, rotate = 270.4] [fill={rgb, 255:red, 0; green, 0; blue, 0 }  ][line width=0.08]  [draw opacity=0] (8.93,-4.29) -- (0,0) -- (8.93,4.29) -- cycle    ;
\draw    (120.86,109.88) -- (120.23,154.88) ;
\draw [shift={(120.19,157.88)}, rotate = 270.8] [fill={rgb, 255:red, 0; green, 0; blue, 0 }  ][line width=0.08]  [draw opacity=0] (8.93,-4.29) -- (0,0) -- (8.93,4.29) -- cycle    ;
\draw    (120.11,181.88) -- (120.42,226.88) ;
\draw [shift={(120.44,229.88)}, rotate = 269.6] [fill={rgb, 255:red, 0; green, 0; blue, 0 }  ][line width=0.08]  [draw opacity=0] (8.93,-4.29) -- (0,0) -- (8.93,4.29) -- cycle    ;
\draw    (65.61,181.88) -- (65.92,226.88) ;
\draw [shift={(65.94,229.88)}, rotate = 269.6] [fill={rgb, 255:red, 0; green, 0; blue, 0 }  ][line width=0.08]  [draw opacity=0] (8.93,-4.29) -- (0,0) -- (8.93,4.29) -- cycle    ;
\draw    (66.78,109.88) -- (65.84,154.88) ;
\draw [shift={(65.78,157.88)}, rotate = 271.19] [fill={rgb, 255:red, 0; green, 0; blue, 0 }  ][line width=0.08]  [draw opacity=0] (8.93,-4.29) -- (0,0) -- (8.93,4.29) -- cycle    ;

\end{tikzpicture}
} \subfloat[\label{fig:cuts of configuration}The associated respective cuts $\protect\cut x$,
$\protect\cut y$ and $\protect\cut z$]{

\tikzset{every picture/.style={line width=0.75pt}} 

\begin{tikzpicture}[x=0.75pt,y=0.75pt,yscale=-1,xscale=1]

\draw [color={rgb, 255:red, 126; green, 211; blue, 33 }  ,draw opacity=1 ][line width=2.25]  [dash pattern={on 2.25pt off 2.25pt}]  (356,103) .. controls (358.33,102.53) and (359.74,103.44) .. (360.25,105.72) .. controls (360.9,108.03) and (362.35,108.87) .. (364.6,108.22) .. controls (366.85,107.53) and (368.35,108.32) .. (369.12,110.58) .. controls (369.83,112.78) and (371.35,113.51) .. (373.7,112.76) .. controls (375.87,111.91) and (377.4,112.57) .. (378.27,114.75) .. controls (379.24,116.94) and (380.74,117.53) .. (382.77,116.53) .. controls (384.96,115.56) and (386.6,116.14) .. (387.69,118.29) .. controls (388.67,120.36) and (390.26,120.86) .. (392.45,119.79) .. controls (394.41,118.62) and (395.93,119.04) .. (397.02,121.06) .. controls (398.37,123.11) and (400,123.49) .. (401.93,122.21) .. controls (403.98,120.92) and (405.62,121.26) .. (406.84,123.23) .. controls (407.01,125.38) and (408.18,126.62) .. (410.34,126.97) .. controls (412.55,128.08) and (413.06,129.67) .. (411.88,131.74) .. controls (410.51,133.43) and (410.78,135.12) .. (412.69,136.8) .. controls (414.51,138.36) and (414.64,139.88) .. (413.07,141.36) .. controls (411.49,143.31) and (411.55,144.99) .. (413.26,146.41) .. controls (414.95,148.46) and (414.95,150.29) .. (413.28,151.89) .. controls (411.59,153.5) and (411.56,155.17) .. (413.19,156.89) .. controls (414.8,158.67) and (414.75,160.41) .. (413.02,162.12) .. controls (411.3,163.24) and (411.23,164.75) .. (412.81,166.64) .. controls (414.38,168.58) and (414.28,170.44) .. (412.51,172.22) .. controls (410.77,173.38) and (410.67,174.96) .. (412.22,176.97) .. controls (413.8,178.36) and (413.7,179.96) .. (411.91,181.79) .. controls (410.12,183.63) and (410.01,185.26) .. (411.58,186.67) .. controls (413.11,188.74) and (413,190.38) .. (411.25,191.57) .. controls (409.46,193.43) and (409.35,195.07) .. (410.93,196.49) .. controls (412.46,198.56) and (412.36,200.19) .. (410.61,201.39) .. controls (408.82,203.24) and (408.72,204.87) .. (410.31,206.26) .. controls (411.87,208.28) and (411.77,210.2) .. (410,212.03) .. controls (408.27,213.22) and (408.2,214.79) .. (409.78,216.75) .. controls (411.37,218.67) and (411.32,220.21) .. (409.61,221.38) .. controls (407.89,223.15) and (407.85,224.94) .. (409.48,226.76) .. controls (411.13,228.52) and (411.11,230.25) .. (409.44,231.94) .. controls (407.79,233.63) and (407.82,235.27) .. (409.52,236.86) .. controls (411.25,238.36) and (411.32,239.91) .. (409.73,241.51) .. controls (408.17,243.11) and (408.31,244.78) .. (410.15,246.52) .. controls (412.02,248.04) and (412.27,249.74) .. (410.9,251.63) .. controls (409.61,253.55) and (409.96,255.01) .. (411.97,256.02) .. controls (414.07,256.92) and (414.76,258.48) .. (414.03,260.7) .. controls (414.04,263.31) and (415.18,264.44) .. (417.45,264.07) .. controls (419.71,263.43) and (421.21,264.17) .. (421.96,266.29) .. controls (423.12,268.41) and (424.79,268.87) .. (426.96,267.68) .. controls (428.73,266.3) and (430.3,266.53) .. (431.66,268.38) .. controls (433.23,270.17) and (434.94,270.27) .. (436.77,268.68) .. controls (438.54,267.02) and (440.18,267.01) .. (441.67,268.64) .. controls (443.66,270.21) and (445.36,270.1) .. (446.78,268.32) .. controls (448.55,266.47) and (450.31,266.27) .. (452.08,267.74) .. controls (453.92,269.16) and (455.52,268.92) .. (456.89,267.02) .. controls (458.21,265.09) and (459.84,264.79) .. (461.77,266.12) .. controls (463.76,267.4) and (465.39,267.04) .. (466.68,265.05) .. controls (467.91,263.04) and (469.55,262.63) .. (471.58,263.83) .. controls (473.65,264.99) and (475.27,264.53) .. (476.44,262.46) .. controls (477.55,260.38) and (479.14,259.88) .. (481.23,260.97) .. controls (483.34,262.02) and (484.9,261.48) .. (485.91,259.36) .. controls (486.84,257.23) and (488.35,256.66) .. (490.44,257.64) .. controls (492.55,258.59) and (494.17,257.91) .. (495.32,255.61) .. controls (496.01,253.47) and (497.38,252.84) .. (499.43,253.71) .. controls (501.82,254.38) and (503.41,253.56) .. (504.2,251.25) .. controls (504.83,248.98) and (506.27,248.13) .. (508.51,248.72) .. controls (510.76,249.24) and (512.13,248.29) .. (512.62,245.88) -- (516,243) ;
\draw [color={rgb, 255:red, 80; green, 227; blue, 194 }  ,draw opacity=1 ][line width=2.25]    (360,102) .. controls (371.47,109.78) and (388.55,115.75) .. (407.98,119.14) .. controls (427.41,122.53) and (499.94,131.06) .. (530,102) ;
\draw [color={rgb, 255:red, 245; green, 166; blue, 35 }  ,draw opacity=1 ][line width=2.25]  [dash pattern={on 3.75pt off 3pt on 7.5pt off 1.5pt}]  (360,102) .. controls (371.47,109.78) and (390.57,118.61) .. (410,122) .. controls (429.43,125.39) and (423.88,166.68) .. (420,192) .. controls (416.12,217.32) and (511.95,189.45) .. (530,172) ;

\draw (382.03,97.86) node [anchor=north west][inner sep=0.75pt]    {$a$};
\draw (380.03,169.86) node [anchor=north west][inner sep=0.75pt]    {$a'$};
\draw (378.03,241.86) node [anchor=north west][inner sep=0.75pt]    {$a''$};
\draw (436.03,97.86) node [anchor=north west][inner sep=0.75pt]    {$b$};
\draw (434.03,169.86) node [anchor=north west][inner sep=0.75pt]    {$b'$};
\draw (432.03,241.86) node [anchor=north west][inner sep=0.75pt]    {$b''$};
\draw (490.03,97.86) node [anchor=north west][inner sep=0.75pt]    {$c$};
\draw (489.03,169.86) node [anchor=north west][inner sep=0.75pt]    {$c'$};
\draw (486.03,241.86) node [anchor=north west][inner sep=0.75pt]    {$c''$};
\draw  [draw opacity=0][fill={rgb, 255:red, 80; green, 227; blue, 194 }  ,fill opacity=1 ]  (533,69) -- (552,69) -- (552,104) -- (533,104) -- cycle  ;
\draw (536,73.4) node [anchor=north west][inner sep=0.75pt]  [font=\Large]  {$\hat{x}$};
\draw    (378,19) -- (526,19) -- (526,52) -- (378,52) -- cycle  ;
\draw (381,23) node [anchor=north west][inner sep=0.75pt]  [font=\Large] [align=left] {associated cuts};
\draw  [draw opacity=0][fill={rgb, 255:red, 245; green, 166; blue, 35 }  ,fill opacity=1 ]  (533,147) -- (552,147) -- (552,182) -- (533,182) -- cycle  ;
\draw (536,151.4) node [anchor=north west][inner sep=0.75pt]  [font=\Large]  {$\hat{y}$};
\draw  [draw opacity=0][fill={rgb, 255:red, 126; green, 211; blue, 33 }  ,fill opacity=1 ]  (528,218) -- (546,218) -- (546,248) -- (528,248) -- cycle  ;
\draw (531,222.4) node [anchor=north west][inner sep=0.75pt]  [font=\large]  {$\hat{z}$};
\draw    (495.44,189.46) -- (495.13,234.46) ;
\draw [shift={(495.11,237.46)}, rotate = 270.4] [fill={rgb, 255:red, 0; green, 0; blue, 0 }  ][line width=0.08]  [draw opacity=0] (8.93,-4.29) -- (0,0) -- (8.93,4.29) -- cycle    ;
\draw    (495.94,117.46) -- (495.63,162.46) ;
\draw [shift={(495.61,165.46)}, rotate = 270.4] [fill={rgb, 255:red, 0; green, 0; blue, 0 }  ][line width=0.08]  [draw opacity=0] (8.93,-4.29) -- (0,0) -- (8.93,4.29) -- cycle    ;
\draw    (441.86,117.46) -- (441.23,162.46) ;
\draw [shift={(441.19,165.46)}, rotate = 270.8] [fill={rgb, 255:red, 0; green, 0; blue, 0 }  ][line width=0.08]  [draw opacity=0] (8.93,-4.29) -- (0,0) -- (8.93,4.29) -- cycle    ;
\draw    (441.11,189.46) -- (441.42,234.46) ;
\draw [shift={(441.44,237.46)}, rotate = 269.6] [fill={rgb, 255:red, 0; green, 0; blue, 0 }  ][line width=0.08]  [draw opacity=0] (8.93,-4.29) -- (0,0) -- (8.93,4.29) -- cycle    ;
\draw    (386.61,189.46) -- (386.92,234.46) ;
\draw [shift={(386.94,237.46)}, rotate = 269.6] [fill={rgb, 255:red, 0; green, 0; blue, 0 }  ][line width=0.08]  [draw opacity=0] (8.93,-4.29) -- (0,0) -- (8.93,4.29) -- cycle    ;
\draw    (387.78,117.46) -- (386.84,162.46) ;
\draw [shift={(386.78,165.46)}, rotate = 271.19] [fill={rgb, 255:red, 0; green, 0; blue, 0 }  ][line width=0.08]  [draw opacity=0] (8.93,-4.29) -- (0,0) -- (8.93,4.29) -- cycle    ;

\end{tikzpicture}
}
\par\end{centering}
\caption{\label{fig:correspondance config and cut}}
\end{figure}

\begin{center}
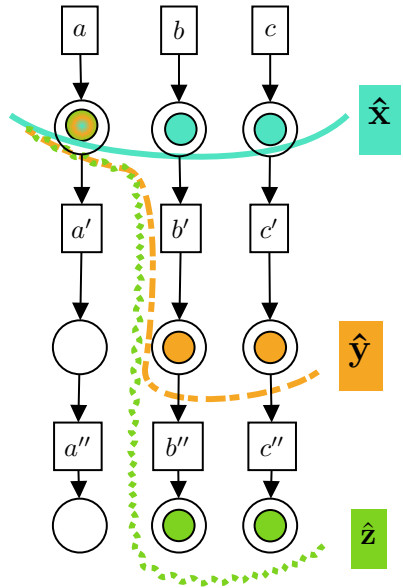
\begin{figure}[H]
\centering

  
\tikzset {_0ipcj92r0/.code = {\pgfsetadditionalshadetransform{ \pgftransformshift{\pgfpoint{0 bp } { 0 bp }  }  \pgftransformscale{1 }  }}}
\pgfdeclareradialshading{_smds8cf5a}{\pgfpoint{0bp}{0bp}}{rgb(0bp)=(0.31,0.89,0.76);
rgb(0bp)=(0.31,0.89,0.76);
rgb(12.5bp)=(0.96,0.65,0.14);
rgb(25bp)=(0.49,0.83,0.13);
rgb(400bp)=(0.49,0.83,0.13)}
\tikzset{every picture/.style={line width=0.75pt}} 

\begin{tikzpicture}[x=0.75pt,y=0.75pt,yscale=-1,xscale=1]

\path  [shading=_smds8cf5a,_0ipcj92r0] (47.98,79.88) .. controls (47.98,75.5) and (51.53,71.95) .. (55.91,71.95) .. controls (60.29,71.95) and (63.84,75.5) .. (63.84,79.88) .. controls (63.84,84.26) and (60.29,87.81) .. (55.91,87.81) .. controls (51.53,87.81) and (47.98,84.26) .. (47.98,79.88) -- cycle ; 
 \draw   (47.98,79.88) .. controls (47.98,75.5) and (51.53,71.95) .. (55.91,71.95) .. controls (60.29,71.95) and (63.84,75.5) .. (63.84,79.88) .. controls (63.84,84.26) and (60.29,87.81) .. (55.91,87.81) .. controls (51.53,87.81) and (47.98,84.26) .. (47.98,79.88) -- cycle ; 

\draw  [fill={rgb, 255:red, 80; green, 227; blue, 194 }  ,fill opacity=1 ] (97.98,81.88) .. controls (97.98,77.5) and (101.53,73.95) .. (105.91,73.95) .. controls (110.29,73.95) and (113.84,77.5) .. (113.84,81.88) .. controls (113.84,86.26) and (110.29,89.81) .. (105.91,89.81) .. controls (101.53,89.81) and (97.98,86.26) .. (97.98,81.88) -- cycle ;
\draw  [fill={rgb, 255:red, 80; green, 227; blue, 194 }  ,fill opacity=1 ] (143.11,82.01) .. controls (143.11,77.63) and (146.66,74.08) .. (151.04,74.08) .. controls (155.43,74.08) and (158.98,77.63) .. (158.98,82.01) .. controls (158.98,86.39) and (155.43,89.95) .. (151.04,89.95) .. controls (146.66,89.95) and (143.11,86.39) .. (143.11,82.01) -- cycle ;
\draw  [fill={rgb, 255:red, 245; green, 166; blue, 35 }  ,fill opacity=1 ] (96.98,192.01) .. controls (96.98,187.63) and (100.53,184.08) .. (104.91,184.08) .. controls (109.29,184.08) and (112.84,187.63) .. (112.84,192.01) .. controls (112.84,196.39) and (109.29,199.95) .. (104.91,199.95) .. controls (100.53,199.95) and (96.98,196.39) .. (96.98,192.01) -- cycle ;
\draw  [fill={rgb, 255:red, 245; green, 166; blue, 35 }  ,fill opacity=1 ] (143.11,192.01) .. controls (143.11,187.63) and (146.66,184.08) .. (151.04,184.08) .. controls (155.43,184.08) and (158.98,187.63) .. (158.98,192.01) .. controls (158.98,196.39) and (155.43,199.95) .. (151.04,199.95) .. controls (146.66,199.95) and (143.11,196.39) .. (143.11,192.01) -- cycle ;
\draw  [fill={rgb, 255:red, 126; green, 211; blue, 33 }  ,fill opacity=1 ] (96.98,281.88) .. controls (96.98,277.5) and (100.53,273.95) .. (104.91,273.95) .. controls (109.29,273.95) and (112.84,277.5) .. (112.84,281.88) .. controls (112.84,286.26) and (109.29,289.81) .. (104.91,289.81) .. controls (100.53,289.81) and (96.98,286.26) .. (96.98,281.88) -- cycle ;
\draw  [fill={rgb, 255:red, 126; green, 211; blue, 33 }  ,fill opacity=1 ] (143.11,282.01) .. controls (143.11,277.63) and (146.66,274.08) .. (151.04,274.08) .. controls (155.43,274.08) and (158.98,277.63) .. (158.98,282.01) .. controls (158.98,286.39) and (155.43,289.95) .. (151.04,289.95) .. controls (146.66,289.95) and (143.11,286.39) .. (143.11,282.01) -- cycle ;
\draw [color={rgb, 255:red, 80; green, 227; blue, 194 }  ,draw opacity=1 ][line width=2.25]    (20,74.81) .. controls (31.47,82.58) and (48.55,88.56) .. (67.98,91.95) .. controls (84.16,94.77) and (137.17,101.16) .. (171.81,86.43) .. controls (178.76,83.47) and (184.97,79.67) .. (190,74.81) ;
\draw [color={rgb, 255:red, 245; green, 166; blue, 35 }  ,draw opacity=1 ][line width=2.25]  [dash pattern={on 3.75pt off 3pt on 7.5pt off 1.5pt}]  (27.98,81.95) .. controls (39.45,89.72) and (58.55,98.56) .. (77.98,101.95) .. controls (97.41,105.34) and (91.86,176.63) .. (87.98,201.95) .. controls (84.09,227.26) and (159.92,219.4) .. (177.98,201.95) ;
\draw [color={rgb, 255:red, 126; green, 211; blue, 33 }  ,draw opacity=1 ][line width=2.25]  [dash pattern={on 2.25pt off 2.25pt}]  (27.98,81.95) .. controls (30.31,81.48) and (31.72,82.39) .. (32.23,84.66) .. controls (32.88,86.98) and (34.33,87.81) .. (36.58,87.16) .. controls (38.83,86.47) and (40.33,87.26) .. (41.09,89.52) .. controls (41.8,91.72) and (43.33,92.45) .. (45.68,91.71) .. controls (47.85,90.85) and (49.38,91.51) .. (50.25,93.7) .. controls (51.22,95.89) and (52.72,96.48) .. (54.75,95.48) .. controls (56.94,94.51) and (58.58,95.09) .. (59.67,97.23) .. controls (60.65,99.3) and (62.24,99.81) .. (64.43,98.74) .. controls (66.39,97.57) and (67.91,97.99) .. (68.99,100) .. controls (70.34,102.05) and (71.98,102.43) .. (73.91,101.16) .. controls (75.96,99.87) and (77.59,100.23) .. (78.82,102.22) .. controls (78.73,104.37) and (79.72,105.63) .. (81.79,106) .. controls (83.96,107.21) and (84.45,108.85) .. (83.24,110.9) .. controls (81.93,112.97) and (82.21,114.62) .. (84.07,115.85) .. controls (85.92,117.28) and (86.09,118.93) .. (84.59,120.81) .. controls (83.06,122.73) and (83.19,124.59) .. (84.96,126.38) .. controls (86.68,127.64) and (86.74,129.33) .. (85.15,131.45) .. controls (83.52,132.9) and (83.55,134.34) .. (85.24,135.76) .. controls (86.93,138) and (86.94,139.89) .. (85.27,141.44) .. controls (83.6,143) and (83.59,144.59) .. (85.24,146.2) .. controls (86.89,147.85) and (86.86,149.49) .. (85.17,151.12) .. controls (83.47,152.77) and (83.43,154.46) .. (85.06,156.2) .. controls (86.68,157.98) and (86.63,159.72) .. (84.92,161.42) .. controls (83.19,163.14) and (83.13,164.92) .. (84.74,166.77) .. controls (86.34,168.64) and (86.29,169.99) .. (84.59,170.84) .. controls (82.86,172.61) and (82.78,174.44) .. (84.37,176.35) .. controls (85.96,178.28) and (85.89,180.15) .. (84.14,181.94) .. controls (82.43,182.81) and (82.37,184.22) .. (83.95,186.18) .. controls (85.53,188.15) and (85.44,190.04) .. (83.69,191.86) .. controls (81.98,192.73) and (81.92,194.16) .. (83.5,196.14) .. controls (85.07,198.13) and (84.98,200.04) .. (83.23,201.87) .. controls (81.52,202.75) and (81.46,204.18) .. (83.03,206.16) .. controls (84.61,208.15) and (84.53,210.05) .. (82.78,211.88) .. controls (81.07,212.76) and (81,214.18) .. (82.59,216.15) .. controls (84.17,218.11) and (84.09,220) .. (82.35,221.81) .. controls (80.64,222.68) and (80.59,224.09) .. (82.18,226.02) .. controls (83.77,227.94) and (83.71,229.79) .. (81.98,231.57) .. controls (80.25,233.34) and (80.2,234.71) .. (81.84,235.68) .. controls (83.45,237.53) and (83.4,239.33) .. (81.68,241.07) .. controls (79.97,242.8) and (79.93,244.55) .. (81.56,246.34) .. controls (83.19,248.09) and (83.16,249.81) .. (81.47,251.49) .. controls (79.78,253.15) and (79.77,254.81) .. (81.42,256.48) .. controls (83.09,258.11) and (83.09,259.72) .. (81.43,261.32) .. controls (79.78,262.9) and (79.79,264.45) .. (81.48,265.97) .. controls (83.17,267.43) and (83.22,269.28) .. (81.62,271.51) .. controls (80.01,273) and (80.08,274.39) .. (81.81,275.69) .. controls (83.59,277.57) and (83.7,279.2) .. (82.13,280.59) .. controls (80.64,282.59) and (80.81,284.38) .. (82.66,285.95) .. controls (84.53,287.33) and (84.77,288.9) .. (83.37,290.67) .. controls (82.13,292.9) and (82.54,294.65) .. (84.61,295.9) .. controls (86.66,296.67) and (87.27,298.08) .. (86.46,300.15) .. controls (86.41,302.63) and (87.6,303.73) .. (90.02,303.44) .. controls (92.31,302.85) and (93.79,303.64) .. (94.44,305.8) .. controls (95.41,307.97) and (96.98,308.53) .. (99.16,307.48) .. controls (101.25,306.31) and (102.85,306.7) .. (103.95,308.64) .. controls (105.55,310.61) and (107.27,310.87) .. (109.1,309.44) .. controls (110.87,307.94) and (112.52,308.08) .. (114.05,309.86) .. controls (115.7,311.59) and (117.41,311.64) .. (119.18,310) .. controls (120.89,308.31) and (122.47,308.28) .. (123.92,309.9) .. controls (125.81,311.46) and (127.6,311.34) .. (129.27,309.54) .. controls (130.52,307.74) and (132.14,307.56) .. (134.12,309) .. controls (135.8,310.45) and (137.41,310.2) .. (138.96,308.27) .. controls (140.42,306.3) and (142.02,305.99) .. (143.75,307.34) .. controls (145.88,308.57) and (147.62,308.15) .. (148.96,306.1) .. controls (149.87,304.13) and (151.39,303.69) .. (153.52,304.8) .. controls (155.69,305.85) and (157.32,305.31) .. (158.39,303.16) .. controls (159.33,301.02) and (160.86,300.41) .. (162.99,301.33) .. controls (165.16,302.18) and (166.59,301.5) .. (167.28,299.3) .. controls (168.05,296.99) and (169.6,296.1) .. (171.93,296.64) .. controls (174.06,297.22) and (175.4,296.25) .. (175.95,293.74) -- (177.98,291.95) ;

\draw    (45.98,19.95) -- (63.98,19.95) -- (63.98,43.95) -- (45.98,43.95) -- cycle  ;
\draw (54.98,31.95) node    {$a$};
\draw    (45.98,119.95) -- (65.98,119.95) -- (65.98,143.95) -- (45.98,143.95) -- cycle  ;
\draw (55.98,131.95) node    {$a'$};
\draw    (41.98,229.95) -- (65.98,229.95) -- (65.98,253.95) -- (41.98,253.95) -- cycle  ;
\draw (53.98,241.95) node    {$a''$};
\draw    (55.98, 80.95) circle [x radius= 13.6, y radius= 13.6]   ;
\draw (49.98,73.35) node [anchor=north west][inner sep=0.75pt]    {$$};
\draw    (54.98, 191.95) circle [x radius= 13.6, y radius= 13.6]   ;
\draw (48.98,184.35) node [anchor=north west][inner sep=0.75pt]    {$$};
\draw    (95.98,19.95) -- (113.98,19.95) -- (113.98,43.95) -- (95.98,43.95) -- cycle  ;
\draw (104.98,31.95) node    {$b$};
\draw    (95.98,119.95) -- (115.98,119.95) -- (115.98,143.95) -- (95.98,143.95) -- cycle  ;
\draw (105.98,131.95) node    {$b'$};
\draw    (91.98,229.95) -- (116.98,229.95) -- (116.98,253.95) -- (91.98,253.95) -- cycle  ;
\draw (104.48,241.95) node    {$b''$};
\draw    (104.98, 81.95) circle [x radius= 13.6, y radius= 13.6]   ;
\draw (98.98,74.35) node [anchor=north west][inner sep=0.75pt]    {$$};
\draw    (104.98, 191.95) circle [x radius= 13.6, y radius= 13.6]   ;
\draw (98.98,184.35) node [anchor=north west][inner sep=0.75pt]    {$$};
\draw    (141.98,19.95) -- (159.98,19.95) -- (159.98,43.95) -- (141.98,43.95) -- cycle  ;
\draw (150.98,31.95) node    {$c$};
\draw    (140.98,119.95) -- (159.98,119.95) -- (159.98,143.95) -- (140.98,143.95) -- cycle  ;
\draw (150.48,131.95) node    {$c'$};
\draw    (139.98,229.95) -- (163.98,229.95) -- (163.98,253.95) -- (139.98,253.95) -- cycle  ;
\draw (151.98,241.95) node    {$c''$};
\draw    (150.98, 81.95) circle [x radius= 13.6, y radius= 13.6]   ;
\draw (144.98,74.35) node [anchor=north west][inner sep=0.75pt]    {$$};
\draw    (150.98, 191.95) circle [x radius= 13.6, y radius= 13.6]   ;
\draw (144.98,184.35) node [anchor=north west][inner sep=0.75pt]    {$$};
\draw    (54.98, 281.95) circle [x radius= 13.6, y radius= 13.6]   ;
\draw (48.98,274.35) node [anchor=north west][inner sep=0.75pt]    {$$};
\draw    (104.98, 281.95) circle [x radius= 13.6, y radius= 13.6]   ;
\draw (98.98,274.35) node [anchor=north west][inner sep=0.75pt]    {$$};
\draw    (150.98, 281.95) circle [x radius= 13.6, y radius= 13.6]   ;
\draw (144.98,274.35) node [anchor=north west][inner sep=0.75pt]    {$$};
\draw  [draw opacity=0][fill={rgb, 255:red, 80; green, 227; blue, 194 }  ,fill opacity=1 ]  (195.98,59.95) -- (216.98,59.95) -- (216.98,94.95) -- (195.98,94.95) -- cycle  ;
\draw (198.98,64.35) node [anchor=north west][inner sep=0.75pt]  [font=\Large]  {$\mathbf{\hat{x}}$};
\draw  [draw opacity=0][fill={rgb, 255:red, 245; green, 166; blue, 35 }  ,fill opacity=1 ]  (185.98,178.95) -- (207.98,178.95) -- (207.98,213.95) -- (185.98,213.95) -- cycle  ;
\draw (188.98,183.35) node [anchor=north west][inner sep=0.75pt]  [font=\Large]  {$\mathbf{\hat{y}}$};
\draw  [draw opacity=0][fill={rgb, 255:red, 126; green, 211; blue, 33 }  ,fill opacity=1 ]  (191.98,273.95) -- (209.98,273.95) -- (209.98,303.95) -- (191.98,303.95) -- cycle  ;
\draw (194.98,278.35) node [anchor=north west][inner sep=0.75pt]  [font=\large]  {$\hat{\mathbf{z}}$};
\draw    (55.22,43.95) -- (55.64,64.35) ;
\draw [shift={(55.7,67.35)}, rotate = 268.83] [fill={rgb, 255:red, 0; green, 0; blue, 0 }  ][line width=0.08]  [draw opacity=0] (8.93,-4.29) -- (0,0) -- (8.93,4.29) -- cycle    ;
\draw    (55.98,94.55) -- (55.98,116.95) ;
\draw [shift={(55.98,119.95)}, rotate = 270] [fill={rgb, 255:red, 0; green, 0; blue, 0 }  ][line width=0.08]  [draw opacity=0] (8.93,-4.29) -- (0,0) -- (8.93,4.29) -- cycle    ;
\draw    (55.78,143.95) -- (55.25,175.35) ;
\draw [shift={(55.2,178.35)}, rotate = 270.95] [fill={rgb, 255:red, 0; green, 0; blue, 0 }  ][line width=0.08]  [draw opacity=0] (8.93,-4.29) -- (0,0) -- (8.93,4.29) -- cycle    ;
\draw    (54.71,205.55) -- (54.28,226.95) ;
\draw [shift={(54.22,229.95)}, rotate = 271.15] [fill={rgb, 255:red, 0; green, 0; blue, 0 }  ][line width=0.08]  [draw opacity=0] (8.93,-4.29) -- (0,0) -- (8.93,4.29) -- cycle    ;
\draw    (104.98,43.95) -- (104.98,65.34) ;
\draw [shift={(104.98,68.34)}, rotate = 270] [fill={rgb, 255:red, 0; green, 0; blue, 0 }  ][line width=0.08]  [draw opacity=0] (8.93,-4.29) -- (0,0) -- (8.93,4.29) -- cycle    ;
\draw    (105.25,95.55) -- (105.68,116.95) ;
\draw [shift={(105.74,119.95)}, rotate = 268.85] [fill={rgb, 255:red, 0; green, 0; blue, 0 }  ][line width=0.08]  [draw opacity=0] (8.93,-4.29) -- (0,0) -- (8.93,4.29) -- cycle    ;
\draw    (105.78,143.95) -- (105.25,175.35) ;
\draw [shift={(105.2,178.35)}, rotate = 270.95] [fill={rgb, 255:red, 0; green, 0; blue, 0 }  ][line width=0.08]  [draw opacity=0] (8.93,-4.29) -- (0,0) -- (8.93,4.29) -- cycle    ;
\draw    (104.84,205.55) -- (104.63,226.95) ;
\draw [shift={(104.6,229.95)}, rotate = 270.57] [fill={rgb, 255:red, 0; green, 0; blue, 0 }  ][line width=0.08]  [draw opacity=0] (8.93,-4.29) -- (0,0) -- (8.93,4.29) -- cycle    ;
\draw    (150.98,43.95) -- (150.98,65.34) ;
\draw [shift={(150.98,68.34)}, rotate = 270] [fill={rgb, 255:red, 0; green, 0; blue, 0 }  ][line width=0.08]  [draw opacity=0] (8.93,-4.29) -- (0,0) -- (8.93,4.29) -- cycle    ;
\draw    (150.84,95.55) -- (150.63,116.95) ;
\draw [shift={(150.6,119.95)}, rotate = 270.57] [fill={rgb, 255:red, 0; green, 0; blue, 0 }  ][line width=0.08]  [draw opacity=0] (8.93,-4.29) -- (0,0) -- (8.93,4.29) -- cycle    ;
\draw    (150.58,143.95) -- (150.84,175.35) ;
\draw [shift={(150.86,178.35)}, rotate = 269.52] [fill={rgb, 255:red, 0; green, 0; blue, 0 }  ][line width=0.08]  [draw opacity=0] (8.93,-4.29) -- (0,0) -- (8.93,4.29) -- cycle    ;
\draw    (151.25,205.55) -- (151.68,226.95) ;
\draw [shift={(151.74,229.95)}, rotate = 268.85] [fill={rgb, 255:red, 0; green, 0; blue, 0 }  ][line width=0.08]  [draw opacity=0] (8.93,-4.29) -- (0,0) -- (8.93,4.29) -- cycle    ;
\draw    (151.68,253.95) -- (151.39,265.35) ;
\draw [shift={(151.32,268.35)}, rotate = 271.43] [fill={rgb, 255:red, 0; green, 0; blue, 0 }  ][line width=0.08]  [draw opacity=0] (8.93,-4.29) -- (0,0) -- (8.93,4.29) -- cycle    ;
\draw    (104.63,253.95) -- (104.77,265.35) ;
\draw [shift={(104.81,268.35)}, rotate = 269.28] [fill={rgb, 255:red, 0; green, 0; blue, 0 }  ][line width=0.08]  [draw opacity=0] (8.93,-4.29) -- (0,0) -- (8.93,4.29) -- cycle    ;
\draw    (54.28,253.95) -- (54.56,265.35) ;
\draw [shift={(54.64,268.35)}, rotate = 268.57] [fill={rgb, 255:red, 0; green, 0; blue, 0 }  ][line width=0.08]  [draw opacity=0] (8.93,-4.29) -- (0,0) -- (8.93,4.29) -- cycle    ;

\end{tikzpicture}
\caption{\label{fig:occurrence net with cut and marking}The Occurrence Net
associated with the event structure $E$. Cuts (and thus configurations)
in the event structure directly correspond to markings in the Occurrence
net. The marking composed of the blue dots (along the solid line)
corresponds to the cut $\protect\cutpn x$, etc.}
\end{figure}
\par\end{center}

Intervals of markings will also prove to be useful. If $\m{}\rightarrow^{*}\m{}'$
are two markings of an Occurrence Net, the interval $\int{\m{}}{\m{}'}$
represents the set of successive conditions used to go from $\m{}$
to $\m{}'$. The Occurrence Net structure enforces that the sequence
of successively reachable markings from $\m{}$ to $\m{}'$ is unique
(up to permutations).The interval $\int{\m{}}{\m{}'}$ is hence well
defined :

\begin{tcolorbox}[breakable,enhanced,title=Interval of markings ,colback=cyan!5!white,
colframe=cyan!75!black]

\begin{defn}
Let $\m{}$ and $\m{}'$ be two markings of a Petri Net such that
$\m{}=\m 0\overset{\boxed{e_{0}}}{\rightarrow}\m 1\overset{\boxed{e_{1}}}{\rightarrow}\ldots\overset{\boxed{e_{n-1}}}{\rightarrow}\m n=\m{}'$.
Define the interval of markings $\int{\m{}}{\m{}'}:=\left\{ \ci a\in\m i\mid i\in\left\llbracket 0,n\right\rrbracket \right\} $.
The set of transitions for a given interval is also unique for Occurrence
Nets : $\sigma\int{\m{}}{\m{}'}:=\left\{ \boxed{e_{0}},\ldots,\boxed{e_{n-1}}\right\} $
\end{defn}

\end{tcolorbox}

\begin{example}
In Figure \ref{fig:occurrence net with cut and marking}, $\int{\cutpn x}{\cutpn y}$
is the set of places of $\cutpn x\cup\cutpn y$. Similarily, $\int{\cutpn x}{\cutpn z}=\cutpn x\cup\cutpn y\cup\cutpn z$
.
\end{example}

We can now define Global Quantum Occurrence Nets, as an immediate
parallel with Quantum Event Structures (Defintion \ref{def:Quantum Event Structure}).
\begin{tcolorbox}[breakable,enhanced,title=Global Quantum Occurrence Net\autocite{clairambaultConcurrentQuantumStrategies2019},
colback=cyan!5!white, colframe=cyan!75!black]

\begin{defn}
\label{def:Global Quantum Occurrence Net}A Global Quantum Occurrence
Net is a tuple $O=\left(P,T,F,\text{\ensuremath{\cutpn{\emptyset}} },pol,Q,H\right)$
where $\left(P,T,F,\cutpn{\emptyset}\right)$ is an Occurrence Net
with initial marking $\cutpn{\emptyset}$, and $Q$ is a $CPTNI$
on both markings and intervals of markings such that:
\end{defn}

\begin{itemize}
\item For $\cutpn x\in\mathsf{Marking}\left(O\right)$, $Q\left(\cutpn x\right)$
is a finite dimensional Hilbert space.
\item For $\cutpn x,\cutpn y\in\mathsf{Marking}\left(O\right)$ such that
$\cutpn x\rightarrow^{*}\cutpn y$, writing $\sigma_{x,y}:=\sigma\int{\cutpn x}{\cutpn y}$
for brevity, we have
\[
Q\left(\int{\cutpn x}{\cutpn y}\right)\in CPTNI\left(Q\left(\cutpn x\right)\otimes H\left(\sigma_{x,y}^{\ominus}\right)\rightarrow H\left(\sigma_{x,y}^{\oplus}\right)\otimes Q\left(\cutpn y\right)\right)
\]
\end{itemize}
\begin{description}
\item [{Obliviousness:}] For $\cutpn x\rightarrow^{*,\ominus}\cutpn y$,
then $Q\left(\cutpn y\right)=Q\left(\cutpn x\right)\otimes\bigotimes_{\boxed{e}\in\sigma_{x,y}}H\left(\boxed{e}\right)$.
Additionally, $Q\left(\cutpn x\rightarrow^{*,\ominus}\cutpn y\right)=\Id{Q\left(\cutpn x\right)\otimes H\left(\sigma_{x,y}\right)}$.
\item [{Functoriality:}] For $\cutpn x\in\mathsf{Marking}\left(O\right)$,
$Q\left(\int{\cutpn x}{\cutpn x}\right)=\Id{Q\left(\cutpn x\right)}$.
Additionally, whenever $\cutpn x\rightarrow^{*}\cutpn y\rightarrow^{*}\cutpn z$,
we have 
\[
Q\left(\int{\cutpn x}{\cutpn z}\right)=\left(\Id{H\left(\sigma_{x,y}^{\oplus}\right)}\otimes Q\left(\int{\cutpn y}{\cutpn z}\right)\right)\circ\left(Q\left(\int{\cutpn x}{\cutpn y}\right)\otimes\Id{H\left(\sigma_{x,y}^{\ominus}\right)}\right)
\]
\item [{Drop~condition}] For $\cutpn x,\cutpn{y_{1}},\dots,\cutpn{y_{n}}\in\mathsf{Marking}\left(O\right)$,
such that : $\text{\ensuremath{\forall}}i\in\left\llbracket 1,n\right\rrbracket ,\left\{ \begin{array}{ll}
 & \cutpn x\rightarrow^{*}\cutpn{y_{i}}\\
\forall\boxed{e}\in\sigma\int{\cutpn x}{\cutpn{y_{i}}}, & pol\left(\boxed{e}\right)\in\left\{ 0,\oplus\right\} 
\end{array}\right.$, we have:
\[
\qdv x{y_{1},\ldots,y_{n}}:=\sum_{\substack{I\subseteq\{1,\dots,n\}\\
\text{s.t. }\cutpn{y_{I}}\in\mathsf{Marking}\left(O\right)
}
}(-1)^{|I|}\tr_{Q\left(\cutpn{y_{I}}\right)\otimes H\left(\sigma\int{\cutpn x}{\cutpn{y_{I}}}\right)}\circ Q\left(\int{\cutpn x}{\cutpn{y_{I}}}\right)\sqsupseteq0
\]
 where for $I\subseteq\left\llbracket 1,n\right\rrbracket $, ${\displaystyle \cutpn{y_{I}}:=\mathsf{Maximal_{\leq}}\left(\bigcup_{i\in I}\cutpn{y_{i}}\right)=\left\{ \ci a\in\bigcup_{i\in I}\cutpn{y_{i}}\mid\nexists\ci b\in\bigcup_{i\in I}\cutpn{y_{i}},\ci a<\ci b\right\} }$.
\end{description}
\end{tcolorbox}

\section{Local Quantum Occurrence Nets\label{sec:Local Quantum Occurrence Nets}}

We address here the Problem raised in \ref{prob:Why-a-global} with
global annotations. Localization of operations on events is made possible
possible by the Petri Net structure.

The idea is to define the completely positive annotation $Q$ of a
Global Occurrence Net \emph{from} an association of local CPTNIs $Q_{0}\left(\boxed{e}\right)$
defined on single events $\boxed{e}$. $Q_{0}$ needs to respect local
properties (namely Local Functoriality in \ref{subsec:Local Functoriality},
Local Obliviousness in \ref{subsec:Local Obliviousness} and Local
Drop Condition in \ref{subsec:Local Drop Condition}) for $Q$ to
define a valid Global Quantum Occurrence Net.

We first define Local Annotations of Nets, and explicit how the Global
annotation is induced from the latter (Section \ref{subsec:Local Annotation of an Occurrence Net}).
Then, we show in Sections \ref{subsec:Local Functoriality}, \ref{subsec:Local Obliviousness}
and \ref{subsec:Local Drop Condition} that if $Q_{0}$ respects a
local condition, then the global annotation $Q$ defined from it respects
the associated global condition. This defines a Local Occurrence Net.

\subsection{Local Annotation of an Occurrence Net\label{subsec:Local Annotation of an Occurrence Net}}

\begin{tcolorbox}[breakable,enhanced,title=Annotation of a Net Skeleton,colback=lime!5!white,
colframe=lime!75!black,]

\begin{defn}
\label{def:Annotation-of-an} Let $\mathcal{N}=P,T,F$ be an Net skeleton
(can be an occurrence net or a Petri Net). An annotation of $\mathcal{N}$
endows it with two operators $Q_{0}$ and $H$ such that:
\begin{itemize}
\item $H$ is a map $H:T\rightarrow\mathbf{Hilb}$
\item $Q_{0}$ is a map on conditions and event with the following signatures.
\begin{description}
\item [{Conditions}] $Q_{0}:\ci a\mapsto f\in\mathbf{Hilb}$
\item [{Events}] $Q_{0}:\boxed{t}\rightarrow f\in\mathbf{CPTNI}\left[Q_{0}\left(\pre{\boxed{t}}\right)\otimes H\left(\left\{ \boxed{t}\right\} ^{\ominus}\right),Q_{0}\left(\post{\boxed{t}}\right)\otimes H\left(\left\{ \boxed{t}\right\} ^{\oplus}\right)\right]$
\end{description}
\end{itemize}
\end{defn}

\end{tcolorbox}

\subsubsection{Global annotation $Q$ induced from the local annotation $Q_{0}$}

In order to define the corresponding global annotation obtained from
$Q_{0}$, we need to look at sub-nets that are obtained by only keeping
the events and the conditions that were used when going from one marking
$\m{}$ to another $\m{}'$: they parallel the role of intervals of
configurations in Event Structures. We call them Restrictions of Nets:
\begin{tcolorbox}[breakable,enhanced,title=Restriction of an Occurrence Net $E$ to
an interval of markings,colback=lime!5!white, colframe=lime!75!black,]

\begin{defn}
\label{def:Restriction of an Occurrence Net E to an interval of markings}
It is the sub-net of $E$ : $\rest{\int{\m{}}{\m{}'}}:=\left(P,T,F\right)$
such that $P:=\bigcup_{\m i\in\int{\m{}}{\m{}'}}\m i$ and $T:=\sigma\int{\m{}}{\m{}'}$,
with the flow relation from $F$ being restricted to $P$ and $T$.
It is required that $\m{}\rightarrow^{*}\m{}'$ for the sub-net to
be well-defined.
\end{defn}

\end{tcolorbox}

To each Restriction of Occurrence net $E$ to an interval of markings
$\int{\cutpn x}{\cutpn y}$, we make correspond an operator $\operator{\rest{\int{\cutpn x}{\cutpn y}}}$
defined by the String Diagram obtained from the local annotation of
that net. More precisely, it is the successive tensoring of all $Q_{0}\left(\boxed{e}\right)$
for all $\boxed{e}$ on the same layer of the Layer Graph, followed
by a composition with the next layer (adding the necessary $\Id{}$
maps for the input and output $H$ spaces). It has the signature in
Equation \ref{eq:Signature Operator} and defines the following global
annotation:

\begin{tcolorbox}[breakable,enhanced,title=Global Annotation induced from a Local Annotation,colback=lime!5!white,
colframe=lime!75!black,]

\begin{defn}
\label{def:Global Annotation defined from a Local Annotation}If $E,Q_{0},H$
is a locally annotated Occurrence net, the associated globally annotated
net is $E,Q_{E},H$, with canonically, for all markings $\cutpn x$
and $\cutpn y$ such that $\cutpn x\rightarrow^{*}\cutpn y$: 
\begin{equation}
Q_{E}\left(\int{\cutpn x}{\cutpn y}\right):=\operator{\rest{\int{\cutpn x}{\cutpn y}}}\label{eq:Definition Global annotation from Local Annotation}
\end{equation}
\end{defn}

\end{tcolorbox}
. 
\begin{align}
\operator{\rest{\int{\cutpn x}{\cutpn y}}} & :\mathbf{CPTNI}\left(\bigotimes_{\ci a\in Q\left(\cutpn x\right)}Q_{0}\left(\ci a\right)\otimes\bigotimes_{\left\{ \boxed{t}\right\} ^{\ominus}\in\rest{\int{\cutpn x}{\cutpn y}}}H\left(\boxed{t}^{\ominus}\right)\right.\nonumber \\
 & \longrightarrow\left.\bigotimes_{\ci a\in Q\left(\cutpn y\right)}Q_{0}\left(\ci a\right)\otimes\bigotimes_{\left\{ \boxed{t}\right\} ^{\oplus}\in\rest{\int{\cutpn x}{\cutpn y}}}H\left(\boxed{t}^{\oplus}\right)\right)\label{eq:Signature Operator}
\end{align}

We give thereafter a more formal definition of the $\operator .$
operator, but a mental picture of the process is enough to understand
the rest of the work. Figure \ref{fig:Diagrammatic construction of}
illustrates the construction of $\operator .$. 
\begin{tcolorbox}[breakable,enhanced,title=Operator associated to the restriction of
an Occurrence Net,colback=lime!5!white, colframe=lime!75!black,]

\begin{defn}
\label{def:Operator associated to the restriction of an Occurrence Net}
Let $\rest{\int{\cutpn x}{\cutpn y}}=\left(P,T,F\right)$ be an Occurrence
Net restriction (see Definition \ref{def:Restriction of an Occurrence Net E to an interval of markings}),
and $Q_{0},H$ a corresponding local annotation. Let also $G_{E}$
be the graph obtained by the procedure of Algorithm \ref{alg:From-annotated-Occurrence},
and $L=\left\{ L_{0},\ldots,L_{l|}\right\} $ be its layer-graph (i.e.
where $L_{d}$ is the set of nodes of $G_{E}$ at distance $d$ from
the initial nodes of $\cutpn x$) .

Then $\operator{\rest{\int{\cutpn x}{\cutpn y}}}$ is the $CPTNI$
corresponding to the String Diagram obtained from $G_{E}$, by tensoring
the labels of the vertices on the same layers, and then composing
the successive layers {[}see Figure \ref{fig:Diagrammatic construction of}{]}.
Formally, 
\begin{align*}
\operator{\rest{\int{\cutpn x}{\cutpn y}}} & =\mathop{\bigcirc}_{L_{i}\in L}\left[\bigotimes_{\begin{array}{c}
v\in L_{i}\\
f\text{ label of }v
\end{array}}f\right]\\
 & =\mathop{\bigcirc}_{L_{i}\in L}\left[\begin{cases}
\text{if }i\text{ even} & \bigotimes_{\ci a\in L_{i}}\Id{Q_{0}\left(\ci a\right)}\\
\text{if }i\text{ odd} & \bigotimes_{\boxed{e}\in L_{i}}Q_{0}\left(\boxed{e}\right)
\end{cases}\otimes\right.\\
 & \hfill\bigotimes_{\begin{array}{c}
\boxed{e}^{\ominus}\in L_{j}\\
j>i
\end{array}}\Id{H\left(\boxed{e}^{\ominus}\right)}\otimes\bigotimes_{\begin{array}{c}
\boxed{e}^{\oplus}\in L_{j}\\
j<i
\end{array}}\Id{H\left(\boxed{e}^{\oplus}\right)}\Biggr]
\end{align*}
\end{defn}

\end{tcolorbox}

Remark that, except for layer $L_{0}$ and $L_{l}$ that explicit
the signature of the operator, the even layers can be omitted without
loss of generality since they essentially consist in a identity that
glue two odd layers together. We hence note that if the even layer
$L_{i}$ consist in an identity $\Id A$ over a space $A$, then the
co-domain of layer $L_{i-1}$ is $A$ , as is the domain of layer
$L_{i+1}$.

\begin{algorithm}[h]
\centering
\begin{lyxcode}
\textbf{Input~:~}Annotated~restriction~of~Occ.~Net~$\rest{\int{\cutpn x}{\cutpn y}}=\left(P,T,F\right)$,$Q_{0},H$

\textbf{Output:}~$G_{\rest{\int{\cutpn x}{\cutpn y}}}$~the~string~diagram~corresponding~to~the~net

-{}-{}-

$G_{\rest{\int{\cutpn x}{\cutpn y}}}:=\left(V,E\right)=\left(P\sqcup T,F\right)$~where~each~vertex
\begin{lyxcode}
$\boxed{e}\in T$~is~labeled~$Q_{0}\left(\boxed{e}\right)$,~and

$\ci a\in P$~is~labeled~$\Id{Q_{0}\left(\ci a\right)}$
\end{lyxcode}
$L:=\left\{ L_{0},\ldots L_{l}\right\} $~the~layer~graph~of~$G_{E}$

$In:=T^{\ominus};Out:=T^{\oplus}$~the~set~of~remaining~I/O~to~be~added~

\begin{tabularx}{1\columnwidth}{>{\raggedright\arraybackslash}X||>{\raggedright\arraybackslash}X}
\begin{lyxcode}
//~Adding~the~Input~$H$~Spaces
\end{lyxcode}
 & \begin{lyxcode}
//~Adding~the~Output~$H$~Spaces:
\end{lyxcode}
\tabularnewline
\begin{lyxcode}
\textbf{For~each}~layer~$L_{i}$~from~$1$~to~$l$~:

\textbf{~For~each}~$\boxed{e}^{\ominus}\in In$:

\textbf{~~~~If}~$\boxed{e}^{\ominus}\notin L_{i}$:

~~~~~~add~vertex~$b$~to~$L_{i}$

~~~~~~~w/label~$\Id{H\left(\boxed{e}^{\ominus}\right)}$

~~~~~~add~edge~$a\rightarrow b$~~if

~~~~~~~$\exists a\in L_{i-1}$~

~~~~~~~labeled~$\Id{H\left(\boxed{e}^{\ominus}\right)}$

\textbf{~~~~Else:}

~~~~~~add~edge~$a\rightarrow Q_{0}\left(\boxed{e}^{\ominus}\right)$~if~

~~~~~~~$\exists a\in L_{i-1}$~

~~~~~~~labeled~$\Id{H\left(\boxed{e}^{\ominus}\right)}$

~~~~~~remove~$\boxed{e}^{\ominus}$~from~$In$
\end{lyxcode}
 & \begin{lyxcode}
\textbf{For~each}~layer~$L_{i}$~from~$l$~downto~1:

\textbf{~For~each}~$\boxed{e}^{\oplus}\in Out$:

\textbf{~~If}~$\boxed{e}^{\oplus}\notin L_{i}$:

~~~~add~vertex~$a$~to~$L_{i}$

~~~~~w/label~$\Id{H\left(\boxed{e}^{\oplus}\right)}$

~~~~add~edge~$a\rightarrow b$~~if~

~~~~~$\exists b\in L_{i+1}$~

~~~~~labeled~$\Id{H\left(\boxed{e}^{\oplus}\right)}$

\textbf{~~Else}:

~~~~add~edge~$Q_{0}\left(\boxed{e}^{\oplus}\right)\rightarrow b$~~if~

~~~~~$\exists b\in L_{i+1}$~

~~~~~labeled~$\Id{H\left(\boxed{e}^{\oplus}\right)}$

~~~~remove~$\boxed{e}^{\oplus}$~from~$In$
\end{lyxcode}
\tabularnewline
\end{tabularx}
\end{lyxcode}
\caption{From annotated Occurrence Net to String Diagram\label{alg:From-annotated-Occurrence}}
\end{algorithm}
\begin{figure}[H]
\centering

\tikzset{every picture/.style={line width=0.75pt}} 

\begin{tikzpicture}[x=0.75pt,y=0.75pt,yscale=-1,xscale=1]

\draw  [draw opacity=0][fill={rgb, 255:red, 126; green, 211; blue, 33 }  ,fill opacity=1 ][blur shadow={shadow xshift=3pt,shadow yshift=0pt, shadow blur radius=3.75pt, shadow blur steps=5 ,shadow opacity=100}] (81,132) .. controls (81,120.95) and (89.95,112) .. (101,112) -- (428,112) .. controls (439.05,112) and (448,120.95) .. (448,132) -- (448,192) .. controls (448,203.05) and (439.05,212) .. (428,212) -- (101,212) .. controls (89.95,212) and (81,203.05) .. (81,192) -- cycle ;
\draw  [draw opacity=0][fill={rgb, 255:red, 126; green, 211; blue, 33 }  ,fill opacity=1 ][blur shadow={shadow xshift=3pt,shadow yshift=0pt, shadow blur radius=3.75pt, shadow blur steps=5 ,shadow opacity=100}] (0,275.2) .. controls (0,264.6) and (8.6,256) .. (19.2,256) -- (508.8,256) .. controls (519.4,256) and (528,264.6) .. (528,275.2) -- (528,332.8) .. controls (528,343.4) and (519.4,352) .. (508.8,352) -- (19.2,352) .. controls (8.6,352) and (0,343.4) .. (0,332.8) -- cycle ;
\draw  [draw opacity=0][fill={rgb, 255:red, 126; green, 211; blue, 33 }  ,fill opacity=1 ][blur shadow={shadow xshift=3pt,shadow yshift=0pt, shadow blur radius=3.75pt, shadow blur steps=5 ,shadow opacity=100}] (79,460.2) .. controls (79,449.6) and (87.6,441) .. (98.2,441) -- (428.8,441) .. controls (439.4,441) and (448,449.6) .. (448,460.2) -- (448,517.8) .. controls (448,528.4) and (439.4,537) .. (428.8,537) -- (98.2,537) .. controls (87.6,537) and (79,528.4) .. (79,517.8) -- cycle ;
\draw [line width=3]    (87,72) .. controls (116.77,97.47) and (237,102) .. (277,72) ;
\draw [line width=3]    (66,576) .. controls (119.7,544) and (218.7,549) .. (256,576) ;

\draw  [draw opacity=0][fill={rgb, 255:red, 255; green, 255; blue, 255 }  ,fill opacity=1 ][blur shadow={shadow xshift=0pt,shadow yshift=-3pt, shadow blur radius=3pt, shadow blur steps=4 ,shadow opacity=100}]  (91,153) .. controls (91,143.06) and (99.06,135) .. (109,135) -- (241,135) .. controls (250.94,135) and (259,143.06) .. (259,153) -- (259,166) .. controls (259,175.94) and (250.94,184) .. (241,184) -- (109,184) .. controls (99.06,184) and (91,175.94) .. (91,166) -- cycle  ;
\draw (94,139) node [anchor=north west][inner sep=0.75pt]   [align=left] {$\displaystyle \bigotimes $ of the $\displaystyle Q_{0}$'s of the\\events at distance 1 of $\displaystyle \hat{\boldsymbol{x}}$};
\draw (265,143.4) node [anchor=north west][inner sep=0.75pt]    {$\bigotimes $};
\draw  [draw opacity=0][fill={rgb, 255:red, 255; green, 255; blue, 255 }  ,fill opacity=1 ][blur shadow={shadow xshift=0pt,shadow yshift=-3pt, shadow blur radius=3pt, shadow blur steps=4 ,shadow opacity=100}]  (10,291) .. controls (10,281.06) and (18.06,273) .. (28,273) -- (160,273) .. controls (169.94,273) and (178,281.06) .. (178,291) -- (178,304) .. controls (178,313.94) and (169.94,322) .. (160,322) -- (28,322) .. controls (18.06,322) and (10,313.94) .. (10,304) -- cycle  ;
\draw (13,277) node [anchor=north west][inner sep=0.75pt]   [align=left] {$\displaystyle \bigotimes $ of the $\displaystyle Q_{0}$'s of the\\events at distance 2 of $\displaystyle \hat{\boldsymbol{x}}$};
\draw (177,292.4) node [anchor=north west][inner sep=0.75pt]    {$\bigotimes $};
\draw  [draw opacity=0][fill={rgb, 255:red, 255; green, 255; blue, 255 }  ,fill opacity=1 ][blur shadow={shadow xshift=0pt,shadow yshift=-3pt, shadow blur radius=3pt, shadow blur steps=4 ,shadow opacity=100}]  (376,280) .. controls (376,270.06) and (384.06,262) .. (394,262) -- (507,262) .. controls (516.94,262) and (525,270.06) .. (525,280) -- (525,323) .. controls (525,332.94) and (516.94,341) .. (507,341) -- (394,341) .. controls (384.06,341) and (376,332.94) .. (376,323) -- cycle  ;
\draw (379,266) node [anchor=north west][inner sep=0.75pt]   [align=left] {$\displaystyle \mathsf{Id}\left(\begin{matrix}
\text{output spaces of}\\
\text{positive events}\\
\text{above}
\end{matrix}\right)$};
\draw (350,290.4) node [anchor=north west][inner sep=0.75pt]    {$\bigotimes $};
\draw  [draw opacity=0][fill={rgb, 255:red, 255; green, 255; blue, 255 }  ,fill opacity=1 ][blur shadow={shadow xshift=0pt,shadow yshift=-3pt, shadow blur radius=3pt, shadow blur steps=4 ,shadow opacity=100}]  (87,482) .. controls (87,472.06) and (95.06,464) .. (105,464) -- (234,464) .. controls (243.94,464) and (252,472.06) .. (252,482) -- (252,496) .. controls (252,505.94) and (243.94,514) .. (234,514) -- (105,514) .. controls (95.06,514) and (87,505.94) .. (87,496) -- cycle  ;
\draw (90,468) node [anchor=north west][inner sep=0.75pt]   [align=left] {$\displaystyle \bigotimes $ of the $\displaystyle Q_{0}$'s of the\\events at distance $\displaystyle l\ $of $\displaystyle \hat{\boldsymbol{x}}$};
\draw (253,478.4) node [anchor=north west][inner sep=0.75pt]    {$\bigotimes $};
\draw (255,224.4) node [anchor=north west][inner sep=0.75pt]    {$\bigcirc $};
\draw  [draw opacity=0][fill={rgb, 255:red, 248; green, 231; blue, 28 }  ,fill opacity=0.27 ]  (296,215) -- (461,215) -- (461,249) -- (296,249) -- cycle  ;
\draw (299,219) node [anchor=north west][inner sep=0.75pt]  [font=\Large] [align=left] {(composed with)};
\draw (254,358.4) node [anchor=north west][inner sep=0.75pt]    {$\bigcirc $};
\draw  [draw opacity=0][fill={rgb, 255:red, 80; green, 227; blue, 194 }  ,fill opacity=1 ]  (542,126) -- (585,126) -- (585,176) -- (542,176) -- cycle  ;
\draw (545,130.4) node [anchor=north west][inner sep=0.75pt]  [font=\huge]  {$L_{1}$};
\draw  [draw opacity=0][fill={rgb, 255:red, 80; green, 227; blue, 194 }  ,fill opacity=1 ]  (542,276) -- (585,276) -- (585,326) -- (542,326) -- cycle  ;
\draw (545,280.4) node [anchor=north west][inner sep=0.75pt]  [font=\huge]  {$L_{2}$};
\draw  [draw opacity=0][fill={rgb, 255:red, 80; green, 227; blue, 194 }  ,fill opacity=1 ]  (542,455) -- (579,455) -- (579,505) -- (542,505) -- cycle  ;
\draw (545,459.4) node [anchor=north west][inner sep=0.75pt]  [font=\huge]  {$L_{l}$};
\draw  [fill={rgb, 255:red, 245; green, 166; blue, 35 }  ,fill opacity=0.41 ]  (102, 53) circle [x radius= 16.28, y radius= 16.28]   ;
\draw (93,44.4) node [anchor=north west][inner sep=0.75pt]    {$p_{1}$};
\draw  [fill={rgb, 255:red, 245; green, 166; blue, 35 }  ,fill opacity=0.41 ]  (163.5, 69) circle [x radius= 15.95, y radius= 15.95]   ;
\draw (155,60.4) node [anchor=north west][inner sep=0.75pt]    {$p_{2}$};
\draw  [fill={rgb, 255:red, 245; green, 166; blue, 35 }  ,fill opacity=0.41 ]  (251, 59) circle [x radius= 16.97, y radius= 16.97]   ;
\draw (241,50.4) node [anchor=north west][inner sep=0.75pt]    {$p_{...}$};
\draw (182,60.4) node [anchor=north west][inner sep=0.75pt]  [font=\Large]  {$\dotsc $};
\draw  [draw opacity=0][fill={rgb, 255:red, 80; green, 227; blue, 194 }  ,fill opacity=1 ]  (280,45) -- (338,45) -- (338,82) -- (280,82) -- cycle  ;
\draw (283,49.4) node [anchor=north west][inner sep=0.75pt]  [font=\Large]  {$\text{cut }\hat{\boldsymbol{x}}$};
\draw (254,415.4) node [anchor=north west][inner sep=0.75pt]    {$\bigcirc $};
\draw (252,376.4) node [anchor=north west][inner sep=0.75pt]  [font=\LARGE]  {$\vdots $};
\draw  [fill={rgb, 255:red, 245; green, 166; blue, 35 }  ,fill opacity=0.41 ]  (84, 597) circle [x radius= 16.97, y radius= 16.97]   ;
\draw (74,588.4) node [anchor=north west][inner sep=0.75pt]    {$p_{...}$};
\draw  [fill={rgb, 255:red, 245; green, 166; blue, 35 }  ,fill opacity=0.41 ]  (217, 587) circle [x radius= 15.62, y radius= 15.62]   ;
\draw (209,578.4) node [anchor=north west][inner sep=0.75pt]    {$p_{r}$};
\draw (129,562.4) node [anchor=north west][inner sep=0.75pt]  [font=\Large]  {$\dotsc $};
\draw  [draw opacity=0][fill={rgb, 255:red, 80; green, 227; blue, 194 }  ,fill opacity=1 ]  (281,557) -- (338,557) -- (338,594) -- (281,594) -- cycle  ;
\draw (284,561.4) node [anchor=north west][inner sep=0.75pt]  [font=\Large]  {$\text{cut }\hat{y}$};
\draw (548,374.4) node [anchor=north west][inner sep=0.75pt]  [font=\LARGE]  {$\vdots $};
\draw  [draw opacity=0][fill={rgb, 255:red, 255; green, 255; blue, 255 }  ,fill opacity=1 ][blur shadow={shadow xshift=0pt,shadow yshift=-3pt, shadow blur radius=3pt, shadow blur steps=4 ,shadow opacity=100}]  (206,277) .. controls (206,267.06) and (214.06,259) .. (224,259) -- (330,259) .. controls (339.94,259) and (348,267.06) .. (348,277) -- (348,320) .. controls (348,329.94) and (339.94,338) .. (330,338) -- (224,338) .. controls (214.06,338) and (206,329.94) .. (206,320) -- cycle  ;
\draw (209,263) node [anchor=north west][inner sep=0.75pt]   [align=left] {$\displaystyle \mathsf{Id}\left(\begin{matrix}
\text{input spaces of}\\
\text{negative events}\\
\text{below}
\end{matrix}\right)$};
\draw  [draw opacity=0][fill={rgb, 255:red, 255; green, 255; blue, 255 }  ,fill opacity=1 ][blur shadow={shadow xshift=0pt,shadow yshift=-3pt, shadow blur radius=3pt, shadow blur steps=4 ,shadow opacity=100}]  (302,140) .. controls (302,130.06) and (310.06,122) .. (320,122) -- (426,122) .. controls (435.94,122) and (444,130.06) .. (444,140) -- (444,183) .. controls (444,192.94) and (435.94,201) .. (426,201) -- (320,201) .. controls (310.06,201) and (302,192.94) .. (302,183) -- cycle  ;
\draw (305,126) node [anchor=north west][inner sep=0.75pt]   [align=left] {$\displaystyle \mathsf{Id}\left(\begin{matrix}
\text{input spaces of}\\
\text{negative events}\\
\text{below}
\end{matrix}\right)$};
\draw  [draw opacity=0][fill={rgb, 255:red, 255; green, 255; blue, 255 }  ,fill opacity=1 ][blur shadow={shadow xshift=0pt,shadow yshift=-3pt, shadow blur radius=3pt, shadow blur steps=4 ,shadow opacity=100}]  (286,464) .. controls (286,454.06) and (294.06,446) .. (304,446) -- (417,446) .. controls (426.94,446) and (435,454.06) .. (435,464) -- (435,507) .. controls (435,516.94) and (426.94,525) .. (417,525) -- (304,525) .. controls (294.06,525) and (286,516.94) .. (286,507) -- cycle  ;
\draw (289,450) node [anchor=north west][inner sep=0.75pt]   [align=left] {$\displaystyle \mathsf{Id}\left(\begin{matrix}
\text{output spaces of}\\
\text{positive events}\\
\text{above}
\end{matrix}\right)$};
\draw (35,588.4) node [anchor=north west][inner sep=0.75pt]    {$Q_{0}$};
\draw (58,577.4) node [anchor=north west][inner sep=0.75pt]  [font=\LARGE]  {$($};
\draw (99,578.4) node [anchor=north west][inner sep=0.75pt]  [font=\LARGE]  {$)$};
\draw (168,580.4) node [anchor=north west][inner sep=0.75pt]    {$Q_{0}$};
\draw (191,569.4) node [anchor=north west][inner sep=0.75pt]  [font=\LARGE]  {$($};
\draw (230,569.4) node [anchor=north west][inner sep=0.75pt]  [font=\LARGE]  {$)$};
\draw (115,34.4) node [anchor=north west][inner sep=0.75pt]  [font=\LARGE]  {$)$};
\draw (54,45.4) node [anchor=north west][inner sep=0.75pt]    {$Q_{0}$};
\draw (77,34.4) node [anchor=north west][inner sep=0.75pt]  [font=\LARGE]  {$($};
\draw (118,61.4) node [anchor=north west][inner sep=0.75pt]    {$Q_{0}$};
\draw (139,50.4) node [anchor=north west][inner sep=0.75pt]  [font=\LARGE]  {$($};
\draw (177,50.4) node [anchor=north west][inner sep=0.75pt]  [font=\LARGE]  {$)$};
\draw (206,50.4) node [anchor=north west][inner sep=0.75pt]    {$Q_{0}$};
\draw (227,39.4) node [anchor=north west][inner sep=0.75pt]  [font=\LARGE]  {$($};
\draw (265,39.4) node [anchor=north west][inner sep=0.75pt]  [font=\LARGE]  {$)$};
\draw    (110.56,66.12) -- (156.39,132.53) ;
\draw [shift={(158.09,135)}, rotate = 235.39] [fill={rgb, 255:red, 0; green, 0; blue, 0 }  ][line width=0.08]  [draw opacity=0] (8.93,-4.29) -- (0,0) -- (8.93,4.29) -- cycle    ;
\draw    (165.51,84.82) -- (171.51,132.02) ;
\draw [shift={(171.89,135)}, rotate = 262.76] [fill={rgb, 255:red, 0; green, 0; blue, 0 }  ][line width=0.08]  [draw opacity=0] (8.93,-4.29) -- (0,0) -- (8.93,4.29) -- cycle    ;
\draw    (240.76,72.54) -- (195.34,132.61) ;
\draw [shift={(193.53,135)}, rotate = 307.1] [fill={rgb, 255:red, 0; green, 0; blue, 0 }  ][line width=0.08]  [draw opacity=0] (8.93,-4.29) -- (0,0) -- (8.93,4.29) -- cycle    ;
\draw    (96.36,581.31) -- (150.19,513) ;
\draw [shift={(94.5,583.67)}, rotate = 308.24] [fill={rgb, 255:red, 0; green, 0; blue, 0 }  ][line width=0.08]  [draw opacity=0] (8.93,-4.29) -- (0,0) -- (8.93,4.29) -- cycle    ;
\draw    (208.91,570.22) -- (181.31,513) ;
\draw [shift={(210.21,572.93)}, rotate = 244.26] [fill={rgb, 255:red, 0; green, 0; blue, 0 }  ][line width=0.08]  [draw opacity=0] (8.93,-4.29) -- (0,0) -- (8.93,4.29) -- cycle    ;

\end{tikzpicture}
\caption{\label{fig:Diagrammatic construction of}Diagrammatic construction
of $\protect\operator{\protect\rest{\protect\int{\protect\cutpn x}{\protect\cutpn y}}}$
from $\protect\rest{\protect\int{\protect\cutpn x}{\protect\cutpn y}}$
{[}see Definition \ref{def:Operator associated to the restriction of an Occurrence Net}{]}}
\end{figure}

\subsection{Local Functoriality\label{subsec:Local Functoriality}}

We prove here that a Local annotation $Q_{0}$ always induces a Global
annotation that respects the Functoriality property (See Definition
\ref{def:Global Quantum Occurrence Net})
\begin{tcolorbox}[breakable,enhanced,title=Functoriality is naturally induced by a local
annotation,colback=teal!15!white, colframe=teal!60!black,]

\begin{thm}
\label{thm:Functoriality}If a global quantum Occurrence net annotation
$Q$ is defined from a local annotation $Q_{0}$, then $Q$ satisfies
the functoriality property.
\end{thm}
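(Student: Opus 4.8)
The plan is to verify the two clauses of the functoriality property of Definition~\ref{def:Global Quantum Occurrence Net} directly from the way $Q$ is induced by $Q_0$, namely $Q_E\left(\int{\cutpn x}{\cutpn y}\right)=\operator{\rest{\int{\cutpn x}{\cutpn y}}}$ (Definitions~\ref{def:Global Annotation defined from a Local Annotation} and~\ref{def:Operator associated to the restriction of an Occurrence Net}). The one genuine ingredient, which I expect to be the main obstacle, is a \emph{layering-independence} property of $\operator{-}$: the $CPTNI$ map it produces should depend only on the annotated net skeleton together with its input/output boundary, and not on the particular topological layering used to read it off. This is the usual coherence of string-diagram notation in a symmetric monoidal category --- two valid layerings of the same restriction are related by finitely many transpositions of concurrent boxes between adjacent layers, and each such transposition leaves the morphism unchanged by the interchange law $(f\otimes\Id{})\circ(\Id{}\otimes g)=(\Id{}\otimes g)\circ(f\otimes\Id{})$ in $\mathbf{CPTNI}$, modulo the symmetry isomorphisms already absorbed into the paper's ``implicit permutations''. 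Once this is available the rest is bookkeeping.

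The identity clause is immediate: $\int{\cutpn x}{\cutpn x}$ has empty transition set, so $\rest{\int{\cutpn x}{\cutpn x}}$ consists only of the conditions of $\cutpn x$, each labelled $\Id{Q_0(\ci a)}$, with no events, and its string diagram is the single layer $\bigotimes_{\ci a\in\cutpn x}\Id{Q_0(\ci a)}=\Id{Q(\cutpn x)}$, which is exactly $Q\left(\int{\cutpn x}{\cutpn x}\right)=\Id{Q(\cutpn x)}$.

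For the composition clause, fix $\cutpn x\rightarrow^{*}\cutpn y\rightarrow^{*}\cutpn z$ and write $\sigma_{x,y}:=\sigma\int{\cutpn x}{\cutpn y}$ and $\sigma_{y,z}:=\sigma\int{\cutpn y}{\cutpn z}$. The first, purely net-theoretic, step is that in an occurrence net the firing sequence from $\cutpn x$ to $\cutpn z$ factors through $\cutpn y$ uniquely up to permutation (the uniqueness remark recalled before Definition~\ref{def:Global Quantum Occurrence Net}); hence the transitions partition as $\sigma\int{\cutpn x}{\cutpn z}=\sigma_{x,y}\sqcup\sigma_{y,z}$, the conditions of $\rest{\int{\cutpn x}{\cutpn z}}$ are obtained by gluing those of $\rest{\int{\cutpn x}{\cutpn y}}$ and $\rest{\int{\cutpn y}{\cutpn z}}$ exactly along the conditions of $\cutpn y$, and consequently $H\left(\sigma\int{\cutpn x}{\cutpn z}^{\ominus}\right)=H(\sigma_{x,y}^{\ominus})\otimes H(\sigma_{y,z}^{\ominus})$ and dually for $\oplus$. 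The second step is that no event of $\sigma_{y,z}$ is causally below an event of $\sigma_{x,y}$ --- otherwise it would have been forced to fire before $\cutpn y$ was reached --- so the cut $\cutpn y$ is an antichain that separates $\rest{\int{\cutpn x}{\cutpn z}}$ into the block $\rest{\int{\cutpn x}{\cutpn y}}$ ``above'' and the block $\rest{\int{\cutpn y}{\cutpn z}}$ ``below'', every wire crossing the cut being either a condition of $\cutpn y$ or a dangling $H$-leg. By layering-independence, $\operator{\rest{\int{\cutpn x}{\cutpn z}}}$ may therefore be computed by composing the two blocks: the upper block is $\operator{\rest{\int{\cutpn x}{\cutpn y}}}$ with the $H$-input legs of the negative events of $\sigma_{y,z}$ threaded through it as identity wires, i.e.\ it is $Q\left(\int{\cutpn x}{\cutpn y}\right)\otimes\Id{H(\sigma_{y,z}^{\ominus})}$, while the lower block is $\operator{\rest{\int{\cutpn y}{\cutpn z}}}$ with the $H$-output legs of the positive events of $\sigma_{x,y}$ threaded through it, i.e.\ $\Id{H(\sigma_{x,y}^{\oplus})}\otimes Q\left(\int{\cutpn y}{\cutpn z}\right)$. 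Composing these two yields exactly the right-hand side of the functoriality equation, up to the customary implicit permutation of the $H$- and $Q$-tensor factors, which completes the proof.
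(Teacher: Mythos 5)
Your proof is correct and follows essentially the same route as the paper's: the identity clause is read off directly from the definition of $\operator{\rest{\int{\cutpn x}{\cutpn x}}}$, and the composition clause comes from observing that $\rest{\int{\cutpn x}{\cutpn z}}$ is the gluing of $\rest{\int{\cutpn x}{\cutpn y}}$ and $\rest{\int{\cutpn y}{\cutpn z}}$ along the conditions of $\cutpn y$, with matching domain and co-domain. The one thing you add beyond the paper's argument is the explicit appeal to layering-independence of $\operator{-}$ (via the interchange law), which the paper leaves implicit; that is a worthwhile precision rather than a different approach.
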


\begin{enumerate}
\item By def., $Q\left(\int{\cutpn x}{\cutpn x}\right)=\operator{\rest{\int{\cutpn x}{\cutpn x}}}=\Id{Q_{0}\left(\cutpn x\right)}=\Id{Q\left(x\right)}$.
\item If $\cutpn x\rightarrow^{*}\cutpn y\rightarrow^{*}\cutpn z$, we have
that $\rest{\int{\cutpn x}{\cutpn z}}$ is the Occurrence Net obtained
by ``gluing'' the respective Occurrence Nets of $\rest{\int{\cutpn y}{\cutpn z}}$
and $\rest{\int{\cutpn x}{\cutpn y}}$ by the conditions of $\cutpn y$.
Furthermore, remark that the co-domain of $\operator{\rest{\int{\cutpn y}{\cutpn z}}}$
corresponds to the domain of $\operator{\rest{\int{\cutpn x}{\cutpn y}}}$.
As such, we can write
\begin{alignat*}{1}
Q\left(\int{\cutpn x}{\cutpn z}\right) & =\operator{\rest{\int{\cutpn x}{\cutpn z}}}\\
 & =\left[\Id{H\left[y\setminus x\right]^{\oplus}}\otimes\operator{\rest{\int{\cutpn y}{\cutpn z}}}\right]\\
 & \circ\left[\operator{\rest{\int{\cutpn x}{\cutpn y}}}\otimes\Id{H\left[z\setminus y\right]^{\ominus}}\right]
\end{alignat*}
\end{enumerate}
\end{tcolorbox}

\subsection{Local Obliviousness\label{subsec:Local Obliviousness}}

We show here that if the local annotation $Q_{0}$ verifies a condition
called Local Obliviousness, then the global annotation $Q$ defined
from $Q_{0}$ respects the (global) obliviousness property ( See Definition
\ref{def:Global Quantum Occurrence Net})

\begin{tcolorbox}[breakable,enhanced,title=Local Obliviousness,colback=lime!15!white,
colframe=lime!50!black,]

\begin{defn}
\label{def:Local-Obliviousness}

A map of events $Q_{0}$ respects local obliviousness if for all negative
events $\boxed{e}^{\ominus}$
\end{defn}

\begin{itemize}
\item $Q_{0}\left(\boxed{e}^{\ominus}\right)=\Id{Q\left(\pre{\boxed{e}}\right)\otimes H\left(\boxed{e}\right)}$
\end{itemize}
\end{tcolorbox}

\begin{tcolorbox}[breakable,enhanced,title=Signature of a negative event,colback=lime!15!white,
colframe=lime!50!black,]

\begin{prop}
\label{prop:Signature-of-a-negative-event}

If $Q_{0}$ respects Local obliviousness, for all $\boxed{e}^{\ominus}$,
\[
Q_{0}\left(\post{\boxed{e}}\right)=Q_{0}\left(\pre{\boxed{e}}\right)\otimes H\left(\boxed{e}\right)
\]
\end{prop}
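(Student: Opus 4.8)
The plan is to obtain the identity $Q_0(\post{\boxed{e}}) = Q_0(\pre{\boxed{e}}) \otimes H(\boxed{e})$ by comparing two expressions for the \emph{codomain} of the completely positive map $Q_0(\boxed{e}^{\ominus})$ attached to a negative event. First I would recall, from Definition \ref{def:Annotation-of-an}, that for an arbitrary event $\boxed{t}$ the map $Q_0(\boxed{t})$ is a $\mathbf{CPTNI}$ whose codomain is $Q_0(\post{\boxed{t}}) \otimes H(\{\boxed{t}\}^{\oplus})$, where $Q_0$ applied to a set of conditions is read as $\bigotimes_{\ci a} Q_0(\ci a)$. Specialising to $\boxed{t} = \boxed{e}^{\ominus}$, the set $\{\boxed{e}\}^{\oplus}$ of non-negative occurrences is empty, so $H(\{\boxed{e}\}^{\oplus})$ is the empty tensor product, i.e. the monoidal unit $\mathbb{C}$ of the $\mathbf{CPTNI}$ category. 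Absorbing it through the right-unit coherence isomorphism (which the paper identifies with equality), the codomain of $Q_0(\boxed{e}^{\ominus})$ equals $Q_0(\post{\boxed{e}})$.

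Second, I would invoke Local Obliviousness (Definition \ref{def:Local-Obliviousness}): for the negative event $\boxed{e}^{\ominus}$ one has $Q_0(\boxed{e}^{\ominus}) = \Id{Q_0(\pre{\boxed{e}}) \otimes H(\boxed{e})}$, reading the symbol $Q$ there as the local map $Q_0$ on the set of pre-conditions, namely $\bigotimes_{\ci a \in \pre{\boxed{e}}} Q_0(\ci a)$. Since an identity morphism has the same object as its domain and as its codomain, the codomain of $Q_0(\boxed{e}^{\ominus})$ is by definition $Q_0(\pre{\boxed{e}}) \otimes H(\boxed{e})$. Equating the two descriptions of this codomain yields $Q_0(\post{\boxed{e}}) = Q_0(\pre{\boxed{e}}) \otimes H(\boxed{e})$, which is exactly the statement.

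The argument is thus purely a matter of signature bookkeeping, and the only delicate point -- the \emph{main obstacle}, minor though it is -- is the handling of the monoidal unit: one must be confident that $H(\{\boxed{e}\}^{\oplus}) = H(\emptyset)$ is the tensor unit $\mathbb{C}$ and that tensoring with it is transparent, so that matching the two signatures of $Q_0(\boxed{e}^{\ominus})$ forces a genuine equality (rather than a mere isomorphism) of the Hilbert spaces $Q_0(\post{\boxed{e}})$ and $Q_0(\pre{\boxed{e}}) \otimes H(\boxed{e})$. This is consistent with the convention, used throughout the paper, of working up to the coherence isomorphisms of the symmetric monoidal structure of $\mathbf{CPTNI}$.
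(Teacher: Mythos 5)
Your proof is correct and follows essentially the same route as the paper's: the paper's one-line argument likewise observes that $Q_{0}\left(\boxed{e}^{\ominus}\right)$ is an identity whose domain is $Q_{0}\left(\pre{\boxed{e}}\right)\otimes H\left(\boxed{e}\right)$ and whose codomain is $Q_{0}\left(\post{\boxed{e}}\right)$, forcing the two to coincide. Your additional care about $H\left(\left\{ \boxed{e}\right\} ^{\oplus}\right)$ being the monoidal unit is a welcome explicitation of a step the paper leaves implicit.
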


\begin{proof}
$Q_{0}\left(\boxed{e}^{\ominus}\right)$ is an identity with $Q_{0}\left(\pre{\boxed{e}}\right)\otimes H\left(\boxed{e}\right)$
as its domain and $Q_{0}\left(\post{\boxed{e}}\right)$ as its co-domain.
\end{proof}
\end{tcolorbox}

\begin{tcolorbox}[breakable,enhanced,title=Local Obliviousness implies Global Obliviousness,colback=lime!15!white,
colframe=lime!50!black,]

\begin{thm}
Local Obliviousness implies Global Obliviousness\label{lemma:Local-Obliviousness-implies-global}

Let $Q$ be a global Occurrence net annotation defined by the local
annotation $Q_{0}$ (Def. \ref{def:Global Annotation defined from a Local Annotation}),
that satisfies Local Obliviousness. Then if $\cutpn x\rightarrow^{*,\ominus}\cutpn y$,
\end{thm}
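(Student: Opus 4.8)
The plan is to prove the two equations constituting Global Obliviousness --- namely $Q\left(\cutpn y\right)=Q\left(\cutpn x\right)\otimes\bigotimes_{\boxed{e}\in\sigma_{x,y}}H\left(\boxed{e}\right)$ and $Q\left(\cutpn x\rightarrow^{*,\ominus}\cutpn y\right)=\Id{Q\left(\cutpn x\right)\otimes H\left(\sigma_{x,y}\right)}$, where $\sigma_{x,y}:=\sigma\int{\cutpn x}{\cutpn y}$ --- by reducing to the single-event case and iterating. The first observation is that the hypothesis $\cutpn x\rightarrow^{*,\ominus}\cutpn y$ forces every transition of $\sigma_{x,y}$ to be negative, so $\sigma_{x,y}^{\ominus}=\sigma_{x,y}$, $\sigma_{x,y}^{\oplus}=\emptyset$, and $H\left(\sigma_{x,y}^{\ominus}\right)=H\left(\sigma_{x,y}\right)=\bigotimes_{\boxed{e}\in\sigma_{x,y}}H\left(\boxed{e}\right)$.

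For the Hilbert-space equation I would induct on the length $n$ of a firing sequence $\cutpn x=\m 0\overset{\boxed{e_{0}}}{\rightarrow}\m 1\overset{\boxed{e_{1}}}{\rightarrow}\cdots\overset{\boxed{e_{n-1}}}{\rightarrow}\m n=\cutpn y$, which is well-defined up to permutation by the Occurrence Net structure. The case $n=0$ is $\cutpn x=\cutpn y$. For the inductive step, $\m{i+1}=\left(\m i\setminus\pre{\boxed{e_{i}}}\right)\sqcup\post{\boxed{e_{i}}}$ (disjoint because post-conditions are fresh in an occurrence net), and since $\boxed{e_{i}}$ is negative, Proposition~\ref{prop:Signature-of-a-negative-event} gives $Q_{0}\left(\post{\boxed{e_{i}}}\right)=Q_{0}\left(\pre{\boxed{e_{i}}}\right)\otimes H\left(\boxed{e_{i}}\right)$; tensoring the conditions of $\m{i+1}$ then yields $Q\left(\m{i+1}\right)=Q\left(\m i\right)\otimes H\left(\boxed{e_{i}}\right)$, using $Q\left(\m{}\right)=\bigotimes_{\ci a\in\m{}}Q_{0}\left(\ci a\right)$ from the signature in Equation~\ref{eq:Signature Operator}. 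Iterating over $i$ gives $Q\left(\cutpn y\right)=Q\left(\cutpn x\right)\otimes\bigotimes_{i=0}^{n-1}H\left(\boxed{e_{i}}\right)=Q\left(\cutpn x\right)\otimes H\left(\sigma_{x,y}\right)$.

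For the operator equation I would unfold $Q\left(\int{\cutpn x}{\cutpn y}\right)=\operator{\rest{\int{\cutpn x}{\cutpn y}}}$ (Definition~\ref{def:Global Annotation defined from a Local Annotation}) and invoke the layer-graph description of $\operator{\cdot}$ (Definition~\ref{def:Operator associated to the restriction of an Occurrence Net}). Because $\rest{\int{\cutpn x}{\cutpn y}}$ contains only negative events, every vertex label occurring in any layer is an identity: condition vertices carry $\Id{Q_{0}\left(\ci a\right)}$, event vertices carry $Q_{0}\left(\boxed{e}^{\ominus}\right)=\Id{Q_{0}\left(\pre{\boxed{e}}\right)\otimes H\left(\boxed{e}\right)}$ by Local Obliviousness, and the padding vertices carry $\Id{H\left(\boxed{e}^{\ominus}\right)}$ --- with no $\Id{H\left(\boxed{e}^{\oplus}\right)}$ terms since $\sigma_{x,y}^{\oplus}=\emptyset$. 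A tensor of identities is an identity, and a composition of such layer-identities (whose domains and codomains match along the string diagram) is again an identity; reading its domain off Equation~\ref{eq:Signature Operator} gives $\bigotimes_{\ci a\in\cutpn x}Q_{0}\left(\ci a\right)\otimes\bigotimes_{\boxed{t}^{\ominus}}H\left(\boxed{t}^{\ominus}\right)=Q\left(\cutpn x\right)\otimes H\left(\sigma_{x,y}\right)$, so $Q\left(\int{\cutpn x}{\cutpn y}\right)=\Id{Q\left(\cutpn x\right)\otimes H\left(\sigma_{x,y}\right)}$ (and equality of its domain and codomain re-derives the first equation).

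The step I expect to be the main obstacle is making this last argument fully rigorous: one must check that consecutive layers in the string-diagram composition have matching domains and codomains --- using the remark after Definition~\ref{def:Operator associated to the restriction of an Occurrence Net} that an even layer $\Id A$ forces the codomain of the preceding odd layer to be $A$ --- and that the reassociations and permutations implicit in the $\operator{\cdot}$ construction are consistent, i.e.\ that the ``up to implicit permutation'' slack genuinely does no harm in the purely-negative case. Everything else reduces to the routine induction above once Proposition~\ref{prop:Signature-of-a-negative-event} is available.
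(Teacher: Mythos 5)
Your proposal is correct and follows essentially the same route as the paper: the paper likewise expands $\operator{\rest{\int{\cutpn x}{\cutpn y}}}$ via the layer-graph, observes that Local Obliviousness turns every event label into an identity (the condition and padding labels already being identities), and concludes by recursion over the layers that the composite is $\Id{Q\left(\cutpn x\right)\otimes H\left[y\setminus x\right]^{\ominus}}$. Your explicit firing-sequence induction for the Hilbert-space equation via Proposition \ref{prop:Signature-of-a-negative-event}, and your attention to the domain/codomain matching between consecutive layers, merely spell out steps the paper leaves implicit.
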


\begin{enumerate}
\item 
\[
Q\left(\int{\cutpn x}{\cutpn y}\right)=\Id{Q\left(\cutpn x\right)\otimes H\left[y\setminus x\right]^{\ominus}}
\]
\item 
\[
Q\left(\cutpn y\right)=Q\left(\cutpn x\right)\otimes\bigotimes_{e\in y\setminus x}H\left(\boxed{e}\right)
\]
\end{enumerate}

\tcbline{}

Since every event in the annotated net $\rest{\left[\cutpn x,\cutpn y\right]}$
is negative, $\operator{\rest{\left[\cutpn x,\cutpn y\right]}}$ is
of the form 
\begin{align*}
\operator{\rest{\left[\cutpn x,\cutpn y\right]}} & =\mathop{\bigcirc}_{L_{i}\in L}\left[\bigotimes_{\begin{array}{c}
v\in L_{i}\\
f\text{ label of }v
\end{array}}f\right]\\
 & =\mathop{\bigcirc}_{L_{i}\in L}\left[\bigotimes_{\boxed{e}\in L_{i}}Q_{0}\left(\boxed{e}\right)\otimes\bigotimes_{\begin{array}{c}
\boxed{e}^{\ominus}\in L_{j}\\
j>i
\end{array}}\Id{H\left(\boxed{e}^{\ominus}\right)}\right]
\end{align*}

where we have omitted tensors of identities on the spaces of conditions
w.l.o.g.. Then prove by recursion on $n$ that $\mathop{\bigcirc}_{\begin{array}{c}
L_{i}\in L\\
i\leq n
\end{array}}\left[\bigotimes_{\boxed{e}\in L_{i}}Q_{0}\left(\boxed{e}\right)\otimes\bigotimes_{\begin{array}{c}
\boxed{e}^{\ominus}\in L_{j}\\
j>i
\end{array}}\Id{H\left(\boxed{e}^{\ominus}\right)}\right]=\Id{Q\left(x\right)\otimes H\left[y\setminus x\right]^{\ominus}}$ .
\end{tcolorbox}

\subsection{Local Drop Condition\label{subsec:Local Drop Condition}}

We show that if the local annotation $Q_{0}$ satisfies a condition
we call the Local Drop Condition, then the corresponding global annotation
$Q$, as constructed in Definition \ref{def:Global Annotation defined from a Local Annotation},
satisfies the (global) Drop Condition (see Definition \ref{def:Global Quantum Occurrence Net}).

Since the Drop Condition is defined over positive intervals of configurations
(or markings), this global condition induces a derived version for
the local occurrence net. Specifically, it corresponds to replacing
$Q\left(\int{\cutpn x}{\cutpn y}\right)$ for $\operator{\rest{\int{\cutpn x}{\cutpn y}}}$
in the global Drop Condition. However, this global condition is computationally
impractical: verifying it requires checking an exponential number
of intervals.

To address this, we introduce a sufficient local condition--the Local
Drop Condition--which is significantly more tractable. It only needs
to be checked:
\begin{itemize}
\item on simple extensions of configurations (i.e., extensions by immediately
causally dependent events),
\item and only within conflict clusters, defined as sets of events connected
through the conflict relation $\sim$.
\end{itemize}
This approach drastically reduces the number of intervals that must
be verified. In fact, when conflict clusters form cliques, the number
of checks can become linear. We demonstrate these simplifications
in Section \ref{subsec:Only checking single extensions are enough}
and Section \ref{subsec:Only checking conflict clusters is enough}.

\subsubsection{Only checking single extensions is enough\label{subsec:Only checking single extensions are enough}}

In this section, we show that only checking the Drop Conditions on
positive intervals of single extensions is enough to enforce it for
every marking/configuration interval. 

\begin{tcolorbox}[breakable,enhanced, colback=green!8!black!5!white, colframe=green!30!black,]

A single extension is is a configuration/marking interval only composed
of events that are directly enabled from the base configuration--in
other words, they are all comprised of singled extensions. Such an
interval is of the form $\int{\cutpn x}{\cutpn y}$, where $\cutpn y=\cutpn x\sqcup\boxed{e_{1}}\sqcup\ldots\sqcup\boxed{e_{n}}$,
and where $\forall i,\,x\ext\boxed{e_{i}}$.
\begin{example}
In Figure \ref{fig:occurrence net with cut and marking}, the interval
$\int{\cutpn x}{\cutpn y}$ is a single extension, but $\int{\cutpn x}{\cutpn z}$
is not.
\end{example}

\end{tcolorbox}

We first show technical Lemmas (\ref{lem:Alternative inductive definition of the Drop Function},
\ref{lem:Drop condition on a collapsed interval},\ref{lem:Drop condition as a recursive sum}
and \ref{lem:Expansion of the Drop Condition}), and leverage them
to obtain the stated result, in Theorem \ref{conj:Considering-single-extensions}. 

\paragraph{Technical Lemmas}

\begin{tcolorbox}[breakable,enhanced,title=Alternative inductive definition of the Drop
Function ,colback=white,colframe=black!30!white,borderline={0.5mm}{0mm}{green!30!black,dashed}]

\begin{lem}
\label{lem:Alternative inductive definition of the Drop Function}Let
$x\subseteq^{0,\oplus}y_{1},\ldots,y_{n}\in C\left(E\right)$. We
have the following inductive relation.

\[
\qdv x{}=\tr_{Q_{0}\left(\cutpn x\right)}
\]
\begin{align*}
\qdv x{y_{1},\ldots,y_{n}} & =\qdv x{y_{1},\ldots,y_{n-1}}\\
 & -\left[\tr_{H\left[y_{n}\setminus x\right]^{\oplus}}\otimes\qdv{y_{n}}{y_{1}\cup y_{n},\ldots,y_{n-1}\cup y_{n}}\right]\circ\operator{\rest{\int{\cutpn x}{\cutpn{y_{n}}}}}
\end{align*}
\end{lem}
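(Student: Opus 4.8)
The plan is to prove both equations directly from the explicit formula defining the drop function,
\[
\qdv{x}{y_{1},\ldots,y_{n}}=\sum_{\substack{I\subseteq\{1,\ldots,n\}\\\cutpn{y_{I}}\in\mathsf{Marking}\left(O\right)}}(-1)^{|I|}\,\tr_{Q\left(\cutpn{y_{I}}\right)\otimes H\left(\sigma\int{\cutpn x}{\cutpn{y_{I}}}\right)}\circ Q\left(\int{\cutpn x}{\cutpn{y_{I}}}\right)
\]
(the $I=\varnothing$ term, read with $\cutpn{y_{\varnothing}}:=\cutpn x$, being $\tr_{Q\left(\cutpn x\right)}$), by splitting this sum according to whether $n\in I$. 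Two observations will be used throughout. First, since $x\subseteq^{0,\oplus}y_{i}$ for each $i$, every event of every interval $\int{\cutpn x}{\cutpn{y_{I}}}$ is non-negative, so the negative input $H$-spaces are trivial and $H\left(\sigma\int{\cutpn x}{\cutpn{y_{I}}}\right)=H\left[y_{I}\setminus x\right]^{\oplus}$. Second, for the same reason $y_{n}\subseteq^{0,\oplus}y_{J}\cup y_{n}$ for every $J\subseteq\{1,\ldots,n-1\}$ with $y_{J}\cup y_{n}\in C(E)$, so the nested drop function $\qdv{y_{n}}{y_{1}\cup y_{n},\ldots,y_{n-1}\cup y_{n}}$ is itself well-defined.

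The base case is immediate: for $n=0$ the only index is $I=\varnothing$, contributing $\tr_{Q\left(\cutpn x\right)}\circ Q\left(\int{\cutpn x}{\cutpn x}\right)=\tr_{Q\left(\cutpn x\right)}\circ\Id{Q\left(\cutpn x\right)}=\tr_{Q_{0}\left(\cutpn x\right)}$, using $Q\left(\cutpn x\right)=\bigotimes_{\ci a\in\cutpn x}Q_{0}\left(\ci a\right)=Q_{0}\left(\cutpn x\right)$. For the recursion, I split the defining sum. The terms with $n\notin I$ range exactly over $I\subseteq\{1,\ldots,n-1\}$ with $\cutpn{y_{I}}\in\mathsf{Marking}\left(O\right)$, and their sum is $\qdv{x}{y_{1},\ldots,y_{n-1}}$ by definition. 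The terms with $n\in I$ I reindex by $I=J\cup\{n\}$, $J\subseteq\{1,\ldots,n-1\}$; since $|I|=|J|+1$ and $y_{I}=y_{J}\cup y_{n}$, they contribute $-S$, where
\[
S:=\sum_{\substack{J\subseteq\{1,\ldots,n-1\}\\\cutpn{y_{J}\cup y_{n}}\in\mathsf{Marking}\left(O\right)}}(-1)^{|J|}\,\tr_{Q\left(\cutpn{y_{J}\cup y_{n}}\right)\otimes H\left[\left(y_{J}\cup y_{n}\right)\setminus x\right]^{\oplus}}\circ Q\left(\int{\cutpn x}{\cutpn{y_{J}\cup y_{n}}}\right).
\]
Thus it remains to prove $S=\left[\tr_{H\left[y_{n}\setminus x\right]^{\oplus}}\otimes\qdv{y_{n}}{y_{1}\cup y_{n},\ldots,y_{n-1}\cup y_{n}}\right]\circ\operator{\rest{\int{\cutpn x}{\cutpn{y_{n}}}}}$, since then $\qdv{x}{y_{1},\ldots,y_{n}}=\qdv{x}{y_{1},\ldots,y_{n-1}}-S$ is exactly the asserted relation.

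I would establish this last identity summand by summand, using functoriality (Theorem \ref{thm:Functoriality}). Fix an admissible $J$. From the chain $\cutpn x\rightarrow^{*}\cutpn{y_{n}}\rightarrow^{*}\cutpn{y_{J}\cup y_{n}}$, functoriality of the induced global annotation -- combined with $Q\left(\int{\cutpn x}{\cutpn{y_{n}}}\right)=\operator{\rest{\int{\cutpn x}{\cutpn{y_{n}}}}}$ (Definition \ref{def:Global Annotation defined from a Local Annotation}) and the vanishing of the negative $H$-spaces -- gives
\[
Q\left(\int{\cutpn x}{\cutpn{y_{J}\cup y_{n}}}\right)=\left(\Id{H\left[y_{n}\setminus x\right]^{\oplus}}\otimes Q\left(\int{\cutpn{y_{n}}}{\cutpn{y_{J}\cup y_{n}}}\right)\right)\circ\operator{\rest{\int{\cutpn x}{\cutpn{y_{n}}}}}.
\]
Since $y_{n}\setminus x$ and $\left(y_{J}\cup y_{n}\right)\setminus y_{n}$ partition $\left(y_{J}\cup y_{n}\right)\setminus x$, the codomain of the first factor is $H\left[y_{n}\setminus x\right]^{\oplus}\otimes H\left[\left(y_{J}\cup y_{n}\right)\setminus y_{n}\right]^{\oplus}\otimes Q\left(\cutpn{y_{J}\cup y_{n}}\right)$, and the total trace over it factors as $\tr_{H\left[y_{n}\setminus x\right]^{\oplus}}\otimes\tr_{Q\left(\cutpn{y_{J}\cup y_{n}}\right)\otimes H\left[\left(y_{J}\cup y_{n}\right)\setminus y_{n}\right]^{\oplus}}$. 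Substituting into the $J$-th summand of $S$ and using bifunctoriality of $\otimes$, that summand becomes $(-1)^{|J|}\left[\tr_{H\left[y_{n}\setminus x\right]^{\oplus}}\otimes D_{J}\right]\circ\operator{\rest{\int{\cutpn x}{\cutpn{y_{n}}}}}$, where $D_{J}:=\tr_{Q\left(\cutpn{y_{J}\cup y_{n}}\right)\otimes H\left(\sigma\int{\cutpn{y_{n}}}{\cutpn{y_{J}\cup y_{n}}}\right)}\circ Q\left(\int{\cutpn{y_{n}}}{\cutpn{y_{J}\cup y_{n}}}\right)$ is precisely the sign-free $J$-th summand of $\qdv{y_{n}}{y_{1}\cup y_{n},\ldots,y_{n-1}\cup y_{n}}$ -- here one uses $\bigcup_{j\in J}\left(y_{j}\cup y_{n}\right)=y_{J}\cup y_{n}$, which simultaneously identifies the two summation conditions. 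Summing over $J$ and pulling $\tr_{H\left[y_{n}\setminus x\right]^{\oplus}}\otimes(-)$ and the common right factor $\operator{\rest{\int{\cutpn x}{\cutpn{y_{n}}}}}$ outside the sum then gives $S=\left[\tr_{H\left[y_{n}\setminus x\right]^{\oplus}}\otimes\qdv{y_{n}}{y_{1}\cup y_{n},\ldots,y_{n-1}\cup y_{n}}\right]\circ\operator{\rest{\int{\cutpn x}{\cutpn{y_{n}}}}}$, as required.

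I expect the only real difficulty to be bookkeeping rather than mathematics: one must arrange the tensor factors so that the functoriality axiom applies literally rather than ``up to an implicit permutation'' -- which is exactly why it matters here that the negative $H$-spaces are trivial and that $\cutpn{y_{n}}$ occurs as an \emph{intermediate} marking in the chain $\cutpn x\rightarrow^{*}\cutpn{y_{n}}\rightarrow^{*}\cutpn{y_{J}\cup y_{n}}$, not as an arbitrary sub-marking -- and one must check that splitting off the index $n$ really reproduces the index set of the nested drop function, which reduces to the set identity $\bigcup_{j\in J}\left(y_{j}\cup y_{n}\right)=\left(\bigcup_{j\in J}y_{j}\right)\cup y_{n}$.
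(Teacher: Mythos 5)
Your proposal is correct and follows essentially the same route as the paper's own proof: split the defining sum according to whether $n\in I$, reindex the $n\in I$ terms, apply functoriality to the chain $\cutpn x\rightarrow^{*}\cutpn{y_{n}}\rightarrow^{*}\cutpn{y_{J}\cup y_{n}}$ (with the negative $H$-spaces trivial since the extension is $(0,\oplus)$), factor the trace, and pull out the common right factor $\operator{\rest{\int{\cutpn x}{\cutpn{y_{n}}}}}$. Your write-up is in fact slightly more careful than the paper's on the bookkeeping points (base case, well-definedness of the nested drop function, and the identification $\bigcup_{j\in J}\left(y_{j}\cup y_{n}\right)=y_{J}\cup y_{n}$ of the summation conditions).
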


\tcbline{}

\begin{align*}
\qdv x{y_{1},\ldots,y_{n}} & =\sum_{\begin{array}{c}
I\subseteq\left\llbracket 1,n\right\rrbracket \\
\bigcup_{i\in I}y_{i}\in C(E)
\end{array}}\left(-1\right)^{|I|}\tr\circ\operator{\rest{\int{\cutpn x}{\cutpn{\bigcup_{i\in I}y_{i}}}}}\\
 & =\sum_{\begin{array}{c}
J\subseteq\left\llbracket 1,n-1\right\rrbracket \\
\bigcup_{i\in J}y_{j}\in C(E)
\end{array}}\left(-1\right)^{|I|}\tr\circ\operator{\rest{\int{\cutpn x}{\cutpn{\bigcup_{j\in I}y_{j}}}}}-\\
(1) & \sum_{\begin{array}{c}
J\subseteq\left\llbracket 1,n-1\right\rrbracket \\
\bigcup_{i\in J}y_{j}\cup y_{n}\in C(E)
\end{array}}\left(-1\right)^{|I|}\tr\circ\uwave{\operator{\rest{\int{\cutpn x}{\cutpn{\bigcup_{j\in j}y_{j}\cup y_{n}}}}}}
\end{align*}

By the functoriality property, we have (writing simplifications with
an arrow):

\begin{alignat*}{1}
\uwave{\operator{\rest{\int{\cutpn x}{\cutpn{\bigcup_{j\in j}y_{j}\cup y_{n}}}}}} & =\left[\Id{H\left[y_{n}\setminus x\right]^{\oplus}}\otimes\operator{\rest{\int{\cutpn{y_{n}}}{\cutpn{\bigcup_{j\in J}y_{j}\cup y_{n}}}}}\right]\\
 & \circ\left[\operator{\rest{\int{\cutpn x}{\cutpn{y_{n}}},}}\otimes\Id{\cancelto{\mathbb{C}}{H\left[\bigcup_{j\in J}y_{j}\cup y_{n}\setminus y_{n}\right]^{\ominus}}}\right]
\end{alignat*}

Then
\begin{align*}
(1) & =\sum_{\begin{array}{c}
J\subseteq\left\llbracket 1,n-1\right\rrbracket \\
\bigcup_{i\in J}y_{j}\cup y_{n}\in C(E)
\end{array}}\left(-1\right)^{|I|}\tr\left[\Id{H\left[y_{n}\setminus x\right]^{\oplus}}\otimes\operator{\rest{\int{\cutpn{y_{n}}}{\cutpn{\bigcup_{j\in J}y_{j}\cup y_{n}}}}}\right]\circ\operator{\rest{\int{\cutpn x}{\cutpn{y_{n}}},}}\\
 & =\left[\tr_{H\left[y_{n}\setminus x\right]^{\oplus}}\otimes\sum_{\begin{array}{c}
J\subseteq\left\llbracket 1,n-1\right\rrbracket \\
\bigcup_{i\in J}y_{j}\cup y_{n}\in C(E)
\end{array}}\left(-1\right)^{|I|}\tr\operator{\rest{\int{\cutpn{y_{n}}}{\cutpn{\bigcup_{j\in J}y_{j}\cup y_{n}}}}}\right]\circ\operator{\rest{\int{\cutpn x}{\cutpn{y_{n}}},}}\\
 & =\left[\tr_{H\left[y_{n}\setminus x\right]^{\oplus}}\otimes\qdv{y_{n}}{y_{1}\cup y_{n},\ldots,y_{n-1}\cup y_{n}}\right]\circ\operator{\rest{\int{\cutpn x}{\cutpn{y_{n}}}}}
\end{align*}

Hence , we recognize the result

\begin{align*}
\qdv x{y_{1},\ldots,y_{n}} & =\qdv x{y_{1},\ldots,y_{n-1}}\\
 & -\left[\tr_{H\left[y_{n}\setminus x\right]^{\oplus}}\otimes\qdv{y_{n}}{y_{1}\cup y_{n},\ldots,y_{n-1}\cup y_{n}}\right]\circ\operator{\rest{\int{\cutpn x}{\cutpn{y_{n}}}}}
\end{align*}
\end{tcolorbox}

\begin{tcolorbox}[breakable,enhanced,title= Drop condition on a collapsed interval ,colback=white,colframe=black!30!white,borderline={0.5mm}{0mm}{green!30!black,dashed}]

\begin{lem}
\label{lem:Drop condition on a collapsed interval}{[}adapted from
\autocite[Proposition 2,][]{winskelProbabilisticQuantumEvent2014}{]}
Let $x\subseteq y_{1},\ldots,y_{n}\in C(E)$. If for some $i$, $x=y_{i}$,
then $\qdv x{y_{1},\ldots,y_{n}}=0$ 
\end{lem}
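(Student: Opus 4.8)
The plan is to prove the identity by a sign-cancellation argument: I will exhibit a fixed-point-free involution on the set of index sets $I$ that appear in the defining sum of $\qdv{x}{y_1,\ldots,y_n}$, one that leaves each summand unchanged but flips its sign, so that the orbits of the involution (all of size two) each contribute zero.

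Concretely, first I would fix an index $i_0$ with $y_{i_0}=x$, and set $\mathcal I=\{\,I\subseteq\{1,\ldots,n\}\mid y_I\in C(E)\,\}$, the index sets contributing to the sum, using the convention $y_\varnothing=x$ already implicit in Lemma~\ref{lem:Alternative inductive definition of the Drop Function}. I would then define $\phi\colon\mathcal I\to\mathcal P(\{1,\ldots,n\})$ by $\phi(I)=I\mathbin{\triangle}\{i_0\}$, i.e.\ $\phi$ toggles membership of $i_0$. The key computation is that $\phi$ does not change the union: if $i_0\notin I$ then $\phi(I)=I\cup\{i_0\}$ and
\[
y_{\phi(I)}=\Bigl(\bigcup_{j\in I}y_j\Bigr)\cup y_{i_0}=y_I\cup x=y_I,
\]
using that $x\subseteq y_j$ for every $j$ (so $y_I\supseteq x$ whenever $I\neq\varnothing$) together with $y_{i_0}=x$ (which also covers $I=\varnothing$, giving $y_{\{i_0\}}=x=y_\varnothing$); the case $i_0\in I$ is symmetric. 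In particular $y_{\phi(I)}=y_I\in C(E)$, so $\phi$ restricts to an involution $\mathcal I\to\mathcal I$ with no fixed points, since $|\phi(I)|=|I|\pm1$.

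Next I would note that each summand depends on $I$ only through the configuration $y_I$: writing $T_I:=\tr_{Q(y_I)\otimes H(y_I\setminus x)}\circ Q(x\subseteq y_I)$ for the operator in the sum, the equality $y_{\phi(I)}=y_I$ gives $T_{\phi(I)}=T_I$ (all traced spaces, $H$-factors, and the map $Q(x\subseteq-)$ coincide, with no implicit permutation to worry about), while $(-1)^{|\phi(I)|}=-(-1)^{|I|}$. Hence $(-1)^{|I|}T_I+(-1)^{|\phi(I)|}T_{\phi(I)}=0$. Partitioning $\mathcal I$ into the two-element orbits $\{I,\phi(I)\}$ of $\phi$ and summing, $\qdv{x}{y_1,\ldots,y_n}=\sum_{I\in\mathcal I}(-1)^{|I|}T_I=0$, which proves the lemma.

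The only points needing care are minor: the empty-set edge case in the union computation (dispatched by $y_{i_0}=x$), and checking that $\phi$ genuinely preserves the constraint $y_I\in C(E)$ that defines $\mathcal I$ — which is immediate once $y_{\phi(I)}=y_I$ is established. I do not expect a substantive obstacle here; the statement is exactly the quantum/operator analogue of the corresponding inclusion–exclusion cancellation in Winskel's probabilistic setting, and the argument goes through verbatim because $Q$ and the partial trace are only ever evaluated at the configurations $y_I$, not at the index sets themselves.
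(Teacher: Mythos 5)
Your proof is correct, and it takes a genuinely more self-contained route than the paper's. You argue directly on the inclusion--exclusion sum defining $\qdv x{y_{1},\ldots,y_{n}}$, exhibiting the sign-reversing, fixed-point-free involution $I\mapsto I\mathbin{\triangle}\{i_{0}\}$: since $y_{i_{0}}=x\subseteq y_{j}$ for all $j$, toggling $i_{0}$ leaves the union $y_{I}$ unchanged (including the $I=\varnothing$ edge case via the convention $y_{\varnothing}=x$), hence leaves the summand and the constraint $y_{I}\in C(E)$ unchanged while flipping $(-1)^{|I|}$, so the orbits cancel pairwise. The paper instead takes $i=n$ w.l.o.g.\ and invokes Lemma \ref{lem:Alternative inductive definition of the Drop Function}: with $y_{n}=x$ the operator $\operator{\rest{\int{\cutpn x}{\cutpn{y_{n}}}}}$ is the identity and $H\left[y_{n}\setminus x\right]^{\oplus}=\mathbb{C}$, so the recursion collapses to $\qdv x{y_{1},\ldots,y_{n-1}}-\qdv x{y_{1},\ldots,y_{n-1}}=0$. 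That recursion is itself obtained by splitting the defining sum into $I\ni n$ and $I\not\ni n$, so the two arguments share the same combinatorial core; yours inlines it and needs neither the inductive lemma nor the functoriality property used in its proof, whereas the paper's version is a one-line corollary of machinery it has already established. Your attention to the two genuine edge cases --- the $y_{\varnothing}=x$ convention and the preservation of the constraint $y_{I}\in C(E)$ under the involution --- is exactly what needs checking, and both are handled correctly.
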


\tcbline{}

Assume w.l.o.g. that $i=n$. It follows that $\operator{\rest{\int{\cutpn x}{\cutpn{y_{n}}}}}=\operator{\rest{\int{\cutpn x}{\cutpn x}}}=\Id{Q\left(\cutpn x\right)}$,
$\qd{x\cup y_{n}}{y_{1}\cup y_{n},\ldots,y_{n-1}\cup y_{n}}=\qdv x{y_{1},\ldots,y_{n-1}}$
and $H\left[y_{n}\setminus x\right]^{\oplus}=\mathbb{C}$ . Hence,
by \ref{lem:Alternative inductive definition of the Drop Function},
\begin{align*}
\qdv x{y_{1},\ldots,y_{n}} & =\qdv x{y_{1},\ldots,y_{n-1}}-\left[\tr_{H\left[y_{n}\setminus x\right]^{\oplus}}\otimes\qdv{y_{n}}{y_{1}\cup y_{n},\ldots,y_{n-1}\cup y_{n}}\right]\circ\operator{\rest{\int{\cutpn x}{\cutpn{y_{n}}}}}\\
 & =\qdv x{y_{1},\ldots,y_{n-1}}-\qdv x{y_{1},\ldots,y_{n-1}}\\
 & =0
\end{align*}
\end{tcolorbox}

\begin{tcolorbox}[breakable,enhanced,title= Drop condition as a recursive sum ,colback=white,colframe=black!30!white,borderline={0.5mm}{0mm}{green!30!black,dashed}]

\begin{lem}
\label{lem:Drop condition as a recursive sum}{[}adapted from \autocite[Lemma 1,][]{winskelProbabilisticQuantumEvent2014}{]}
Let $x\subseteq^{0,\oplus}y_{1},\ldots y_{n-1},y'_{n}\in C(E)$, and
let $y_{n}\subseteq y'_{n}$. Then

\begin{align*}
\qdv x{y_{1},\ldots,y'_{n}} & =\qdv x{y_{1},\ldots,y_{n}}\\
 & +\left[\tr_{H\left[y_{n}\setminus x\right]^{\oplus}}\otimes\qd{y_{n}}{y_{1}\cup y_{n},\ldots,y_{n-1}\cup y_{n},y_{n}'}\right]\circ\operator{\rest{\int{\cutpn x}{\cutpn{y_{n}}}}}
\end{align*}
\end{lem}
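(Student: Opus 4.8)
The strategy is to derive the identity from three applications of the recursive formula of Lemma~\ref{lem:Alternative inductive definition of the Drop Function}, stitched together by the functoriality of $\operator{.}$ proved in Theorem~\ref{thm:Functoriality}. Two preliminary remarks organise everything. First, under the hypotheses (reading, as is implicit for the right-hand side to make sense, that $x\subseteq^{0,\oplus}y_{n}$ too), one has $x\subseteq^{0,\oplus}y_{n}\subseteq y'_{n}$, so every interval of markings that occurs below is purely non-negative; hence all the Hilbert spaces $H[\cdot]^{\ominus}$ appearing are trivial, the ``$\ominus$'' legs in the $\operator{.}$-operators and in functoriality collapse, and restrictions glue by plain composition. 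Second, $x\subseteq y_{n}\subseteq y'_{n}$ gives the disjoint splitting $[y'_{n}\setminus x]^{\oplus}=[y_{n}\setminus x]^{\oplus}\sqcup[y'_{n}\setminus y_{n}]^{\oplus}$, hence the tensor factorisations $H[y'_{n}\setminus x]^{\oplus}\cong H[y_{n}\setminus x]^{\oplus}\otimes H[y'_{n}\setminus y_{n}]^{\oplus}$ and $\tr_{H[y'_{n}\setminus x]^{\oplus}}=\tr_{H[y_{n}\setminus x]^{\oplus}}\otimes\tr_{H[y'_{n}\setminus y_{n}]^{\oplus}}$, all up to the implicit permutations already tolerated in Definition~\ref{def:Quantum Event Structure}.

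First I apply Lemma~\ref{lem:Alternative inductive definition of the Drop Function} to the last argument of each of the two drop functions occurring in the claim: once with last slot $y'_{n}$, once with last slot $y_{n}$. Both expansions contain the summand $\qdv x{y_{1},\ldots,y_{n-1}}$, which cancels upon subtraction, so the claim reduces to proving the equality
\begin{align*}
\bigl[\tr_{H[y_{n}\setminus x]^{\oplus}}\otimes\qdv{y_{n}}{y_{1}\cup y_{n},\ldots,y_{n-1}\cup y_{n},y'_{n}}\bigr]\circ\operator{\rest{\int{\cutpn x}{\cutpn{y_{n}}}}}
&=\bigl[\tr_{H[y_{n}\setminus x]^{\oplus}}\otimes\qdv{y_{n}}{y_{1}\cup y_{n},\ldots,y_{n-1}\cup y_{n}}\bigr]\circ\operator{\rest{\int{\cutpn x}{\cutpn{y_{n}}}}}\\
&\quad-\bigl[\tr_{H[y'_{n}\setminus x]^{\oplus}}\otimes\qdv{y'_{n}}{y_{1}\cup y'_{n},\ldots,y_{n-1}\cup y'_{n}}\bigr]\circ\operator{\rest{\int{\cutpn x}{\cutpn{y'_{n}}}}}.
\end{align*}

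Next I expand the drop function $\qdv{y_{n}}{y_{1}\cup y_{n},\ldots,y_{n-1}\cup y_{n},y'_{n}}$ on the left by one more application of Lemma~\ref{lem:Alternative inductive definition of the Drop Function}, now based at $y_{n}$; here $y_{n}\subseteq y'_{n}$ is used to rewrite $(y_{i}\cup y_{n})\cup y'_{n}=y_{i}\cup y'_{n}$, so that the inner family matches the one above. The ``main term'' $\qdv{y_{n}}{y_{1}\cup y_{n},\ldots,y_{n-1}\cup y_{n}}$ of that expansion, once tensored with $\tr_{H[y_{n}\setminus x]^{\oplus}}$ and precomposed with $\operator{\rest{\int{\cutpn x}{\cutpn{y_{n}}}}}$, reproduces verbatim the first summand on the right, so it only remains to identify the residual term, built from $\operator{\rest{\int{\cutpn{y_{n}}}{\cutpn{y'_{n}}}}}$, with the second summand on the right. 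For that I rewrite $\operator{\rest{\int{\cutpn x}{\cutpn{y'_{n}}}}}$ by functoriality (Theorem~\ref{thm:Functoriality}, with the trivial $\ominus$ leg deleted) as $\bigl[\Id{H[y_{n}\setminus x]^{\oplus}}\otimes\operator{\rest{\int{\cutpn{y_{n}}}{\cutpn{y'_{n}}}}}\bigr]\circ\operator{\rest{\int{\cutpn x}{\cutpn{y_{n}}}}}$, factor $\tr_{H[y'_{n}\setminus x]^{\oplus}}$ as in the first paragraph, and push $\tr_{H[y_{n}\setminus x]^{\oplus}}$ through the $\Id{H[y_{n}\setminus x]^{\oplus}}$ wire by the interchange law $(f\otimes g)\circ(f'\otimes g')=(f\circ f')\otimes(g\circ g')$; the two expressions then coincide.

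The structural skeleton is thus short, and I expect the main obstacle to be the tensor-wire bookkeeping: one must verify that the Hilbert-space isomorphism $H[y'_{n}\setminus x]^{\oplus}\cong H[y_{n}\setminus x]^{\oplus}\otimes H[y'_{n}\setminus y_{n}]^{\oplus}$, the factorisation of the partial trace, and the functoriality composition all agree on the nose once the implicit permutations coming from the reordering of layers of the layer-graph (Definition~\ref{def:Operator associated to the restriction of an Occurrence Net}) are fixed consistently across the three invocations of Lemma~\ref{lem:Alternative inductive definition of the Drop Function}. Should this prove delicate, a safer route is to argue directly from the defining alternating sums: split the index set of the drop function according to whether it contains the slot of the last argument; the subsets omitting that slot contribute the common part $\qdv x{y_{1},\ldots,y_{n-1}}$ to both drop functions, while the subsets containing it, indexed by $J\subseteq\{1,\dots,n-1\}$, produce $\operator{\rest{\int{\cutpn x}{\cutpn{y_{J}\cup y'_{n}}}}}$ (resp.\ $\operator{\rest{\int{\cutpn x}{\cutpn{y_{J}\cup y_{n}}}}}$), which one then matches term by term with the corresponding summand of $\qdv{y_{n}}{y_{1}\cup y_{n},\ldots,y_{n-1}\cup y_{n},y'_{n}}$ via functoriality -- avoiding the stacked recursion at the cost of heavier indexing.
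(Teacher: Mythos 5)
Your proposal is correct and follows essentially the same route as the paper's proof: both rest on three applications of Lemma \ref{lem:Alternative inductive definition of the Drop Function} (peeling off the last slot of $\qdv x{y_{1},\ldots,y'_{n}}$, of $\qdv x{y_{1},\ldots,y_{n}}$, and of $\qdv{y_{n}}{y_{1}\cup y_{n},\ldots,y_{n-1}\cup y_{n},y'_{n}}$, using $y_{n}\subseteq y'_{n}$ to rewrite $(y_{i}\cup y_{n})\cup y'_{n}=y_{i}\cup y'_{n}$), followed by the factorisation $\tr_{H[y'_{n}\setminus x]^{\oplus}}=\tr_{H[y_{n}\setminus x]^{\oplus}}\otimes\tr_{H[y'_{n}\setminus y_{n}]^{\oplus}}$ and the functoriality of $\operator{\cdot}$ to reassemble $\operator{\rest{\int{\cutpn x}{\cutpn{y'_{n}}}}}$. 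The only cosmetic difference is that the paper expands the right-hand side and recognises the left-hand side at the end, whereas you cancel the common term $\qdv x{y_{1},\ldots,y_{n-1}}$ first and match the residuals; the cancellation pattern is identical.
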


\tcbline{}

By applying \ref{lem:Alternative inductive definition of the Drop Function}
on each term, 
\begin{align*}
rhs & =\uwave{\qdv x{y_{1},\ldots,y_{n-1}}}\\
 & \mathbin{\color{purple}-}{\color{purple}\left[\tr_{H\left[y_{n}\setminus x\right]^{\oplus}}\otimes\qdv{y_{n}}{y_{1}\cup y_{n},\ldots,y_{n-1}\cup y_{n}}\right]\circ\operator{\rest{\int{\cutpn x}{\cutpn{y_{n}}}}}}\\
 & +\left[\tr_{H\left[y_{n}\setminus x\right]^{\oplus}}\otimes\left(A\right)\right]\circ\operator{\rest{\int{\cutpn x}{\cutpn{y_{n}}}}}
\end{align*}

with 
\begin{align*}
\left(A\right):= & \qd{y_{n}}{y_{1}\cup y_{n},\ldots,y_{n-1}\cup y_{n}}\\
 & -\left[\tr_{H\left[y'_{n}\setminus y_{n}\right]^{\oplus}}\otimes\qdv{y'_{n}}{y_{1}\cup y'_{n},\ldots,y_{n-1}\cup y_{n}'}\right]\\
 & \hfill\circ\operator{\rest{\int{\cutpn{y_{n}}}{\cutpn{y'_{n}}}}}
\end{align*}

By functoriality of the tensor product

\begin{align*}
\left[\tr_{H\left[y_{n}\setminus x\right]^{\oplus}}\otimes\left(A\right)\right]= & {\color{blue}\tr_{H\left[y_{n}\setminus x\right]^{\oplus}}\otimes\qd{y_{n}}{y_{1}\cup y_{n},\ldots,y_{n-1}\cup y_{n}}}\\
- & \left[\tr_{H\left[y'_{n}\setminus x\right]^{\oplus}}\otimes\qdv{y'_{n}}{y_{1}\cup y'_{n},\ldots,y_{n-1}\cup y_{n}'}\right]\\
 & \hfill\circ\left[\Id{H\left[y_{n}\setminus x\right]^{\oplus}}\otimes\operator{\rest{\int{\cutpn{y_{n}}}{\cutpn{y'_{n}}}}}\right]
\end{align*}

Then, by functoriality of $\operator .$ (\ref{thm:Functoriality})

\begin{align*}
\left[\tr_{H\left[y_{n}\setminus x\right]^{\oplus}}\otimes\left(A\right)\right]\circ\operator{\rest{\int{\cutpn x}{\cutpn{y_{n}}}}} & ={\color{blue}\tr_{H\left[y_{n}\setminus x\right]^{\oplus}}\otimes\qd{y_{n}}{y_{1}\cup y_{n},\ldots,y_{n-1}\cup y_{n}}}-\\
 & \left[\tr_{H\left[y'_{n}\setminus x\right]^{\oplus}}\otimes\qdv{y'_{n}}{y_{1}\cup y'_{n},\ldots,y_{n-1}\cup y_{n}'}\right]\\
 & \hfill\circ\operator{\rest{\int{\cutpn x}{\cutpn{y'_{n}}}}}
\end{align*}

Hence, reducing the whole expression and applying again \ref{lem:Alternative inductive definition of the Drop Function},

\begin{align*}
rhs & =\qdv x{y_{1},\ldots,y_{n-1}}\\
 & \mathbin{\color{purple}-}{\color{purple}\left[\tr_{H\left[y_{n}\setminus x\right]^{\oplus}}\otimes\qdv{y_{n}}{y_{1}\cup y_{n},\ldots,y_{n-1}\cup y_{n}}\right]\circ\operator{\rest{\int{\cutpn x}{\cutpn{y_{n}}}}}}\\
 & \mathbin{\color{blue}+}{\color{blue}\left[\tr_{H\left[y_{n}\setminus x\right]^{\oplus}}\otimes\qd{y_{n}}{y_{1}\cup y_{n},\ldots,y_{n-1}\cup y_{n}}\right]\circ\operator{\rest{\int{\cutpn x}{\cutpn{y_{n}}}}}}\\
 & -\left[\tr_{H\left[y'_{n}\setminus x\right]^{\oplus}}\otimes\qdv{y'_{n}}{y_{1}\cup y'_{n},\ldots,y_{n-1}\cup y_{n}'}\right]\circ\operator{\rest{\int{\cutpn x}{\cutpn{y'_{n}}}}}\\
 & =lhs
\end{align*}
\end{tcolorbox}
\begin{tcolorbox}[breakable,enhanced,title=Expansion of the Drop Condition ,colback=green!8!black!5!white,
colframe=green!30!black,]

\begin{lem}
\label{lem:Expansion of the Drop Condition}{[}adapted from \autocite[Lemma 2,][]{winskelProbabilisticQuantumEvent2014}{]}
Let $x\subseteq y_{1},\ldots,y_{n}$ a configuration. Then $\qdv x{y_{1},\ldots,y_{n}}$
expands as a is a finite production of the grammar 
\begin{align*}
A & \longrightarrow A+\left[\tr\otimes A\right]\circ\mathbb{O}\\
A & \longrightarrow\mathsf{SingleExtension}
\end{align*}
where $\mathsf{SingleExtension}$ terms are of the form $\qdv u{w_{1},\ldots,w_{k}}$,
where $x\subseteq u\ext w_{1}$ and $w_{i}\subseteq y_{1}\cup\ldots\cup y_{n}$
for all $i\in\left\llbracket 1,k\right\rrbracket $
\end{lem}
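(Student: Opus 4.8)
\tcbline{}

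The plan is to prove this by well-founded induction in which each step uses Lemma~\ref{lem:Drop condition as a recursive sum} to ``peel off'' the bottom causal layer of one argument of the Drop Function, stopping once every argument is a single extension of its current base. First, set up the bookkeeping: for configurations $x\subseteq z$ let $\mathrm{depth}(x,z)$ be the length of a longest $<$-chain contained in $z\setminus x$ (so $\mathrm{depth}(x,x)=0$), so that $[x,z]$ is a single extension exactly when $\mathrm{depth}(x,z)\le1$, and let $\mathrm{fl}(x,z):=x\cup\{e\in z\setminus x\mid x\ext e\}$ be the \emph{first layer}: it is a single extension of $x$ with $x\subseteq\mathrm{fl}(x,z)\subseteq z$ and $\mathrm{depth}\bigl(\mathrm{fl}(x,z),z\bigr)=\max\bigl(0,\mathrm{depth}(x,z)-1\bigr)$, since deleting the minimal layer of $z\setminus x$ lowers every event's chain-depth by exactly one. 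Use as induction measure $\mu\bigl(\qdv{u}{z_{1},\ldots,z_{m}}\bigr):=\sum_{i=1}^{m}\mathrm{depth}(u,z_{i})\in\mathbb{N}$, and maintain the standing invariants $x\subseteq u$, $z_{i}\subseteq y_{1}\cup\cdots\cup y_{n}$, and $u\subseteq^{0,\oplus}z_{i}$ for all $i$ — the last being inherited from $x\subseteq^{0,\oplus}y_{i}$, which I read into the hypothesis as the Drop Function is only defined on positive intervals.

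\emph{Base case.} If every $z_{i}$ is a single extension of $u$, then $\qdv{u}{z_{1},\ldots,z_{m}}$ already is a $\mathsf{SingleExtension}$ term: as a signed sum over subsets of its arguments, $\qdv{u}{-}$ is symmetric in them, so reorder to put a non-trivial argument first, giving $u\ext z_{1}$, while the invariants supply $x\subseteq u$ and $z_{i}\subseteq y_{1}\cup\cdots\cup y_{n}$. (If some $z_{i}=u$ the term is $0$ by Lemma~\ref{lem:Drop condition on a collapsed interval} and is handled trivially; likewise one may assume no original $y_{i}$ equals $x$.)

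\emph{Inductive step.} If some argument is not a single extension, by the symmetry above take it to be $z_{m}$, so $\mathrm{depth}(u,z_{m})\ge2$, and set $w:=\mathrm{fl}(u,z_{m})$, whence $u\subsetneq w\subsetneq z_{m}$. Applying Lemma~\ref{lem:Drop condition as a recursive sum} with $(w,z_{m})$ in the roles of $(y_{n},y'_{n})$ — legitimate since $u\subseteq w\subseteq z_{m}$ and $u\subseteq^{0,\oplus}z_{i}$ for every $i$ — gives
\[
\qdv{u}{z_{1},\ldots,z_{m}}=\qdv{u}{z_{1},\ldots,z_{m-1},w}+\Bigl[\tr_{H\left[w\setminus u\right]^{\oplus}}\otimes\qd{w}{z_{1}\cup w,\ldots,z_{m-1}\cup w,z_{m}}\Bigr]\circ\operator{\rest{\int{\cutpn{u}}{\cutpn{w}}}},
\]
which is precisely one instance of the production $A\longrightarrow A+\left[\tr\otimes A\right]\circ\mathbb{O}$. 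Both summands keep the invariants — notably $w\subseteq z_{m}\subseteq y_{1}\cup\cdots\cup y_{n}$ and each $z_{i}\cup w\subseteq y_{1}\cup\cdots\cup y_{n}$ — and both have strictly smaller measure: for the first, the measure changes by $\mathrm{depth}(u,w)-\mathrm{depth}(u,z_{m})=1-\mathrm{depth}(u,z_{m})\le-1$; for the second, $\mathrm{depth}(w,z_{m})=\mathrm{depth}(u,z_{m})-1$ while for $i<m$ any chain contained in $(z_{i}\cup w)\setminus w\subseteq z_{i}\setminus u$ is a chain in $z_{i}\setminus u$, so $\mathrm{depth}(w,z_{i}\cup w)\le\mathrm{depth}(u,z_{i})$, and the measure again drops by at least $1$. (If the factor $\qd{w}{\ldots}$ has an argument equal to its base $w$ it is $0$ by Lemma~\ref{lem:Drop condition on a collapsed interval} and the identity collapses to its first summand.) By the induction hypothesis both summands expand finitely along the grammar, and substituting these expansions into the identity — distributing $\left[\tr\otimes(-)\right]\circ\mathbb{O}$ over the sum — yields a finite grammar expansion of $\qdv{x}{y_{1},\ldots,y_{n}}$; finiteness holds because $\mu$ is a non-negative integer strictly decreasing at each step.

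The one genuinely delicate point, I expect, is the termination bookkeeping of the inductive step: one must verify that simultaneously raising the base from $u$ to its first layer $w$ \emph{and} enlarging every surviving argument $z_{i}$ to $z_{i}\cup w$ cannot increase any depth, so that the ``$\left[\tr\otimes A\right]\circ\mathbb{O}$'' branch — not just the leading branch — strictly decreases $\mu$. The minor nuisances are the degenerate Drop Functions with an argument equal to the base (killed by Lemma~\ref{lem:Drop condition on a collapsed interval}) or with coinciding arguments (harmless for the measure), and the ambiguity in ``$u\ext w_{1}$'': if it is meant as a single-\emph{event} extension rather than a layer, one continues peeling each first layer into its individual events by the same Lemma~\ref{lem:Drop condition as a recursive sum} (replace $w$ by $u\cup\{e_{1}\}\subseteq w$, and so on), extending the recursion by only a bounded amount. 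Note also that the argument delivers slightly more than the statement: in every leaf term \emph{all} of $w_{1},\ldots,w_{k}$ are single extensions of $u$.
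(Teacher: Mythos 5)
Your proof is correct and follows the same skeleton as the paper's: a well-founded induction whose inductive step is exactly one application of Lemma \ref{lem:Drop condition as a recursive sum} (yielding the production $A\longrightarrow A+\left[\tr\otimes A\right]\circ\mathbb{O}$), with Lemma \ref{lem:Drop condition on a collapsed interval} disposing of degenerate terms whose base equals an argument. The differences are in the bookkeeping. The paper uses the product weight $\left|y_{1}\setminus x\right|\times\ldots\times\left|y_{n}\setminus x\right|$ and splits off an arbitrary intermediate configuration $x\subsetneq y_{n}\subsetneq y'_{n}$, whereas you use the sum of chain-depths and commit to the canonical ``first layer''; your verification that both branches of the recursion strictly decrease the measure is more explicit than the paper's. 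The one place where the paper's choice buys something yours does not: its leaves are exactly the terms of weight $1$, i.e.\ every argument is a single-\emph{event} extension $u\sqcup\{e\}$, which is the form actually consumed by Theorem \ref{conj:Considering-single-extensions}. Your depth measure bottoms out at ``every argument is an antichain layer,'' and the extra peeling you sketch to reach single events is not certified by your measure (the depth-sum is unchanged when splitting $u\sqcup\{e_{1},e_{2}\}$ down to $u\sqcup\{e_{1}\}$); you would need a secondary measure --- e.g.\ $\sum_{i}\left|z_{i}\setminus u\right|$, which in fact strictly decreases in both branches of your inductive step and could replace the depth measure outright. Since you flag this ambiguity yourself and the fix is routine, I count it as a patchable loose end rather than a gap; otherwise the two arguments are interchangeable.
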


\tcbline{}

$\vartriangleright$ Define the weight of a drop function $\qdv x{y_{1},\ldots,y_{n}}$
be $w\left(\qdv x{y_{1},\ldots,y_{n}}\right):=\left|y_{1}\setminus x\right|\times\ldots\times\left|y_{n}\setminus x\right|$.
It is an upper bound on the number of configurations in the interval
$\int x{y_{1},\ldots,y_{n}}$.

Assume $x\subseteq y_{1},\ldots y_{n-1},y'_{n}\in C(E)$. By \ref{lem:Drop condition on a collapsed interval},
we can suppose that $\forall i,\,x\neq y_{i}$, otherwise $\qdv x{y_{1},\ldots,y_{n}'}=0$. 

$\vartriangleright$ If $x\subsetneq y_{n}\subsetneq y'_{n}$, by
\ref{lem:Drop condition as a recursive sum}, we have

\begin{align*}
\qdv x{y_{1},\ldots,y_{n}'} & =\qdv x{y_{1},\ldots,y_{n}}\\
 & +\left[\tr_{H\left[y_{n}\setminus x\right]^{\oplus}}\otimes\qd{y_{n}}{y_{1}\cup y_{n},\ldots,y_{n-1}\cup y_{n},y_{n}'}\right]\circ\operator{\rest{\int{\cutpn x}{\cutpn{y_{n}}}}}
\end{align*}
where we observe that $\left|y_{n}\setminus x\right|<\left|y'_{n}\setminus x\right|$,
$\left|y'_{n}\setminus y_{n}\right|<\left|y'_{n}\setminus x\right|$
and $\left|\left(y_{i}\cup y_{n}\right)\setminus y_{n}\right|\leq\left|y_{i}\setminus x\right|$
($\star$). Furthermore, $\qdv x{y_{1},\ldots,y_{n}}$ and $\qd{y_{n}}{y_{1}\cup y_{n},\ldots,y_{n-1}\cup y_{n},y_{n}'}$
satisfy the conditions of \ref{lem:Drop condition as a recursive sum}.
So we can see the expansion of the expression $\qdv x{y_{1},\ldots,y_{n}'}$--after
a finite number of applications of \ref{lem:Drop condition as a recursive sum}--as
a finite production of the grammar 
\begin{align}
A & \overset{C1}{\longrightarrow}A+\left[\tr\otimes A\right]\circ\mathbb{O}\\
A & \overset{C2}{\longrightarrow}\mathsf{SingleExtension}
\end{align}

where 
\begin{description}
\item [{(C1)}] either the term $A$ has a weight $>1$, then rule (4.3)
applies (here $A$ verifies the conditions of \ref{lem:Drop condition as a recursive sum},
and is non terminal) 
\item [{(C2)}] either the term $A$ has weight $=1$, and is of the form
$\qdv u{w_{1},\ldots,w_{k}}$ where $x\subseteq u\ext w_{1}$ and
$w_{i}\subseteq y_{1}\cup\ldots\cup y_{n}$ for all $i\in\left\llbracket 1,k\right\rrbracket $.
Then (4.4) applies and the term is marked as a single extension. (we
ignore the case weight $=0$ since the drop-function vanishes in this
case)
\end{description}
$\vartriangleright$ The iteration eventually terminates since the
weight of the terms is strictly decreasing (by ($\star$)).
\end{tcolorbox}

The previous Lemma allows for the following critical result, stating
that checking the Drop Condition on every intervals of single extension
is enough to enforce the drop condition globally.

\begin{tcolorbox}[breakable,enhanced,title=Considering single extensions is enough,
colback=green!8!black!5!white, colframe=green!30!black,]

\begin{thm}
\label{conj:Considering-single-extensions} Let $E$ be an Occurrence
net, and $Q$ be a Quantum global annotation. Then 
\begin{align*}
\forall x\subseteq^{0,\oplus}y_{1},\ldots,y_{n}, &  & \qdv x{y_{1},\ldots,y_{n}}\sqsupseteq0\\
 & \iff\\
\forall x\ext^{0,\oplus}x\sqcup e_{1},\ldots,x\sqcup e_{n} &  & \qdv x{x\sqcup e_{1},\ldots,x\sqcup e_{n}}\sqsupseteq0
\end{align*}
\end{thm}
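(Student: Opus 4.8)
The top-to-bottom implication is immediate: given events $e_1,\dots,e_n$ with $x\ext e_i$ and each $e_i$ of polarity $0$ or $\oplus$, the configurations $y_i:=x\sqcup e_i$ satisfy $x\subseteq^{0,\oplus}y_i$, so the general drop condition read at these particular $y_i$ is exactly $\qdv x{x\sqcup e_1,\dots,x\sqcup e_n}\sqsupseteq 0$. Only the converse has content, and the plan for it is to reduce an arbitrary positive interval to single extensions by the purely combinatorial Lemma~\ref{lem:Expansion of the Drop Condition}.

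Concretely, I would fix $x\subseteq^{0,\oplus}y_1,\dots,y_n$ and apply Lemma~\ref{lem:Expansion of the Drop Condition} to write $\qdv x{y_1,\dots,y_n}$ as a finite production of the grammar $A\longrightarrow A+[\tr\otimes A]\circ\mathbb{O}$ and $A\longrightarrow\mathsf{SingleExtension}$. I would then induct on this finite derivation tree, proving that the superoperator attached to each node is $\sqsupseteq 0$: the leaves are handled by the hypothesis, and every internal node by the two closure facts recorded next.

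The facts are: (i) positive functionals are closed under addition, so $A,B\sqsupseteq 0\Rightarrow A+B\sqsupseteq 0$, which disposes of the first production rule once its second summand is $\sqsupseteq 0$; and (ii) for $A\sqsupseteq 0$ and $\mathbb{O}\in CPTNI$ of matching signature, $[\tr\otimes A]\circ\mathbb{O}\sqsupseteq 0$. For (ii) the key observation is that, because the condition $x\subseteq^{0,\oplus}(\cdots)$ is preserved along the recursions of Lemmas~\ref{lem:Drop condition on a collapsed interval}--\ref{lem:Expansion of the Drop Condition} (no negative events, hence no negative $H$-inputs, are ever created), every drop function appearing in the expansion is a superoperator with target $\mathbb{C}$; a positive functional into the commutative algebra $\mathbb{C}$ is automatically completely positive, the partial trace is $CP$, and $\mathbb{O}=\operator{\rest{\int{\cutpn x}{\cutpn{y_n}}}}$ is $CP$ since by Definition~\ref{def:Operator associated to the restriction of an Occurrence Net} it is assembled by tensoring and composing the $CPTNI$ maps $Q_0(\boxed{e})$ together with identities; as $CP$ maps are closed under tensor and composition, $[\tr\otimes A]\circ\mathbb{O}$ is $CP$ and, being a functional into $\mathbb{C}$, is $\sqsupseteq 0$. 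For the base case, Lemma~\ref{lem:Expansion of the Drop Condition} emits a leaf $\qdv u{w_1,\dots,w_k}$ only once its weight $|w_1\setminus u|\cdots|w_k\setminus u|$ has dropped to $1$, forcing $|w_i\setminus u|=1$, hence $w_i=u\sqcup e_i$ with $u\ext e_i$ (and each $e_i$ non-negative, this being inherited along the recursion); so every leaf is one of the single extensions quantified in the lower line of the statement, with $u$ in the role of the base configuration, and is therefore $\sqsupseteq 0$ by hypothesis. Propagating these values up the tree via (i) and (ii) yields $\qdv x{y_1,\dots,y_n}\sqsupseteq 0$.

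The step I expect to be delicate is (ii): making the ``positive equals completely positive'' reduction airtight requires checking that the drop functions really stay scalar-valued throughout the expansion, which rests on the invariant that positive intervals stay positive under the recursions of Lemmas~\ref{lem:Drop condition as a recursive sum} and~\ref{lem:Expansion of the Drop Condition}, together with confirming via the weight bookkeeping that the base leaves are genuine single-event, non-negative extensions rather than larger intervals.
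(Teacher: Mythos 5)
Your proposal is correct and follows essentially the same route as the paper: the forward direction is the trivial specialization, and the converse invokes Lemma~\ref{lem:Expansion of the Drop Condition} to expand $\qdv x{y_{1},\ldots,y_{n}}$ as a grammar production whose terminal leaves are single extensions (positive by hypothesis), then propagates positivity up through summation, tensoring with a trace, and pre-composition with the $CPTNI$ operators $\operator{\cdot}$. Your additional care in justifying why $\left[\tr\otimes A\right]\circ\mathbb{O}$ preserves positivity is a welcome elaboration of a step the paper states without detail, but it is the same argument.
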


\tcbline{}
\begin{description}
\item [{$\implies$:}] If the Drop Condition is satisfied for every positive
interval, it is also satisfied for intervals of single extensions.
\item [{$\impliedby$:}] Let $x\subseteq^{0,\oplus}y_{1},\ldots,y_{n}$
be a positive interval of configurations. By \lemref{{Expansion of the Drop Condition}},
$\qdv x{y_{1},\ldots,y_{n}}$ is a production of the grammar
\begin{align*}
A & \longrightarrow A+\left[\tr\otimes A\right]\circ\mathbb{O}\\
A & \longrightarrow\mathsf{SingleExtension}
\end{align*}
 where single extensions are terminal elements. By assumption, the
latter are positive. Then since (i) composition by a positive operator--of
which $\mathbb{\operator{}}$'s pertain, (ii) tensoring by a trace,
and (iii) summing with another positive operator all conserve positivity,
$\qdv x{y_{1},\ldots,y_{n}}$ is positive.
\end{description}
\end{tcolorbox}

\subsubsection{Only checking conflict clusters is enough\label{subsec:Only checking conflict clusters is enough}}

This section shows that the number of configuration that needs to
be checked can be further refined by only looking at conflict clusters.
\begin{tcolorbox}[breakable,enhanced,title=Conflict Cluster, colback=green!12!white,
colframe=green!40!black]

A conflict cluster is a connected components of the undirected conflict
graph induced by the symmetric (but not necessarily transitive) conflict
relation. Formally , defining a conflict graph as $G:=\left(T,\#\right)$
where each event is a node, and edges connect events in conflict,
a conflict cluster is a connected component of $G$.
\end{tcolorbox}

\paragraph{Simplification of the drop condition from the Single Extension Lemma }

We make use of the main property shown in the previous section, and
only care about drop conditions of single extensions of events, since
it is enough to enforce the global drop condition. In this setting,
we will denote by $\cutpn x$ a marking/configuration, and $\boxed{e_{1}},\ldots,\boxed{e_{n}}$
the events s.t. $x\overset{e_{i}}{\ext}y_{i}$.

Recall the general expression of the drop function (Definition \ref{def:Global Quantum Occurrence Net}):
\[
\qdv x{y_{1},\ldots,y_{n}}:=\tr_{Q\left(\cutpn x\right)}+\sum_{\substack{I\subseteq\{1,\dots,n\}\\
\text{s.t. }\cutpn{y_{I}}\in\mathsf{Marking}\left(O\right)
}
}(-1)^{|I|}\tr_{Q\left(\cutpn{y_{I}}\right)\otimes H\left(\sigma\int{\cutpn x}{\cutpn{y_{I}}}\right)}\circ Q\left(\int{\cutpn x}{\cutpn{y_{I}}}\right)
\]

This single extension case is restrictive enough that the expression
of $\operator{\rest{\int{\cutpn x}{\cutpn{y_{I}}}}}$ is straight
forward:
\begin{tcolorbox}[breakable,enhanced, colback=green!12!white, colframe=green!40!black]

\begin{alignat*}{1}
\operator{\rest{\int{\cutpn x}{\cutpn{y_{I}}}}} & =\bigotimes_{i\in I}Q_{0}\left(\boxed{e_{i}}\right)\otimes\Id{Q_{0}\left(R_{I}\right)}\otimes\Id{Q_{0}\left(X\right)}\\
 & =\bigotimes_{i\in I}Q_{0}\left(\boxed{e_{i}}\right)\otimes\Id{Q_{0}\left(K_{I}\right)}
\end{alignat*}

where we define, for $I\subseteq\left\llbracket 1,n\right\rrbracket $:

\begin{tabularx}{\columnwidth}{>{\centering\arraybackslash}X|>{\centering\arraybackslash}X}
\multicolumn{2}{c}{\begin{cellvarwidth}[t]
\centering
$K_{I}=R_{I}\sqcup X$ (K like ``Konstant'')

as the set of conditions that stay marked after the passage to $\cutpn{y_{I}}$,
with:
\end{cellvarwidth}}\tabularnewline
${\displaystyle R_{I}:=\bigsqcup_{j\notin I}\pre{\boxed{e_{j}}}}$
($R$ like ``remaining'') & $X:=\left\{ \ci a\in\cutpn x\left|\forall i,\,\ci a\notin\pre{\boxed{e_{i}}}\right.\right\} $\tabularnewline
as the conditions untouched by the $\boxed{e_{i}}$'s, $\forall i\in I$ & the conditions of the cut/marking $\cutpn x$ not involved in the
transition whatever the $I\subseteq\left\llbracket 1,n\right\rrbracket $\tabularnewline
\end{tabularx}
\end{tcolorbox}

\begin{tcolorbox}[breakable,enhanced,title=Expression of the Drop Function for single
extensions,colback=green!12!white, colframe=green!40!black]

Hence in the case of single extension the Drop Function rewrites as:

\begin{align}
\qdv x{y_{1},\ldots,y_{n}} & =\sum_{\begin{array}{c}
I\subseteq\left\llbracket 1,n\right\rrbracket \\
y_{I} \in C(E)
\end{array}}\left(-1\right)^{|I|}\tr\circ\underbrace{\left[\bigotimes_{i\in I}Q_{0}\left(\boxed{e_{i}}\right)\otimes\Id{K_{I}}\right]}_{=\operator{\rest{\int{\cutpn x}{\cutpn{y_{I}}}}} }
\end{align}

\end{tcolorbox}

\paragraph{Cluster-wise definition of the Drop Function}

We can further refine the expression of the Drop Condition for single
extension, in the form of a factorization, where each term is a single-extension
conflict cluster. 

\begin{tcolorbox}[breakable,enhanced,title=Cluster-Factorization of the Drop Function
, colback=green!12!white, colframe=green!40!black]

\begin{thm}
Let $x$ be configuration and $e_{1},\ldots,e_{k}$ be events such
that $\forall i\in\left\llbracket 1,k\right\rrbracket ,\hspace{1em}x\overset{e_{i}}{\ext}y_{i}$
and $\forall j\in\left\llbracket k+1,n\right\rrbracket ,\hspace{1em}x\overset{e_{j}}{\ext}y_{j}$,

We suppose that for all $a\in x\sqcup e_{i}$ and $b\in x\sqcup e_{j}$,
$a$ is compatible with $b$ - although conflicts within the $\left(e_{i}\right)_{i\in\left\llbracket 1,k\right\rrbracket }$
and within the $\left(e_{j}\right)_{j\in\left\llbracket k+1,k\right\rrbracket }$
are possible. Then,

\begin{gather*}
\qdv x{x\sqcup e_{1},\ldots,x\sqcup e_{n}}\otimes\tr_{Q_{0}\left(\cutpn x\right)}\\
=\\
\qdv x{x\sqcup e_{1},\ldots,x\sqcup e_{k}}\otimes\qdv x{x\sqcup e_{k+1},\ldots,x\sqcup e_{n}}
\end{gather*}
\end{thm}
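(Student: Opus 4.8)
The plan is to expand both sides using the single-extension description of the drop function recalled just above, namely
\[
\qdv x{y_1,\ldots,y_n}=\sum_{\substack{I\subseteq\llbracket 1,n\rrbracket\\ y_I\in C(E)}}(-1)^{|I|}\,\tr\circ\operator{\rest{\int{\cutpn x}{\cutpn{y_I}}}},\qquad \operator{\rest{\int{\cutpn x}{\cutpn{y_I}}}}=\bigotimes_{i\in I}Q_0(\boxed{e_i})\otimes\Id{Q_0(K_I)},
\]
and to match the two resulting inclusion--exclusion sums term by term, up to the canonical reindexing isomorphisms of the symmetric monoidal category of $CPTNI$ maps. The two factors on the right-hand side of the statement are themselves single-extension drop functions with the \emph{same} base configuration $x$, so the formula above applies to them verbatim, with $K$, $R$, $X$ recomputed relative to the restricted event sets $\{e_1,\ldots,e_k\}$ and $\{e_{k+1},\ldots,e_n\}$.

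First I would establish the combinatorial factorization of the index set. For $I\subseteq\llbracket 1,n\rrbracket$ write $I=I_1\sqcup I_2$ with $I_1=I\cap\llbracket 1,k\rrbracket$ and $I_2=I\cap\llbracket k+1,n\rrbracket$; I claim $y_I\in C(E)$ iff $y_{I_1}\in C(E)$ and $y_{I_2}\in C(E)$. Indeed each $e_i$ is enabled at $x$, so $y_I=x\cup\{e_i:i\in I\}$ is automatically down-closed and no element of $x$ conflicts any $e_i$; the only possible conflicts inside $y_I$ are between two of the $e_i$, and by hypothesis none of these crosses the two groups, so $y_I$ is conflict-free precisely when $\{e_i:i\in I_1\}$ and $\{e_j:j\in I_2\}$ are each conflict-free, i.e. when $y_{I_1},y_{I_2}\in C(E)$. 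Together with $(-1)^{|I|}=(-1)^{|I_1|}(-1)^{|I_2|}$, this factors the index set and signs of the sum defining $\qdv x{x\sqcup e_1,\ldots,x\sqcup e_n}$ as a product of the index sets and signs of the two sub-problems.

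Next I would prove the matching operator identity
\[
\tr\circ\operator{\rest{\int{\cutpn x}{\cutpn{y_I}}}}\otimes\tr_{Q_0(\cutpn x)}=\bigl(\tr\circ\operator{\rest{\int{\cutpn x}{\cutpn{y_{I_1}}}}}\bigr)\otimes\bigl(\tr\circ\operator{\rest{\int{\cutpn x}{\cutpn{y_{I_2}}}}}\bigr)
\]
for every admissible $I$. The hypothesis forces $\pre{\boxed{e_i}}\cap\pre{\boxed{e_j}}=\emptyset$ for $i\le k<j$ (a shared pre-condition would put $e_i\#e_j$ by the conflict definition), so $\cutpn x$ splits disjointly into five blocks: the pre-conditions of the fired left events, of the fired right events, of the remaining left events, of the remaining right events, and the conditions $X$ that are pre-conditions of no $e_i$. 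Using that $Q_0$ is multiplicative over disjoint unions of conditions, that $\tr$ is multiplicative over $\otimes$, and reading off $K_I$, $K_{I_1}$, $K_{I_2}$ in terms of these blocks, a block-by-block comparison proves the identity, the transposition witnessing it being the symmetry of $Q_0(\cutpn x)^{\otimes 2}$ that swaps the two copies of $Q_0(\ci c)$ for each condition $\ci c$ that is a pre-condition of some right event $e_{k+1},\ldots,e_n$ --- a symmetry that does not depend on $I$. Summing this identity over the admissible $I$ with signs and invoking the index factorization of the previous step, together with $\qdv x{}=\tr_{Q_0(\cutpn x)}$, yields exactly $\qdv x{x\sqcup e_1,\ldots,x\sqcup e_n}\otimes\tr_{Q_0(\cutpn x)}=\qdv x{x\sqcup e_1,\ldots,x\sqcup e_k}\otimes\qdv x{x\sqcup e_{k+1},\ldots,x\sqcup e_n}$.

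The main obstacle is the bookkeeping in this operator identity, and in particular seeing why the extra factor $\tr_{Q_0(\cutpn x)}$ is needed on the left. The conditions $X$ untouched by any $e_i$ --- and, from the point of view of each sub-problem, the pre-conditions of the \emph{other} group's events --- lie in the ``$K$'' part of both $\operator{\rest{\int{\cutpn x}{\cutpn{y_{I_1}}}}}$ and $\operator{\rest{\int{\cutpn x}{\cutpn{y_{I_2}}}}}$, hence are counted with multiplicity two in the product, whereas they appear only once in $\operator{\rest{\int{\cutpn x}{\cutpn{y_I}}}}$; tensoring the left-hand side with $\tr_{Q_0(\cutpn x)}$ supplies exactly the missing copy of each condition so that the multiplicities match. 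Getting these blocks and multiplicities right --- and checking that the residual transposition is a single, $I$-independent symmetry so that it survives the signed summation --- is where all the care goes; the rest is routine inclusion--exclusion algebra.
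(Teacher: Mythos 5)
Your proposal is correct and follows essentially the same route as the paper: both factor the inclusion--exclusion index set using the cross-compatibility hypothesis (so that $y_I\in C(E)$ iff $y_{I_1},y_{I_2}\in C(E)$ and the signs multiply), and both reduce the claim to the block identity on the constant parts, which in the paper appears as $\Id{K_I}\otimes\Id{K_J}=\Id{K_{I\sqcup J}}\otimes\Id{Q_0(\cutpn x)}$ and in your write-up as the "multiplicity two" accounting that explains the extra $\tr_{Q_0(\cutpn x)}$ factor. The only cosmetic difference is direction: the paper multiplies out the right-hand side and re-indexes by $I'=I\sqcup J$, whereas you split the left-hand side's index set and match terms.
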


\tcbline{}
\begin{proof}
We have 
\begin{alignat*}{1}
rhs & =\left[\sum_{\begin{array}{c}
I\subseteq\left\llbracket 1,k\right\rrbracket \\
x\sqcup e_{I}\in C(E)
\end{array}}\left(-1\right)^{|I|}\tr\left[\bigotimes_{i\in I}Q_{0}\left(\boxed{e_{i}}\right)\otimes\Id{K_{I}}\right]\right]\\
 & \otimes\left[\sum_{\begin{array}{c}
J\subseteq\left\llbracket k+1,n\right\rrbracket \\
x\sqcup e_{J}\in C(E)
\end{array}}\left(-1\right)^{|J|}\tr\left[\bigotimes_{j\in J}Q_{0}\left(\boxed{e_{j}}\right)\otimes\Id{K_{J}}\right]\right]\\
 & =\sum_{\begin{array}{c}
I\subseteq\left\llbracket 1,k\right\rrbracket \\
J\subseteq\left\llbracket k+1,n\right\rrbracket \\
x\sqcup e_{I}\in C(E)\\
x'\sqcup e_{J}\in C(E)
\end{array}}\left(-1\right)^{|I|+|J|}\tr\left[\bigotimes_{i\in I}Q_{0}\left(\boxed{e_{i}}\right)\otimes\Id{K_{I}}\otimes\bigotimes_{i\in J}Q_{0}\left(\boxed{e_{j}}\right)\otimes\Id{K_{J}}\right]
\end{alignat*}

Noticing that for $I\subseteq\left\llbracket 1,k\right\rrbracket $
and $J\subseteq\left\llbracket k+1,n\right\rrbracket $,
\begin{align*}
R_{I} & =\left(\bigsqcup_{i\in\left\llbracket 1,k\right\rrbracket \setminus I}\pre{\boxed{e_{i}}}\sqcup\bigsqcup_{i\in\left\llbracket k+1,n\right\rrbracket \setminus J}\pre{\boxed{e_{i}}}\right)\sqcup\bigsqcup_{i\in J}\pre{\boxed{e_{i}}}\\
 & :=\left(A\right)\sqcup B\\
R_{J} & =\bigsqcup_{i\in\left\llbracket 1,k\right\rrbracket \setminus I}\pre{\boxed{e_{i}}}\sqcup\bigsqcup_{i\in I}\pre{\boxed{e_{i}}}\sqcup\bigsqcup_{i\in\left\llbracket k+1,n\right\rrbracket \setminus J}\pre{\boxed{e_{i}}}\\
 & :=C\sqcup D\sqcup E
\end{align*}
we can rewrite the inner identity product:
\begin{align*}
\Id{K_{I}}\otimes\Id{K_{J}} & =\left[\Id{R_{I}}\otimes\Id X\right]\otimes\left[\Id{R_{J}}\otimes\Id X\right]\\
 & =\left[\Id A\otimes\Id X\right]\otimes\left[\Id{B\sqcup\left(C\sqcup D\sqcup E\right)}\otimes\Id X\right]\\
 & =\Id{K_{I\sqcup J}}\otimes\Id{Q_{0}\left(\cutpn x\right)}
\end{align*}

Hence, considering that the configurations $\begin{array}{c}
x\sqcup e_{I}\\
x\sqcup e_{J}
\end{array}$ are compatible if and only if $x\sqcup e_{I}\sqcup e_{J}$ is a configuration,
we can re-index the sum $I':=I\sqcup J$, finally yielding:
\begin{alignat*}{1}
rhs & =\sum_{\begin{array}{c}
I'\subseteq\left\llbracket 1,n\right\rrbracket \\
x\sqcup e_{I'}
\end{array}}\left(-1\right)^{|I'|}\tr\left[\bigotimes_{i'\in I'}Q_{0}\left(\boxed{e_{i'}}\right)\otimes\Id{K_{I'}}\right]\otimes\tr_{Q_{0}\left(\cutpn x\right)}
\end{alignat*}

Which proves the statement.
\end{proof}
\end{tcolorbox}

This factorization gives us a sufficient condition for the Drop Condition
to be satisfied, where only single extensions on conflict clusters
need to be checked. Indeed, the drop function is positive for the
Löwner order if and only if each of its term in its cluster-factorization
is. Hence the following definition of the Local Drop Condition.

\begin{tcolorbox}[breakable,enhanced,title=Local Drop Condition,colback=yellow!10!white,
colframe=red!60!black]

\begin{defn}
\label{def:Local Drop Condition} A quantum local annotation $Q_{0}$
on an annotated net skeleton $E=\left(P,T,F,Q_{0},H\right)$ satisfies
the Local Drop Condition if :
\end{defn}

\begin{itemize}
\item for all markings/configurations $\cutpn x$, 
\item and $\left(0,\oplus\right)$-conflict clique $C=e_{1},\ldots,e_{k}$
s.t. $x\overset{e_{i}}{\ext}y_{i}$, and s.t. the $e_{i}$'s are mutually
compatible
\end{itemize}
we have : $\qdv x{x\sqcup e_{1},\ldots,x\sqcup e_{k}}\sqsupseteq0$.
\end{tcolorbox}

\paragraph{Further simplifications for conflict cliques}

When a conflict cluster $C=\left\{ e_{1},\ldots,e_{k}\right\} $ is
in fact a clique, composed of events all in mutual conflict, we have
the further simplification: for all $I\subseteq\left\llbracket 1,n\right\rrbracket $,
$x\sqcup e_{I}\in C(E)\iff\left|I\right|\leq1$. Of course, no combination
of more than 1 elements of the $e_{i}$'s is compatible, and hence

\begin{tcolorbox}[breakable,enhanced,colback=green!12!white, colframe=green!40!black]
\begin{align*}
\qdv x{x\sqcup e_{1},\ldots,x\sqcup e_{k}} & =\sum_{\begin{array}{c}
I\subseteq\left\llbracket 1,k\right\rrbracket \\
x\sqcup e_{I}\in C(E)
\end{array}}\left(-1\right)^{|I|}\tr\left[\bigotimes_{i\in I}Q_{0}\left(\boxed{e_{i}}\right)\otimes\Id{K_{I}}\right]\\
 & =\sum_{e\in C}\tr\left[Q_{0}\left(\boxed{e}\right)\otimes\Id{Q_{0}\left(\pre{\left[C\setminus e\right]}\right)}\right]
\end{align*}
\end{tcolorbox}

In the case of conflict cliques, the Local Drop Condition is checked
in linear time $\mathcal{O}\left(\left|C\right|\right)$. 

\subsubsection{Further extensions}

Whether it could be enough to only check maximal clusters (and not
every cluster) in order to ensure the Drop Condition is an open question.
Also, it should be investigated whether a simple decomposition of
the drop condition solely in terms of conflict cliques exists. Thus,
applying the simplification expressed in the last section would yield
a practical and combinatorially efficient way to check the drop-conditions.

\subsection{Local Quantum Occurrence Net }

We conclude this section with our definition of Local Quantum Occurrence
Net, w.r.t. the local properties proved earlier.

\begin{tcolorbox}[breakable,enhanced,title=Local Quantum Occurrence Net,colback=olive!20!white,
colframe=olive!70!black,]

\begin{defn}
\label{def:Local Quantum Occurrence Net}A local Quantum Occurrence
Net $E,Q_{0},H$ is an Occurrence Net $E$ endowed with a local Quantum
Annotation $Q_{0},H$, that satisfies the \emph{Local Obliviousness}
(Definition \ref{def:Local-Obliviousness}) and the \emph{Local Drop
Condition} (Definition \ref{def:Local Drop Condition}). 
\end{defn}

\end{tcolorbox}

\section{Quantum Petri Nets\label{sec:Quantum Petri Nets}}

After having defined Quantum Occurrence Nets, we now ought to establish
a sensible framework for Quantum Petri Nets. Classical Occurrence
Nets and Petri Nets are linked through the process called ``unfolding'',
where to the Petri net is associated one unique Occurrence Net that
encaptions all the causality and concurrency relation of the initial
net. This has been thoroughly explored in the literature.

We mimic this correspondence by first defining Quantum Petri Nets
in relation with their respective causal unfolding ( Section \ref{subsec:Lifting Quantum Occurrence Nets to Quantum Petri Nets}).
Then, we prove that enforcing the local conditions (Local Obliviousness
and Local Drop Condition) on a Petri Net with a Quantum Annotation
is enough to make it a Quantum Petri Net (Section \ref{subsec:Condition for a Locally Annotated Petri Net to be a Quantum Petri Net}).

Finally, we investigate the interaction of two Quantum Petri Nets
(i.e. a stub concept for the composition of Quantum Petri Nets, that
could be developed in further work) in Section \ref{subsec:Interaction of Quantum Petri Nets}.

\subsection{Lifting Quantum Occurrence Nets to Quantum Petri Nets\label{subsec:Lifting Quantum Occurrence Nets to Quantum Petri Nets}}

We define here the unfolding of classical Petri Nets, as their maximal
branching process.

\begin{tcolorbox}[breakable,enhanced,title=Branching Process - Unfolding of a Petri
Net,, colback=purple!5!white, colframe=purple!75!black]

\begin{defn}
\label{def:Branching-Process-Unfolding} {[}from \autocite{haarComputingRevealsRelation2013}{]}
$\rightslice$ A \emph{branching process} of a net system $\mathcal{N}=\left(S,T,W,M_{0}\right)$
is a labeled occurrence net $\beta=(O;p)=(B;E;F;p)$ where the labeling
function $p$ satisfies the following properties:
\begin{enumerate}
\item $p(B)\subseteq S$ and $p(E)\subseteq T$\\
(p preserves the nature of nodes);
\item For every $\boxed{e}\in E$, the restriction of $p$ to $\boxed{e}$
is a bijection between $\pre e$ and $\pre{p\left(\boxed{e}\right)}$
similarly for $\post{\boxed{e}}$ and $\post{p\left(\boxed{e}\right)}$\\
($p$ preserves the environments of transitions);
\item the restriction of $p$ to $Min(O)$ is a bijection between $Min(O)$
and $M_{0}$\\
( $\beta$ ``starts\textquotedbl{} at $M_{0}$);
\item for every $\boxed{e_{1}},\boxed{e_{2}}\in E$, if $\pre{\boxed{e_{1}}}=\pre{\boxed{e_{2}}}$
and $p\left(\boxed{e_{1}}\right)=p\left(\boxed{e_{2}}\right)$ then
$\boxed{e_{1}}=\boxed{e_{2}}$\\
( does not duplicate the transitions ).
\end{enumerate}
\end{defn}

$\rightslice$ The \emph{unfolding of a Petri net} is the unique branching
process that unfolds as much as possible, written $\mathcal{U}\left(\mathcal{N}\right)$
.

\end{tcolorbox}

An example of Branching Process is represented in Figure \ref{fig:Occurrence Net};
it is a prefix of the unfolding of the net in Figure \ref{fig:A-safe-Petri-net},
the $p$ morphism has been made explicit by naming the events in the
unfolding by their corresponding event in the original net.

\paragraph{Defining Quantum Petri Nets from classical unfoldings}

In a first approach, we start by defining a Quantum Petri Net in terms
of the unfolding of a classical Petri Net, that is decorated with
a local Quantum Annotation, and that satisfies the properties of a
Local Quantum Occurrence Net. This yields the following definition.
\begin{tcolorbox}[breakable,enhanced,title=Quantum Petri Net - Unfolding based definition,colback=purple!5!white,
colframe=purple!75!black]

\begin{defn}
\label{def:Quantum Petri Net - Unfolding based definition}A Quantum
Petri Net $\mathtt{N}=\mathcal{N},Q_{0},H$ is formed by a couple
of a Petri net system $\mathcal{N}=\left(P,T,F,\m 0\right)$ and a
local quantum annotation $Q_{0},H$ of the unfolding of $\mathcal{N}$(see
Definition \ref{def:Annotation-of-an}), that satisfies the following
property:
\begin{itemize}
\item The annotated unfolding $\mathcal{U},Q_{0},H$ of the net system $\mathcal{N}$
is a Local Quantum Occurrence Net (Definition \ref{def:Local Quantum Occurrence Net})
- where $Q_{0},H$ are extended naturally to every conditions and
events with the same labels 
\end{itemize}
\end{defn}

\end{tcolorbox}

\begin{rem*}
The condition in the last definition is equivalent to asking for all
annotated branching processes to be Local Quantum Occurrence Net.
\end{rem*}

\subsection{Condition for a Locally Annotated Petri Net to be a Quantum Petri
Net\label{subsec:Condition for a Locally Annotated Petri Net to be a Quantum Petri Net}}

Instead of annotating the \textbackslash emph\{unfolding\} of a classical
Petri net--which is unwieldy--we annotate the Petri net \emph{directly}.
Let $Q_{0}^{\mathrm{PN}}$ denote this local quantum annotation on
the Petri net $\mathcal{N}$. By unfolding, $Q_{0}^{\mathrm{PN}}$
induces a local quantum annotation $Q_{0}^{\mathcal{U}}$ on the classical
unfolding $\mathcal{U}(\mathcal{N})$.

We seek conditions under which this induced annotation makes $\mathbf{N}=\left(\mathcal{N},Q_{0},H\right)$
a \emph{Quantum Petri Net} in the sense of Definition \ref{def:Quantum Petri Net - Unfolding based definition}.

Our main point is that it suffices to enforce the \emph{Local Drop
Condition} and \emph{Local Obliviousness} \emph{on the Petri net itself}:
if $Q_{0}^{\mathrm{PN}}$ satisfies these two local conditions, then
its induced $Q_{0}^{\mathcal{U}}$ equips $\mathcal{U}(\mathcal{N})$
accordingly, and $\mathbf{N}$ is a Quantum Petri Net.

\begin{tcolorbox}[breakable,enhanced,title=Quantum Petri Net - Local definition,colback=purple!5!white,
colframe=purple!75!black]

\begin{thm}
\label{thm:Quantum Petri Net - Local definition} Let $\mathtt{N}=\mathcal{N},Q_{0},H$
be a Locally annotated Petri Net. Then if $Q_{0}$ satisfies the Local
Drop Condition and Local Obliviousness, $\mathtt{N}$ is a Quantum
Petri Net . 
\end{thm}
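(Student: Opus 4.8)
The plan is to reduce the statement about a locally annotated Petri net directly to Definition \ref{def:Quantum Petri Net - Unfolding based definition}, by tracing how the local annotation $Q_0^{\mathrm{PN}}$ on $\mathcal{N}$ transports along the unfolding morphism $p:\mathcal{U}(\mathcal{N})\to\mathcal{N}$ to a local annotation $Q_0^{\mathcal{U}}$ on $\mathcal{U}(\mathcal{N})$, and then showing that the two local hypotheses are preserved by this transport. Concretely, for a condition $\ci b$ of $\mathcal{U}(\mathcal{N})$ set $Q_0^{\mathcal{U}}(\ci b):=Q_0^{\mathrm{PN}}(p(\ci b))$, for an event $\boxed{e}$ set $Q_0^{\mathcal{U}}(\boxed{e}):=Q_0^{\mathrm{PN}}(p(\boxed{e}))$ and $H^{\mathcal{U}}(\boxed{e}):=H^{\mathrm{PN}}(p(\boxed{e}))$. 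The first step is to check this is a well-formed annotation in the sense of Definition \ref{def:Annotation-of-an}: because $p$ restricts to a bijection between $\pre{\boxed{e}}$ and $\pre{p(\boxed{e})}$ and between $\post{\boxed{e}}$ and $\post{p(\boxed{e})}$ (clause 2 of Definition \ref{def:Branching-Process-Unfolding}), we get $Q_0^{\mathcal{U}}(\pre{\boxed{e}})=Q_0^{\mathrm{PN}}(\pre{p(\boxed{e})})$ and likewise for post-conditions, so the signature constraint on $Q_0^{\mathcal{U}}(\boxed{e})$ is exactly the image of the signature constraint on $Q_0^{\mathrm{PN}}(p(\boxed{e}))$. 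This makes $(\mathcal{U}(\mathcal{N}),Q_0^{\mathcal{U}},H^{\mathcal{U}})$ a locally annotated occurrence net.

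Next I would verify Local Obliviousness for $Q_0^{\mathcal{U}}$. This is immediate: an event $\boxed{e}$ of $\mathcal{U}(\mathcal{N})$ is negative iff $p(\boxed{e})$ is negative (polarity is inherited through $p$), and in that case $Q_0^{\mathrm{PN}}(p(\boxed{e}))=\Id{Q_0^{\mathrm{PN}}(\pre{p(\boxed{e})})\otimes H^{\mathrm{PN}}(p(\boxed{e}))}$ by the assumed Local Obliviousness on $\mathcal{N}$, and transporting along the bijections above gives $Q_0^{\mathcal{U}}(\boxed{e})=\Id{Q_0^{\mathcal{U}}(\pre{\boxed{e}})\otimes H^{\mathcal{U}}(\boxed{e})}$, which is exactly Definition \ref{def:Local-Obliviousness} for the unfolding.

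The substantive step is Local Drop Condition for $Q_0^{\mathcal{U}}$. By Definition \ref{def:Local Drop Condition} I must show: for every reachable marking $\cutpn x$ of $\mathcal{U}(\mathcal{N})$ and every $(0,\oplus)$-conflict clique $\{\boxed{e_1},\dots,\boxed{e_k}\}$ of mutually compatible events with $x\overset{e_i}{\ext}y_i$, we have $\qdv x{x\sqcup e_1,\dots,x\sqcup e_k}\sqsupseteq 0$. The key observation is that the expression of the drop function for single extensions derived in Section \ref{subsec:Only checking conflict clusters is enough} --- namely $\sum_{I}(-1)^{|I|}\tr\circ\bigl[\bigotimes_{i\in I}Q_0(\boxed{e_i})\otimes\Id{K_I}\bigr]$ --- depends only on the local data $Q_0(\boxed{e_i})$, the pre-conditions $\pre{\boxed{e_i}}$, and the conflict/compatibility pattern among the $\boxed{e_i}$. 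Since $p$ is a local bijection on pre- and post-conditions and preserves polarities, it sends the configuration $\cutpn x$ and the clique $\{\boxed{e_1},\dots,\boxed{e_k}\}$ to a marking $p(\cutpn x)$ of $\mathcal{N}$ and a family of transitions $\{p(\boxed{e_1}),\dots,p(\boxed{e_k})\}$ that form a $(0,\oplus)$-conflict clique of mutually compatible transitions directly enabled at $p(\cutpn x)$, with the same combinatorial pattern of shared pre-places. Hence the single-extension drop function computed in $\mathcal{U}(\mathcal{N})$ is literally equal (via the transport of tensor factors) to the single-extension drop function computed in $\mathcal{N}$ at $p(\cutpn x)$, which is $\sqsupseteq 0$ by the assumed Local Drop Condition on $\mathcal{N}$. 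Therefore $Q_0^{\mathcal{U}}$ satisfies the Local Drop Condition.

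Combining the last two steps, $(\mathcal{U}(\mathcal{N}),Q_0^{\mathcal{U}},H^{\mathcal{U}})$ is a Local Quantum Occurrence Net by Definition \ref{def:Local Quantum Occurrence Net}, which is exactly the condition required by Definition \ref{def:Quantum Petri Net - Unfolding based definition} for $\mathtt{N}=(\mathcal{N},Q_0,H)$ to be a Quantum Petri Net; the remark following that definition confirms it suffices to check the unfolding (equivalently, all branching processes). The main obstacle I anticipate is bookkeeping rather than conceptual: one must be careful that "reachable markings of $\mathcal{U}(\mathcal{N})$" and "$(0,\oplus)$-conflict cliques of directly-enabled, mutually compatible events" map under $p$ onto the corresponding objects in $\mathcal{N}$ with no loss --- i.e. that $p(\cutpn x)$ is indeed reachable in $\mathcal{N}$ (true because a firing sequence in the unfolding projects to a firing sequence in $\mathcal{N}$) and that compatibility/conflict among the $\boxed{e_i}$ is faithfully reflected through $p$ on their pre-places (true because $p$ restricts to a bijection on pre-sets, so $\pre{\boxed{e_i}}\cap\pre{\boxed{e_j}}=\emptyset$ iff $\pre{p(\boxed{e_i})}\cap\pre{p(\boxed{e_j})}=\emptyset$). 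Once these compatibility/reachability facts are nailed down, the equality of drop functions and hence the preservation of positivity is a direct transport along the bijections, with no new analysis required.
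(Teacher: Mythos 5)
Your proposal is correct and follows essentially the same route as the paper, which argues (much more tersely) that conflict clusters and event environments in $\mathcal{U}(\mathcal{N})$ are in label-respecting bijection with those of $\mathcal{N}$, so both local conditions transport along the unfolding morphism $p$. Your version merely spells out the details the paper leaves implicit (well-formedness of the transported annotation via the pre/post-set bijections, and the literal equality of the single-extension drop functions), so no further comparison is needed.
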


\begin{defn}
This allows us to define a Quantum Petri Net as a Locally annotated
Petri Net that satisfies the Local Drop Condition and the Local Obliviousness.
\end{defn}

\begin{proof}
Let $\mathcal{U}$ be the unfolding of $\mathcal{N}$. Then let us
show that the extended annotation $Q_{0}$ of the annotated occurrence
net $\mathcal{U},Q_{0},H$ also satisfies the Local Drop Condition
and Local Obliviousness--(note the slight abuse of notation, where
we identify $Q_{0}^{PN}$ and $Q_{0}^{\mathcal{U}}$ as $Q_{0}$):
\begin{itemize}
\item There is a bijection between the conflict clusters of $\mathcal{U}$
(quotiented by their sets of labels) and those of $\mathcal{N}$.
Thus the Local Drop condition on $\mathcal{U}$ is enforced if and
only if it is also verified on $\mathcal{N}$
\item Since the environments of events of in the net and its unfolding are
in bijection, the Local Obliviousness on $\mathcal{U}$ is enforced
if and only if it is also verified on $\mathcal{N}$
\end{itemize}
\end{proof}
\end{tcolorbox}

\subsection{Interaction of Quantum Petri Nets\label{subsec:Interaction of Quantum Petri Nets}}

\global\long\def\N#1{\mathcal{N}_{#1}}%

\global\long\def\ptfm#1{\left(P_{#1},T_{#1},F_{#1},\m{#1}\right)}%

In this section, we endow Quantum Petri Net with a stub for a ``composition''
operation, that will be expanded in further work. More precisely,
we first define the Parallel Composition of two Quantum Petri Nets,
and show that the resulting net is also a Quantum Petri Nets ( Section
\ref{subsec:Parallel Composition}). This enables us to investigate
Quantum Petri Nets, that are joined together by merging positive and
negative events (i.e feeding one's inputs to the other's outputs,
irrespectively). Some conditions need to be respected for the obtained
net to remain a Quantum Petri Net. In Section \ref{subsec:Joins},
we give a sufficient condition for the preservation of the drop-condition
after a composition of the sort. Extensions will investigate other
characterizations of compositions preserving QPN properties.

\subsubsection{Parallel Composition\label{subsec:Parallel Composition}}

The parallel composition of two Quantum Petri Nets - i.e. the juxtaposition
of the two - remains a Quantum Petri Net, as per the following definitions.

\begin{tcolorbox}[breakable,enhanced,title=Parallel Composition of Quantum Petri Nets,colback=orange!5!white,
colframe=orange!75!black]

\begin{defn}
Let $\N 1=\ptfm 1,Q_{0}^{(1)},H^{(1)}$ and $\N 2=\ptfm 2,Q_{0}^{(2)},H^{(2)}$
be two Quantum Petri nets. Define the parallel composition of those
nets as 
\[
\N 1||\N 2=\left(P_{1}\sqcup P_{2},T_{1}\sqcup T{}_{2},F_{1}\sqcup F_{2},\m 1\cdot\m 2\right),Q_{0}^{(1)}||Q_{0}^{(2)},H^{(1)}||H^{(2)}
\]
where $Q_{0}^{(1)}||Q_{0}^{(2)}$ is the map that applies $Q_{0}^{(1)}$on
$P_{1}$ and $T_{1}$ (and similarly for $Q_{0}^{(2)}$). Same principle
for $H^{(1,2)}$. 

Then $\N 1||\N 2$ is a Quantum Petri Net.
\end{defn}

\tcbline{}
\begin{proof}
$Q_{0}^{(1)}||Q_{0}^{(2)}$ also verifies Local Obliviousness and
the Local Drop Condition
\end{proof}
\end{tcolorbox}

\subsubsection{Joins Preserving the Drop Condition\label{subsec:Joins}}

One can be interested in the interaction between positive and negative
events sharing the same input and output space, respectively. They
represent dual processes that can accept as input the other's output.
The wanted effect should be a net where the complementary events of
interest have been merged together. We call this operation a \emph{join.
}The single join of two events is defined as follows.

\begin{tcolorbox}[breakable,enhanced,title=Single Join of Two Events, colback=orange!5!white,
colframe=orange!75!black]

\begin{defn}
\label{def:Join of two events}Let $N=\left(P,T,F\right),Q_{0},H$
be an annotated Net Skeleton and $\boxed{e}$ and $\boxed{e'}$ be
two polarized transitions such that $p\left(e\right)=\oplus$, $p\left(e'\right)=\ominus$,
and $H\left(e\right)=H\left(e'\right)$. The annotated net skeleton
resulting from the join of the events $e$ and $e'$ is $N_{e\bowtie e'}=\left(P,T_{e\bowtie e'},F_{e\bowtie e'}\right),{Q_{0}}_{e\bowtie e'},{H}_{e\bowtie e'}$
where:
\end{defn}

\begin{itemize}
\item $T':=T\setminus\left\{ \boxed{e}\cup\boxed{e'}\right\} \cup\boxed{e\bowtie e'}$
and
\item For each arc $\ci a\rightarrow\boxed{e}$ or $\ci b\rightarrow\boxed{e'}$
in $F$ there is a corresponding arc $\ci a\rightarrow\boxed{e\bowtie e'}$
and $\ci b\rightarrow\boxed{e\bowtie e'}$ in $F_{e\bowtie e'}$ (similarly
for $\boxed{e}\rightarrow\ci a$ and $\boxed{e'}\rightarrow\ci b$)
\item Furthermore, ${Q_{0}}_{e\bowtie e'},{H}_{e\bowtie e'}$ is defined
in the natural way, where 
\begin{equation}
\forall\boxed{t}\in T_{e\bowtie e'},\hspace*{1em}{Q_{0}}_{e\bowtie e'}\left(\boxed{t}\right):=\begin{cases}
Q_{0}\left(\boxed{e}\right)\otimes\Id{Q_{0}\left(\pre{\boxed{e'}}\right)} & ,\boxed{t}=\boxed{e\bowtie e'}\\
Q_{0}\left(\boxed{t}\right) & o/w
\end{cases}
\end{equation}
(noticing that the expression $Q_{0}\left(\boxed{e}\right)\otimes\Id{Q_{0}\left(\pre{\boxed{e'}}\right)}$
is in fact the the composition $Q_{0}\left(\boxed{e}\right)\otimes\Id{Q_{0}\left(\pre{\boxed{e'}}\right)}=\left[\Id{Q\left(\post{\boxed{e}}\right)}\otimes Q_{0}\left(\boxed{e'}\right)\right]\circ\left[Q_{0}\left(\boxed{e}\right)\otimes\Id{Q\left(\pre{\boxed{e'}}\right)}\right]$)
\end{itemize}
\end{tcolorbox}

Single joins within a Quantum Petri Net can be realized in sequence,
with successive different pairs of events. But in order for the resulting
net to also be a Quantum Petri Net, some conditions on this merge
should be respected. We introduce in Definition \ref{def:Drop-preserving join}
a sufficient condition for such a sequence to preserve the Quantum
Petri Net properties. We name such joins: ``Drop-Preserving Joins''. 

\begin{tcolorbox}[breakable,enhanced,title=Drop-preserving join, colback=orange!5!white,
colframe=orange!75!black]

\begin{defn}
\label{def:Drop-preserving join}Let $S$ be a net skeleton, and $S'$
be the net obtained after a finite sequence of join operations. The
sequence of join operations is called a drop-preserving join if there
exists within $S$:
\begin{enumerate}
\item $N$ a maximal negative cluster and $P$ a strictly positive cluster
(possibly included in some maximal positive or neutral cluster $\mathds{P}\phantom{}^{0/\oplus}\supseteq P$)\label{enu:point 1}
\item $f$ a bijective map $f:N\rightarrow P$ verifying
\begin{enumerate}
\item if $\boxed{a}^{\ominus}\sim\boxed{b}^{\ominus}$ in $N$ then $f\left(\boxed{a}^{\ominus}\right)\sim f\left(\boxed{b}^{\ominus}\right)$
in $P$\label{enu:point2a}
\item $\forall\boxed{e}\in N,H\left(f\left(\boxed{e}\right)\right)=H\left(\boxed{e}\right)$
\end{enumerate}
\end{enumerate}
such that $S'$ is the successive join of the events $f\left(\boxed{e}\right)\bowtie\boxed{e}$
in $S$ for $\boxed{e}\in N$. The order of the joins does not matter.
\end{defn}

\end{tcolorbox}

\begin{rem}
\label{rem:After-a-drop-preserving} After a drop-preserving join,
the clusters of $S'$ are exactly those of $S$, minus $N$ and $\mathds{P}$,
which are replaced by a new cluster $\mathds{P}\bowtie_{f}N$ which
events are those of $\mathds{P}\setminus P\sqcup\left\{ n\bowtie p|n\in N,p\in P\right\} $.
\end{rem}

\begin{example}
An example of drop preserving join is illustrated in Figure \ref{fig:Illustration of a drop preserving join}.
\end{example}

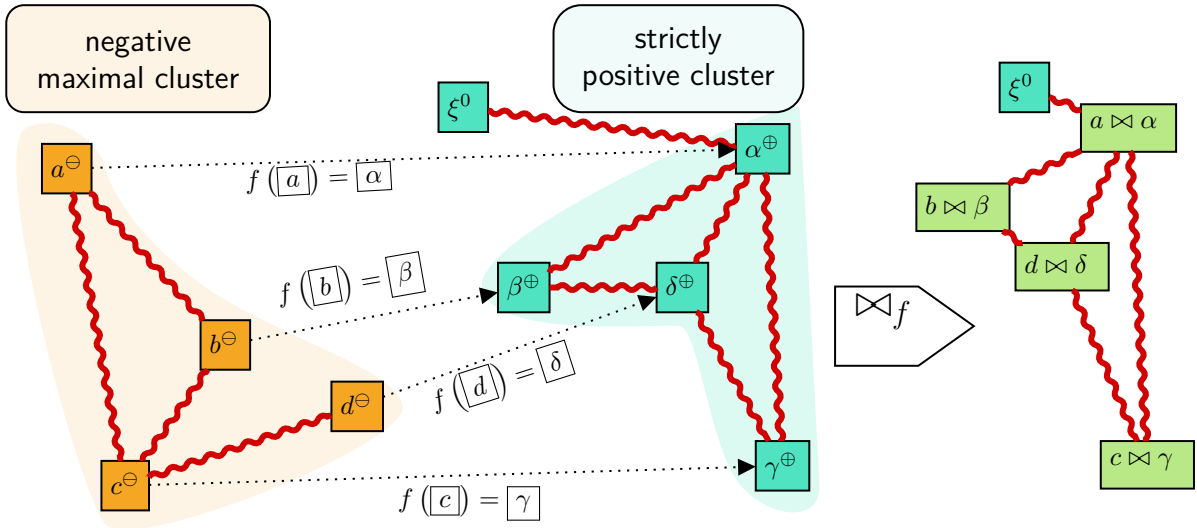
\begin{figure}[H]
\centering

\tikzset{every picture/.style={line width=0.75pt}} 

\begin{tikzpicture}[x=0.75pt,y=0.75pt,yscale=-1,xscale=1]

\draw  [draw opacity=0][fill={rgb, 255:red, 245; green, 166; blue, 35 }  ,fill opacity=0.12 ] (9.99,70) .. controls (29.99,60) and (48.99,81.82) .. (49.99,90) .. controls (50.99,98.18) and (64.99,125.18) .. (99.99,150) .. controls (134.99,174.82) and (195.99,190.82) .. (199.99,210) .. controls (203.99,229.18) and (63.99,285.82) .. (49.99,270) .. controls (35.99,254.18) and (-10.01,80) .. (9.99,70) -- cycle ;
\draw  [draw opacity=0][fill={rgb, 255:red, 80; green, 227; blue, 194 }  ,fill opacity=0.2 ] (239.99,150) .. controls (235.99,122.18) and (398.99,51.82) .. (399.99,60) .. controls (400.99,68.18) and (409,229.82) .. (410,250) .. controls (411,270.18) and (384.67,267.18) .. (370,250) .. controls (355.33,232.82) and (349,171.82) .. (340,170) .. controls (331,168.18) and (243.99,177.82) .. (239.99,150) -- cycle ;
\draw   (418,150) -- (460,150) -- (488,170) -- (460,190) -- (418,190) -- cycle ;

\draw  [fill={rgb, 255:red, 245; green, 166; blue, 35 }  ,fill opacity=1 ]  (17.99,78) -- (42.99,78) -- (42.99,103) -- (17.99,103) -- cycle  ;
\draw (20.99,82.4) node [anchor=north west][inner sep=0.75pt]    {$a^{\ominus }$};
\draw  [fill={rgb, 255:red, 245; green, 166; blue, 35 }  ,fill opacity=1 ]  (97.99,167) -- (122.99,167) -- (122.99,192) -- (97.99,192) -- cycle  ;
\draw (100.99,171.4) node [anchor=north west][inner sep=0.75pt]    {$b^{\ominus }$};
\draw  [fill={rgb, 255:red, 245; green, 166; blue, 35 }  ,fill opacity=1 ]  (163.99,198) -- (189.99,198) -- (189.99,223) -- (163.99,223) -- cycle  ;
\draw (166.99,202.4) node [anchor=north west][inner sep=0.75pt]    {$d^{\ominus }$};
\draw  [fill={rgb, 255:red, 245; green, 166; blue, 35 }  ,fill opacity=1 ]  (47.99,238) -- (71.99,238) -- (71.99,263) -- (47.99,263) -- cycle  ;
\draw (50.99,242.4) node [anchor=north west][inner sep=0.75pt]    {$c^{\ominus }$};
\draw  [fill={rgb, 255:red, 80; green, 227; blue, 194 }  ,fill opacity=1 ]  (328,138) -- (354,138) -- (354,163) -- (328,163) -- cycle  ;
\draw (331,142.4) node [anchor=north west][inner sep=0.75pt]    {$\delta ^{\oplus }$};
\draw  [fill={rgb, 255:red, 80; green, 227; blue, 194 }  ,fill opacity=1 ]  (367.99,68) -- (394.99,68) -- (394.99,93) -- (367.99,93) -- cycle  ;
\draw (370.99,72.4) node [anchor=north west][inner sep=0.75pt]    {$\alpha ^{\oplus }$};
\draw  [fill={rgb, 255:red, 80; green, 227; blue, 194 }  ,fill opacity=1 ]  (378,228) -- (405,228) -- (405,253) -- (378,253) -- cycle  ;
\draw (381,232.4) node [anchor=north west][inner sep=0.75pt]    {$\gamma ^{\oplus }$};
\draw  [fill={rgb, 255:red, 80; green, 227; blue, 194 }  ,fill opacity=1 ]  (247.99,138) -- (273.99,138) -- (273.99,163) -- (247.99,163) -- cycle  ;
\draw (250.99,142.4) node [anchor=north west][inner sep=0.75pt]    {$\beta ^{\oplus }$};
\draw  [fill={rgb, 255:red, 80; green, 227; blue, 194 }  ,fill opacity=1 ]  (218,47) -- (243,47) -- (243,72) -- (218,72) -- cycle  ;
\draw (221,51.4) node [anchor=north west][inner sep=0.75pt]    {$\xi ^{0}$};
\draw (118.36,88.39) node [anchor=north west][inner sep=0.75pt]  [rotate=-356.76]  {$f\left(\boxed{a}\right) =\boxed{\alpha }$};
\draw (133.26,143.67) node [anchor=north west][inner sep=0.75pt]  [rotate=-350.09]  {$f\left(\boxed{b}\right) =\boxed{\beta }$};
\draw (209.62,198.01) node [anchor=north west][inner sep=0.75pt]  [rotate=-342.39]  {$f\left(\boxed{d}\right) =\boxed{\delta }$};
\draw (195.99,248.4) node [anchor=north west][inner sep=0.75pt]    {$f\left(\boxed{c}\right) =\boxed{\gamma }$};
\draw  [fill={rgb, 255:red, 245; green, 166; blue, 35 }  ,fill opacity=0.12 ]  (0,17) .. controls (0,12.03) and (4.03,8) .. (9,8) -- (124,8) .. controls (128.97,8) and (133,12.03) .. (133,17) -- (133,53) .. controls (133,57.97) and (128.97,62) .. (124,62) -- (9,62) .. controls (4.03,62) and (0,57.97) .. (0,53) -- cycle  ;
\draw (66.5,35) node  [font=\large] [align=left] {\begin{minipage}[lt]{88.08pt}\setlength\topsep{0pt}
\begin{center}
negative \\maximal cluster
\end{center}

\end{minipage}};
\draw  [fill={rgb, 255:red, 80; green, 227; blue, 194 }  ,fill opacity=0.08 ]  (276,22) .. controls (276,14.27) and (282.27,8) .. (290,8) -- (388,8) .. controls (395.73,8) and (402,14.27) .. (402,22) -- (402,48) .. controls (402,55.73) and (395.73,62) .. (388,62) -- (290,62) .. controls (282.27,62) and (276,55.73) .. (276,48) -- cycle  ;
\draw (339,35) node  [font=\large] [align=left] {\begin{minipage}[lt]{83.38pt}\setlength\topsep{0pt}
\begin{center}
strictly \\positive cluster
\end{center}

\end{minipage}};
\draw  [fill={rgb, 255:red, 184; green, 233; blue, 134 }  ,fill opacity=1 ]  (509,128) -- (556,128) -- (556,152) -- (509,152) -- cycle  ;
\draw (512,132.4) node [anchor=north west][inner sep=0.75pt]    {$d\bowtie \delta $};
\draw  [fill={rgb, 255:red, 184; green, 233; blue, 134 }  ,fill opacity=1 ]  (542,58) -- (590,58) -- (590,82) -- (542,82) -- cycle  ;
\draw (545,62.4) node [anchor=north west][inner sep=0.75pt]    {$a\bowtie \alpha $};
\draw  [fill={rgb, 255:red, 184; green, 233; blue, 134 }  ,fill opacity=1 ]  (552,228) -- (599,228) -- (599,252) -- (552,252) -- cycle  ;
\draw (555,232.4) node [anchor=north west][inner sep=0.75pt]    {$c\bowtie \gamma $};
\draw  [fill={rgb, 255:red, 184; green, 233; blue, 134 }  ,fill opacity=1 ]  (459,98) -- (506,98) -- (506,122) -- (459,122) -- cycle  ;
\draw (462,102.4) node [anchor=north west][inner sep=0.75pt]    {$b\bowtie \beta $};
\draw  [fill={rgb, 255:red, 80; green, 227; blue, 194 }  ,fill opacity=1 ]  (501,37) -- (526,37) -- (526,62) -- (501,62) -- cycle  ;
\draw (504,41.4) node [anchor=north west][inner sep=0.75pt]    {$\xi ^{0}$};
\draw (442.5,163) node  [font=\LARGE]  {$\bowtie_{f}$};
\draw [color={rgb, 255:red, 208; green, 2; blue, 2 }  ,draw opacity=1 ][line width=2.25]    (41.73,103) .. controls (44.08,103.13) and (45.2,104.37) .. (45.07,106.72) .. controls (44.94,109.07) and (46.06,110.31) .. (48.41,110.44) .. controls (50.76,110.57) and (51.88,111.81) .. (51.76,114.16) .. controls (51.63,116.51) and (52.75,117.75) .. (55.1,117.87) .. controls (57.45,118) and (58.57,119.24) .. (58.44,121.59) .. controls (58.31,123.94) and (59.43,125.18) .. (61.78,125.31) .. controls (64.13,125.44) and (65.25,126.68) .. (65.13,129.03) .. controls (65,131.38) and (66.12,132.62) .. (68.47,132.75) .. controls (70.82,132.88) and (71.94,134.12) .. (71.81,136.47) .. controls (71.68,138.82) and (72.8,140.06) .. (75.15,140.19) .. controls (77.5,140.31) and (78.62,141.55) .. (78.5,143.9) .. controls (78.37,146.25) and (79.49,147.49) .. (81.84,147.62) .. controls (84.19,147.75) and (85.31,148.99) .. (85.18,151.34) .. controls (85.05,153.69) and (86.17,154.93) .. (88.52,155.06) .. controls (90.87,155.19) and (91.99,156.43) .. (91.87,158.78) .. controls (91.74,161.13) and (92.86,162.37) .. (95.21,162.5) .. controls (97.56,162.63) and (98.68,163.87) .. (98.55,166.22) -- (99.26,167) -- (99.26,167) ;
\draw [color={rgb, 255:red, 208; green, 2; blue, 2 }  ,draw opacity=1 ][line width=2.25]    (32.8,103) .. controls (34.74,104.34) and (35.04,105.98) .. (33.7,107.92) .. controls (32.37,109.86) and (32.67,111.5) .. (34.61,112.83) .. controls (36.55,114.17) and (36.85,115.81) .. (35.52,117.75) .. controls (34.18,119.69) and (34.48,121.33) .. (36.42,122.67) .. controls (38.36,124.01) and (38.66,125.65) .. (37.33,127.59) .. controls (36,129.53) and (36.3,131.17) .. (38.24,132.5) .. controls (40.18,133.84) and (40.48,135.48) .. (39.14,137.42) .. controls (37.81,139.36) and (38.11,141) .. (40.05,142.34) .. controls (41.99,143.67) and (42.29,145.31) .. (40.96,147.25) .. controls (39.62,149.19) and (39.92,150.83) .. (41.86,152.17) .. controls (43.8,153.51) and (44.1,155.15) .. (42.77,157.09) .. controls (41.44,159.03) and (41.74,160.67) .. (43.68,162.01) .. controls (45.62,163.34) and (45.92,164.98) .. (44.58,166.92) .. controls (43.25,168.86) and (43.55,170.5) .. (45.49,171.84) .. controls (47.43,173.18) and (47.73,174.82) .. (46.4,176.76) .. controls (45.06,178.7) and (45.36,180.34) .. (47.3,181.67) .. controls (49.24,183.01) and (49.54,184.65) .. (48.21,186.59) .. controls (46.88,188.53) and (47.18,190.17) .. (49.12,191.51) .. controls (51.06,192.85) and (51.36,194.49) .. (50.02,196.43) .. controls (48.69,198.37) and (48.99,200.01) .. (50.93,201.34) .. controls (52.87,202.68) and (53.17,204.32) .. (51.84,206.26) .. controls (50.5,208.2) and (50.8,209.84) .. (52.74,211.18) .. controls (54.68,212.51) and (54.98,214.15) .. (53.65,216.09) .. controls (52.32,218.03) and (52.62,219.67) .. (54.56,221.01) .. controls (56.5,222.35) and (56.8,223.99) .. (55.46,225.93) .. controls (54.13,227.87) and (54.43,229.51) .. (56.37,230.85) .. controls (58.31,232.18) and (58.61,233.82) .. (57.28,235.76) -- (57.69,238) -- (57.69,238) ;
\draw [color={rgb, 255:red, 208; green, 2; blue, 2 }  ,draw opacity=1 ][line width=2.25]    (68.88,238) .. controls (68.49,235.68) and (69.45,234.32) .. (71.78,233.93) .. controls (74.11,233.54) and (75.07,232.18) .. (74.68,229.85) .. controls (74.29,227.52) and (75.25,226.17) .. (77.58,225.78) .. controls (79.91,225.39) and (80.87,224.03) .. (80.48,221.7) .. controls (80.09,219.38) and (81.05,218.02) .. (83.37,217.63) .. controls (85.7,217.24) and (86.66,215.88) .. (86.27,213.55) .. controls (85.88,211.22) and (86.84,209.87) .. (89.17,209.48) .. controls (91.5,209.09) and (92.46,207.73) .. (92.07,205.4) .. controls (91.68,203.07) and (92.64,201.72) .. (94.97,201.33) .. controls (97.29,200.94) and (98.25,199.58) .. (97.86,197.26) .. controls (97.47,194.93) and (98.43,193.57) .. (100.76,193.18) -- (101.6,192) -- (101.6,192) ;
\draw [color={rgb, 255:red, 208; green, 2; blue, 2 }  ,draw opacity=1 ][line width=2.25]    (71.99,246.4) .. controls (73.03,244.28) and (74.61,243.74) .. (76.72,244.78) .. controls (78.84,245.82) and (80.42,245.28) .. (81.45,243.16) .. controls (82.49,241.04) and (84.07,240.5) .. (86.19,241.54) .. controls (88.3,242.58) and (89.88,242.04) .. (90.92,239.93) .. controls (91.95,237.81) and (93.53,237.27) .. (95.65,238.31) .. controls (97.77,239.35) and (99.35,238.81) .. (100.38,236.69) .. controls (101.42,234.58) and (103,234.04) .. (105.11,235.08) .. controls (107.23,236.12) and (108.81,235.58) .. (109.84,233.46) .. controls (110.87,231.34) and (112.45,230.8) .. (114.57,231.84) .. controls (116.69,232.88) and (118.27,232.34) .. (119.3,230.22) .. controls (120.35,228.11) and (121.93,227.57) .. (124.04,228.61) .. controls (126.16,229.65) and (127.74,229.11) .. (128.77,226.99) .. controls (129.8,224.87) and (131.38,224.33) .. (133.5,225.37) .. controls (135.62,226.41) and (137.2,225.87) .. (138.23,223.75) .. controls (139.27,221.64) and (140.85,221.1) .. (142.96,222.14) .. controls (145.08,223.18) and (146.66,222.64) .. (147.69,220.52) .. controls (148.72,218.4) and (150.3,217.86) .. (152.42,218.9) .. controls (154.54,219.94) and (156.12,219.4) .. (157.15,217.28) .. controls (158.19,215.17) and (159.77,214.63) .. (161.88,215.67) -- (163.99,214.94) -- (163.99,214.94) ;
\draw [color={rgb, 255:red, 208; green, 2; blue, 2 }  ,draw opacity=1 ][line width=2.25]    (348.23,138) .. controls (347.62,135.73) and (348.46,134.28) .. (350.73,133.67) .. controls (353.01,133.06) and (353.85,131.62) .. (353.24,129.34) .. controls (352.63,127.07) and (353.47,125.63) .. (355.74,125.02) .. controls (358.02,124.41) and (358.86,122.97) .. (358.25,120.69) .. controls (357.64,118.41) and (358.47,116.97) .. (360.75,116.36) .. controls (363.03,115.75) and (363.86,114.31) .. (363.25,112.03) .. controls (362.64,109.75) and (363.48,108.31) .. (365.76,107.7) .. controls (368.03,107.09) and (368.87,105.65) .. (368.26,103.38) .. controls (367.65,101.1) and (368.48,99.66) .. (370.76,99.05) .. controls (373.04,98.44) and (373.88,97) .. (373.27,94.72) -- (374.26,93) -- (374.26,93) ;
\draw [color={rgb, 255:red, 208; green, 2; blue, 2 }  ,draw opacity=1 ][line width=2.25]    (328,150.5) .. controls (326.33,152.17) and (324.67,152.17) .. (323,150.5) .. controls (321.33,148.83) and (319.67,148.83) .. (318,150.5) .. controls (316.33,152.17) and (314.67,152.17) .. (313,150.5) .. controls (311.33,148.83) and (309.67,148.83) .. (308,150.5) .. controls (306.33,152.17) and (304.67,152.17) .. (303,150.5) .. controls (301.33,148.83) and (299.67,148.83) .. (298,150.5) .. controls (296.33,152.17) and (294.67,152.17) .. (293,150.5) .. controls (291.33,148.83) and (289.67,148.83) .. (288,150.5) .. controls (286.33,152.17) and (284.67,152.17) .. (283,150.5) .. controls (281.33,148.83) and (279.67,148.83) .. (278,150.5) -- (273.99,150.5) -- (273.99,150.5) ;
\draw [color={rgb, 255:red, 208; green, 2; blue, 2 }  ,draw opacity=1 ][line width=2.25]    (273.99,142.95) .. controls (274.6,140.67) and (276.04,139.83) .. (278.32,140.44) .. controls (280.6,141.05) and (282.04,140.21) .. (282.64,137.93) .. controls (283.24,135.65) and (284.68,134.81) .. (286.96,135.41) .. controls (289.24,136.02) and (290.68,135.18) .. (291.29,132.9) .. controls (291.89,130.62) and (293.33,129.78) .. (295.61,130.39) .. controls (297.89,131) and (299.33,130.16) .. (299.93,127.88) .. controls (300.54,125.6) and (301.98,124.76) .. (304.26,125.37) .. controls (306.54,125.98) and (307.98,125.14) .. (308.58,122.86) .. controls (309.18,120.58) and (310.62,119.74) .. (312.9,120.34) .. controls (315.18,120.95) and (316.62,120.11) .. (317.23,117.83) .. controls (317.83,115.55) and (319.27,114.71) .. (321.55,115.32) .. controls (323.83,115.93) and (325.27,115.09) .. (325.87,112.81) .. controls (326.48,110.53) and (327.92,109.69) .. (330.2,110.3) .. controls (332.48,110.91) and (333.92,110.07) .. (334.52,107.79) .. controls (335.12,105.51) and (336.56,104.67) .. (338.84,105.27) .. controls (341.12,105.88) and (342.56,105.04) .. (343.17,102.76) .. controls (343.77,100.48) and (345.21,99.64) .. (347.49,100.25) .. controls (349.77,100.86) and (351.21,100.02) .. (351.81,97.74) .. controls (352.42,95.46) and (353.86,94.62) .. (356.14,95.23) .. controls (358.42,95.84) and (359.86,95) .. (360.46,92.72) .. controls (361.06,90.44) and (362.5,89.6) .. (364.78,90.21) -- (367.99,88.34) -- (367.99,88.34) ;
\draw [color={rgb, 255:red, 208; green, 2; blue, 2 }  ,draw opacity=1 ][line width=2.25]    (382.27,93) .. controls (384.04,94.56) and (384.15,96.22) .. (382.59,97.99) .. controls (381.03,99.76) and (381.13,101.42) .. (382.9,102.98) .. controls (384.67,104.54) and (384.77,106.2) .. (383.21,107.97) .. controls (381.65,109.74) and (381.75,111.4) .. (383.52,112.96) .. controls (385.29,114.52) and (385.39,116.18) .. (383.83,117.95) .. controls (382.28,119.72) and (382.38,121.38) .. (384.15,122.94) .. controls (385.92,124.5) and (386.02,126.16) .. (384.46,127.93) .. controls (382.9,129.7) and (383,131.36) .. (384.77,132.92) .. controls (386.54,134.48) and (386.64,136.14) .. (385.08,137.91) .. controls (383.53,139.68) and (383.63,141.34) .. (385.4,142.9) .. controls (387.17,144.46) and (387.27,146.12) .. (385.71,147.89) .. controls (384.15,149.66) and (384.25,151.32) .. (386.02,152.88) .. controls (387.79,154.44) and (387.89,156.1) .. (386.33,157.87) .. controls (384.77,159.64) and (384.87,161.3) .. (386.64,162.86) .. controls (388.41,164.42) and (388.51,166.08) .. (386.96,167.85) .. controls (385.4,169.62) and (385.5,171.28) .. (387.27,172.84) .. controls (389.04,174.4) and (389.14,176.06) .. (387.58,177.83) .. controls (386.02,179.6) and (386.12,181.26) .. (387.89,182.82) .. controls (389.66,184.38) and (389.76,186.04) .. (388.2,187.81) .. controls (386.65,189.58) and (386.75,191.24) .. (388.52,192.8) .. controls (390.29,194.37) and (390.39,196.03) .. (388.83,197.8) .. controls (387.27,199.57) and (387.37,201.23) .. (389.14,202.79) .. controls (390.91,204.35) and (391.01,206.01) .. (389.45,207.78) .. controls (387.9,209.55) and (388,211.21) .. (389.77,212.77) .. controls (391.54,214.33) and (391.64,215.99) .. (390.08,217.76) .. controls (388.52,219.53) and (388.62,221.19) .. (390.39,222.75) .. controls (392.16,224.31) and (392.26,225.97) .. (390.7,227.74) -- (390.72,228) -- (390.72,228) ;
\draw [color={rgb, 255:red, 208; green, 2; blue, 2 }  ,draw opacity=1 ][line width=2.25]    (367.99,78.62) .. controls (366.11,80.04) and (364.46,79.81) .. (363.04,77.93) .. controls (361.62,76.05) and (359.97,75.82) .. (358.09,77.24) .. controls (356.21,78.67) and (354.56,78.44) .. (353.14,76.56) .. controls (351.71,74.68) and (350.06,74.45) .. (348.18,75.87) .. controls (346.3,77.29) and (344.65,77.06) .. (343.23,75.18) .. controls (341.81,73.3) and (340.16,73.07) .. (338.28,74.49) .. controls (336.4,75.91) and (334.75,75.68) .. (333.33,73.8) .. controls (331.9,71.92) and (330.25,71.69) .. (328.37,73.11) .. controls (326.49,74.53) and (324.84,74.3) .. (323.42,72.42) .. controls (322,70.54) and (320.35,70.31) .. (318.47,71.73) .. controls (316.59,73.16) and (314.94,72.93) .. (313.52,71.05) .. controls (312.09,69.17) and (310.44,68.94) .. (308.56,70.36) .. controls (306.68,71.78) and (305.03,71.55) .. (303.61,69.67) .. controls (302.19,67.79) and (300.54,67.56) .. (298.66,68.98) .. controls (296.78,70.4) and (295.13,70.17) .. (293.71,68.29) .. controls (292.29,66.41) and (290.64,66.18) .. (288.76,67.6) .. controls (286.88,69.02) and (285.23,68.79) .. (283.8,66.91) .. controls (282.38,65.03) and (280.73,64.8) .. (278.85,66.22) .. controls (276.97,67.65) and (275.32,67.42) .. (273.9,65.54) .. controls (272.48,63.66) and (270.83,63.43) .. (268.95,64.85) .. controls (267.07,66.27) and (265.42,66.04) .. (263.99,64.16) .. controls (262.57,62.28) and (260.92,62.05) .. (259.04,63.47) .. controls (257.16,64.89) and (255.51,64.66) .. (254.09,62.78) .. controls (252.67,60.9) and (251.02,60.67) .. (249.14,62.09) .. controls (247.26,63.51) and (245.61,63.28) .. (244.18,61.4) -- (243,61.24) -- (243,61.24) ;
\draw [line width=0.75]  [dash pattern={on 0.84pt off 2.51pt}]  (122.99,177.09) -- (245.05,153.57) ;
\draw [shift={(247.99,153)}, rotate = 169.09] [fill={rgb, 255:red, 0; green, 0; blue, 0 }  ][line width=0.08]  [draw opacity=0] (8.93,-4.29) -- (0,0) -- (8.93,4.29) -- cycle    ;
\draw [line width=0.75]  [dash pattern={on 0.84pt off 2.51pt}]  (71.99,250.14) -- (375,241) ;
\draw [shift={(378,240.91)}, rotate = 178.27] [fill={rgb, 255:red, 0; green, 0; blue, 0 }  ][line width=0.08]  [draw opacity=0] (8.93,-4.29) -- (0,0) -- (8.93,4.29) -- cycle    ;
\draw [line width=0.75]  [dash pattern={on 0.84pt off 2.51pt}]  (42.99,90.14) -- (364.99,80.97) ;
\draw [shift={(367.99,80.88)}, rotate = 178.37] [fill={rgb, 255:red, 0; green, 0; blue, 0 }  ][line width=0.08]  [draw opacity=0] (8.93,-4.29) -- (0,0) -- (8.93,4.29) -- cycle    ;
\draw [color={rgb, 255:red, 208; green, 2; blue, 2 }  ,draw opacity=1 ][line width=2.25]    (348.01,163) .. controls (350.28,163.64) and (351.1,165.09) .. (350.46,167.36) .. controls (349.83,169.63) and (350.64,171.08) .. (352.91,171.72) .. controls (355.18,172.36) and (355.99,173.81) .. (355.35,176.08) .. controls (354.72,178.35) and (355.53,179.8) .. (357.8,180.44) .. controls (360.07,181.08) and (360.88,182.53) .. (360.25,184.8) .. controls (359.61,187.07) and (360.42,188.52) .. (362.69,189.16) .. controls (364.96,189.8) and (365.77,191.25) .. (365.14,193.52) .. controls (364.51,195.79) and (365.32,197.24) .. (367.59,197.88) .. controls (369.86,198.52) and (370.67,199.97) .. (370.03,202.24) .. controls (369.4,204.51) and (370.21,205.96) .. (372.48,206.6) .. controls (374.75,207.24) and (375.57,208.7) .. (374.93,210.97) .. controls (374.29,213.24) and (375.1,214.69) .. (377.37,215.33) .. controls (379.64,215.97) and (380.45,217.42) .. (379.82,219.69) .. controls (379.19,221.96) and (380,223.41) .. (382.27,224.05) -- (384.49,228) -- (384.49,228) ;
\draw [line width=0.75]  [dash pattern={on 0.84pt off 2.51pt}]  (189.99,205.74) -- (325.18,156.29) ;
\draw [shift={(328,155.26)}, rotate = 159.91] [fill={rgb, 255:red, 0; green, 0; blue, 0 }  ][line width=0.08]  [draw opacity=0] (8.93,-4.29) -- (0,0) -- (8.93,4.29) -- cycle    ;
\draw [color={rgb, 255:red, 208; green, 2; blue, 2 }  ,draw opacity=1 ][line width=2.25]    (538.24,128) .. controls (537.46,125.78) and (538.18,124.28) .. (540.4,123.49) .. controls (542.62,122.7) and (543.34,121.2) .. (542.56,118.98) .. controls (541.78,116.76) and (542.5,115.26) .. (544.72,114.47) .. controls (546.94,113.68) and (547.66,112.18) .. (546.88,109.96) .. controls (546.09,107.74) and (546.81,106.24) .. (549.03,105.45) .. controls (551.25,104.66) and (551.97,103.16) .. (551.19,100.94) .. controls (550.41,98.72) and (551.13,97.22) .. (553.35,96.43) .. controls (555.57,95.64) and (556.29,94.14) .. (555.51,91.92) .. controls (554.73,89.7) and (555.45,88.2) .. (557.67,87.41) .. controls (559.89,86.62) and (560.61,85.12) .. (559.83,82.9) -- (560.26,82) -- (560.26,82) ;
\draw [color={rgb, 255:red, 208; green, 2; blue, 2 }  ,draw opacity=1 ][line width=2.25]    (512.5,128) .. controls (510.21,128.57) and (508.78,127.72) .. (508.21,125.43) .. controls (507.64,123.14) and (506.22,122.29) .. (503.93,122.86) -- (502.5,122) -- (502.5,122) ;
\draw [color={rgb, 255:red, 208; green, 2; blue, 2 }  ,draw opacity=1 ][line width=2.25]    (506,98.74) .. controls (506.78,96.52) and (508.28,95.8) .. (510.51,96.58) .. controls (512.73,97.36) and (514.23,96.64) .. (515.02,94.42) .. controls (515.81,92.2) and (517.31,91.48) .. (519.53,92.26) .. controls (521.75,93.04) and (523.25,92.32) .. (524.04,90.1) .. controls (524.83,87.88) and (526.33,87.16) .. (528.55,87.94) .. controls (530.77,88.72) and (532.27,88) .. (533.06,85.78) .. controls (533.85,83.56) and (535.35,82.84) .. (537.57,83.62) -- (542,81.5) -- (542,81.5) ;
\draw [color={rgb, 255:red, 208; green, 2; blue, 2 }  ,draw opacity=1 ][line width=2.25]    (566.67,82) .. controls (568.43,83.57) and (568.52,85.24) .. (566.95,86.99) .. controls (565.38,88.75) and (565.47,90.41) .. (567.23,91.98) .. controls (568.99,93.55) and (569.08,95.22) .. (567.51,96.98) .. controls (565.94,98.74) and (566.03,100.4) .. (567.79,101.97) .. controls (569.55,103.54) and (569.64,105.2) .. (568.07,106.96) .. controls (566.5,108.71) and (566.59,110.38) .. (568.34,111.95) .. controls (570.1,113.52) and (570.19,115.19) .. (568.62,116.95) .. controls (567.05,118.71) and (567.14,120.37) .. (568.9,121.94) .. controls (570.66,123.51) and (570.75,125.17) .. (569.18,126.93) .. controls (567.61,128.69) and (567.7,130.35) .. (569.46,131.92) .. controls (571.22,133.49) and (571.31,135.15) .. (569.74,136.91) .. controls (568.17,138.67) and (568.26,140.34) .. (570.02,141.91) .. controls (571.78,143.48) and (571.87,145.14) .. (570.3,146.9) .. controls (568.73,148.66) and (568.82,150.32) .. (570.58,151.89) .. controls (572.34,153.46) and (572.43,155.12) .. (570.86,156.88) .. controls (569.29,158.63) and (569.38,160.3) .. (571.13,161.88) .. controls (572.89,163.45) and (572.98,165.11) .. (571.41,166.87) .. controls (569.84,168.63) and (569.93,170.29) .. (571.69,171.86) .. controls (573.45,173.43) and (573.54,175.09) .. (571.97,176.85) .. controls (570.4,178.61) and (570.49,180.27) .. (572.25,181.84) .. controls (574.01,183.41) and (574.1,185.08) .. (572.53,186.84) .. controls (570.96,188.6) and (571.05,190.26) .. (572.81,191.83) .. controls (574.57,193.4) and (574.66,195.06) .. (573.09,196.82) .. controls (571.52,198.58) and (571.61,200.24) .. (573.37,201.81) .. controls (575.13,203.38) and (575.22,205.05) .. (573.65,206.81) .. controls (572.08,208.56) and (572.17,210.23) .. (573.92,211.8) .. controls (575.68,213.37) and (575.77,215.03) .. (574.2,216.79) .. controls (572.63,218.55) and (572.72,220.21) .. (574.48,221.78) .. controls (576.24,223.35) and (576.33,225.01) .. (574.76,226.77) -- (574.83,228) -- (574.83,228) ;
\draw [color={rgb, 255:red, 208; green, 2; blue, 2 }  ,draw opacity=1 ][line width=2.25]    (542,60.63) .. controls (539.84,61.58) and (538.29,60.97) .. (537.34,58.81) .. controls (536.39,56.65) and (534.84,56.04) .. (532.68,56.99) .. controls (530.52,57.94) and (528.97,57.33) .. (528.03,55.17) -- (526,54.38) -- (526,54.38) ;
\draw [color={rgb, 255:red, 208; green, 2; blue, 2 }  ,draw opacity=1 ][line width=2.25]    (537.66,152) .. controls (539.85,152.87) and (540.51,154.4) .. (539.64,156.59) .. controls (538.77,158.78) and (539.42,160.31) .. (541.61,161.19) .. controls (543.8,162.06) and (544.46,163.59) .. (543.59,165.78) .. controls (542.72,167.97) and (543.37,169.5) .. (545.56,170.37) .. controls (547.75,171.24) and (548.41,172.78) .. (547.54,174.97) .. controls (546.67,177.16) and (547.32,178.69) .. (549.51,179.56) .. controls (551.7,180.43) and (552.36,181.96) .. (551.49,184.15) .. controls (550.62,186.34) and (551.27,187.87) .. (553.46,188.75) .. controls (555.65,189.62) and (556.31,191.15) .. (555.44,193.34) .. controls (554.57,195.53) and (555.22,197.06) .. (557.41,197.93) .. controls (559.6,198.8) and (560.26,200.34) .. (559.39,202.53) .. controls (558.52,204.72) and (559.17,206.25) .. (561.36,207.12) .. controls (563.55,207.99) and (564.21,209.52) .. (563.34,211.71) .. controls (562.47,213.9) and (563.12,215.43) .. (565.31,216.31) .. controls (567.5,217.18) and (568.16,218.71) .. (567.29,220.9) .. controls (566.42,223.09) and (567.07,224.62) .. (569.26,225.49) -- (570.34,228) -- (570.34,228) ;

\end{tikzpicture}

\caption{\label{fig:Illustration of a drop preserving join}Illustration of
a drop preserving join $\protect\bowtie_{f}$ (see Definition \ref{def:Drop-preserving join})}
\end{figure}

We show in what follows that if the starting net is race-free (see
\ref{def:Race-free net}), then Drop Preserving Joins preserve the
drop condition--and the obtained net is also race-free. This is the
object of the final Theorem \ref{thm:Drop-preserving joins preserve the drop condition}. 

\begin{tcolorbox}[breakable,enhanced,title=Race-freeness, colback=orange!5!white, colframe=orange!75!black]
\label{def:Race-free net}A Net $S$ is race-free if only negative
events can be in conflict with another negative event. That is, $a\sim b\implies p\left(a\right)=p\left(b\right)=\ominus\,OR\,p\left(a\right)\neq\ominus\neq p\left(b\right)$
\end{tcolorbox}

We prove this result through the following Propositions \ref{prop:Drop-preserving joins preserve race-freeness}
and \ref{prop:Drop condition properties for join of events}, starting
with a proof of the conservation of race-freeness after a drop-preserving
join.

\paragraph{Conservation of race-freeness after a drop-preserving join}

\begin{tcolorbox}[breakable,enhanced,title=Drop-preserving joins preserve race-freeness,
colback=orange!5!white, colframe=orange!75!black]

\begin{prop}
\label{prop:Drop-preserving joins preserve race-freeness}Let $S$
be a net skeleton, and $S'$ be obtained after a drop-preserving join
from clusters $P\subseteq\mathds{P}$ and $N$, according to the map
$f$ (Def. \ref{def:Drop-preserving join}).

Then if $S$ is race-free, $S'$ is also race-free.
\end{prop}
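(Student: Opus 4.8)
\tcbline{}

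The plan is to reduce race-freeness of $S'$ to a cluster-by-cluster check, exploiting that the minimal conflict $\sim$ is contained in the conflict relation $\#$ and therefore only relates events lying in the same conflict cluster. So it suffices to verify, for every conflict cluster $C$ of $S'$, that no minimal conflict internal to $C$ pairs a negative event with a non-negative one. Here I would invoke Remark~\ref{rem:After-a-drop-preserving}, which describes exactly these clusters: they are (i) the conflict clusters of $S$ other than $N$ and $\mathds{P}$, carried over unchanged, and (ii) one new cluster $\mathds{P}\bowtie_f N$ whose events are those of $\mathds{P}\setminus P$ together with the joined events $\boxed{n}\bowtie f(\boxed{n})$ for $\boxed{n}\in N$.

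For a cluster of type (i): it is literally a conflict cluster of $S$, with the same underlying events and the same internal conflict and minimal-conflict relations, since the join operation only rewires the flow relation in the neighbourhood of the events of $N$ and $P$ and leaves the rest of $S$ --- hence these clusters --- untouched. Since $S$ is race-free, no minimal conflict of such a cluster joins a negative event to a non-negative one; this is the only place where the hypothesis on $S$ is used.

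For the cluster of type (ii), $\mathds{P}\bowtie_f N$, the plan is to show it contains no negative event whatsoever, which makes the race-freeness condition vacuous there. The events of $\mathds{P}\setminus P$ are non-negative because $\mathds{P}$ is a positive-or-neutral ($0/\oplus$) cluster. Each joined event $\boxed{n}\bowtie f(\boxed{n})$ is the synchronisation of the positive event $f(\boxed{n})\in P$ with the negative event $\boxed{n}\in N$; reading off the formula for ${Q_0}_{e\bowtie e'}$ in Definition~\ref{def:Join of two events}, together with the equality $H(\boxed{n})=H(f(\boxed{n}))$ from the definition of a drop-preserving join, such a joined event carries no Hilbert space on either its input or its output side, so it is neutral --- and in any case not negative. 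Combining the two cases, every minimal conflict of $S'$ lies in a cluster in which the race-freeness condition holds, so $S'$ is race-free.

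I expect the only delicate point to be the reliance on Remark~\ref{rem:After-a-drop-preserving}: one must be confident that the conflict clusters of $S$ not equal to $N$ or $\mathds{P}$ survive the join with their minimal-conflict relation intact, i.e.\ that a drop-preserving join is a genuinely local modification. Once that is granted --- which is precisely what the Remark records --- the rest is short polarity bookkeeping, the crux being the elementary observation that the join of a positive and a negative event is non-negative.
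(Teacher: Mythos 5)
Your proposal is correct in substance but organizes the argument differently from the paper. The paper proves the proposition by a direct case analysis on a minimal conflict $a\sim_{S'}b$: both events inherited from $S$, one inherited and one joined, or both joined. The only non-trivial subcase is an inherited negative event $a^{\ominus}$ in minimal conflict with a joined event $\boxed{e\bowtie e'}$; there the paper argues that race-freeness of $S$ rules out $a^{\ominus}\sim_{S}e^{\oplus}$, hence forces $a^{\ominus}\sim_{S}e'^{\ominus}$, which contradicts the maximality of the negative cluster $N$ required by Definition \ref{def:Drop-preserving join}. You instead route everything through Remark \ref{rem:After-a-drop-preserving}: the clusters of $S'$ are those of $S$ untouched by the join, plus one new cluster $\mathds{P}\bowtie_{f}N$ containing only non-negative events, so race-freeness is either inherited or vacuous cluster by cluster. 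Your polarity bookkeeping (events of $\mathds{P}\setminus P$ are non-negative because $\mathds{P}$ is a $0/\oplus$ cluster; joined events are neutral) matches what the paper asserts in its third case. The caveat --- which you yourself flag --- is that the Remark is stated in the paper without proof, and its content (in particular, that no inherited event of $S$ other than those of $\mathds{P}\setminus P$ lands in the new cluster) is exactly the maximality-of-$N$ argument that the paper's proof makes explicit. So your route is not more elementary: it relocates the one genuinely non-trivial step into the Remark. To make it self-contained you would still need to show that an inherited negative event in minimal conflict with a joined event would witness non-maximality of $N$, which is precisely the paper's middle case.
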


\tcbline{}

Suppose $S$ is race-free, and let $a\sim_{S'}b$ be a couple of events
in conflict in $S'$.
\begin{itemize}
\item If $a$ and $b$ are events of $S$, then they were not affected by
the join - so race-freeness is preserved.
\item If $a\in T_{S}$ and $b\notin T_{S}$, then $b=\boxed{e\bowtie e'}$
for some events $e^{\oplus}$ and $e'{}^{\ominus}$ of $S$. So in
$S$, either $a\sim_{S}e^{\oplus}$ or $a\sim_{S}e'{}^{\ominus}$.
\begin{itemize}
\item if $p\left(a\right)=\ominus$, since $S$ is race-free, $a^{\ominus}{\not\sim}_{S}e^{\oplus}$.
So $a^{\ominus}\sim_{S}e'{}^{\ominus}$. But since $a$ was not joined
the cluster $N$ was not maximal. Contradiction with Defintion \ref{def:Drop-preserving join}.\ref{enu:point 1}.
So $p\left(a\right)\neq\ominus$
\item if $p\left(a\right)=\oplus\,or\,0$, since $S$ is race-free, $a^{\oplus,0}{\not\sim}_{S}e^{\ominus}$.
So $a^{\oplus}\sim_{S}e^{\oplus}$, and thus $a^{\oplus,0}\sim_{S'}\boxed{e\bowtie e'}$.
So race-freeness is preserved.
\end{itemize}
\item ($\star$) If $a\notin T_{S}$ and $b\notin T_{S}$, then $a=\boxed{h\bowtie h'}$
and $b=\boxed{e\bowtie e'}$ in a similar fashion as that of the previous
case. Then $p\left(\boxed{h\bowtie h'}\right)=p\left(\boxed{e\bowtie e'}\right)=0$
so the race-freeness is preserved.\label{ancho dans demo}
\end{itemize}
\end{tcolorbox}

\begin{rem}
Note that in ($\star$) of \ref{ancho dans demo}, we have either
of the following four case: %
\begin{tabular}{|c|c|c|}
\hline 
and & $h'\sim e'$ & $h'\not\sim e'$\tabularnewline
\hline 
\hline 
$h\sim e$ & 1 & 2\tabularnewline
\hline 
$h\not\sim e$ & 3 & 4\tabularnewline
\hline 
\end{tabular}, where Cases 1 and 2 are the only ones that allow a drop-preserving
join. Indeed, Case 4 is not possible since $\boxed{h\bowtie h'}$
are eventually in conflict $\boxed{e\bowtie e'}$ in $S'$. Furthermore,
Case 3 is not possible as per (\ref{def:Drop-preserving join}).(\ref{enu:point2a}).
But if this latter condition were to be dropped, the intermediary
join step $S\rightarrow S_{h\bowtie h'}$ would not conserve race-freeness,
although $S\rightarrow S_{h\bowtie h'}\rightarrow S_{h\bowtie h',e\bowtie e'}$
would yield a race-free net. 
\end{rem}

\paragraph{Technical intermediary Results}

The following is a technical lemma necessary for the proof the key
properties Proposition \ref{prop:Drop condition properties for join of events}
regarding the properties verified by the drop condition of joins of
events.

\begin{tcolorbox}[breakable,enhanced,title=Configurations before and after a drop-preserving
join, colback=white,colframe=black!30!white,borderline={0.5mm}{0mm}{orange!30!black,dashed}]

\begin{lem}
\label{prop:Configurations before and after a drop-preserving join}
Let $S$ be a race-free net skeleton, and $S'$ be the net obtained
after having performed a drop-preserving join on the pairs of events:
$p_{j}\bowtie n_{j}$ for $j\in J\subseteq\left\llbracket 1,n\right\rrbracket $.

We look at a configuration $x$ in $S'$, enabling the events $f_{1},\ldots,f_{n}$
such that $x\overset{f_{i}}{\ext}y_{i}$, with $\forall j\in J,\hspace*{1em}f_{j}=p_{j}\bowtie n_{j}$.
In $S$, define the family of events $\left(e_{i}\right)_{i}$ where
the joined events are replaced by their positive contribution. Formally,
$\forall i,\hspace*{1em}e_{i}:=\begin{cases}
f_{i} & i\notin J\\
p_{i} & o/w
\end{cases}$.

Then for all $I\subseteq\left\llbracket 1,n\right\rrbracket $, $z_{I}\in C\left(S\right)\iff y_{I}\in C\left(S\right)$.
\end{lem}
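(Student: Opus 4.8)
The plan is to reduce the claimed equivalence to a statement about \emph{direct} conflicts (shared pre-places) among transitions enabled at the base configuration, and then to read that statement off from race-freeness together with the two defining clauses of a drop-preserving join. First I would pin down the correspondence between $C(S)$ and $C(S')$: a joined transition $p_k\bowtie p_k'$ (write $p_k=f^{-1}(\dots)$, $p_k'$ its negative partner) has pre-places $\pre{p_k}\sqcup\pre{p_k'}$ and post-places $\post{p_k}\sqcup\post{p_k'}$, and no place is altered by the join, so replacing each joined transition by the pair $\{p_k,p_k'\}$ sends a configuration of $S'$ to a configuration of $S$ with the same associated marking; write $\widehat{x}$ for the configuration of $S$ thus obtained from $x$. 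Moreover ``un-joining'' a joined transition that occurs in a configuration of $S'$ stays a configuration of $S$, because $p_k$ and $p_k'$ are necessarily concurrent in $S$ with $\pre{p_k}\cap\pre{p_k'}=\varnothing$ — otherwise $p_k\bowtie p_k'$ would carry a self-loop or a backward branching, contradicting that $S'$ is an occurrence net. A second, elementary observation used throughout: transitions enabled at a common configuration are pairwise causally incomparable, hence the cones $[e_i)$ in $S$ and $[f_i)$ in $S'$ lie entirely inside the base configuration, so \emph{any} conflict between two of these freshly enabled transitions is necessarily a direct one.

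Next, since conflict is binary and each single extension is already a configuration, $z_I:=\widehat{x}\cup\{e_i\mid i\in I\}$ lies in $C(S)$ iff the $e_i$, $i\in I$, are pairwise non-conflicting, and likewise $y_I:=x\cup\{f_i\mid i\in I\}$ lies in $C(S')$ iff the $f_i$, $i\in I$, are — and by the previous remark only direct conflicts can arise. Translating through the correspondence of the first paragraph (using $e_i=p_i$, the positive contribution, for $i\in I\cap J$, and $e_i=f_i$ otherwise), testing $y_I\in C(S')$ is the same as testing, inside $S$, whether $z_I\cup\{n_i\mid i\in I\cap J\}$ is a configuration, where $n_i$ denotes the negative half of $f_i=p_i\bowtie n_i$. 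So the lemma reduces to showing that adjoining these negative halves to $z_I$ neither breaks downward-closure nor introduces a conflict, and that deleting them from a configuration leaves a configuration. Downward-closure is immediate, since each $n_i$ is enabled at the base configuration $\widehat{x}\subseteq z_I$ and is maximal in $z_I\cup\{n_i\mid i\in I\cap J\}$; this already settles the ``$\Leftarrow$'' direction, because removing maximal elements of a configuration yields a configuration.

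For ``$\Rightarrow$'' assume $z_I\in C(S)$. As $S$, hence $S'$ by Proposition \ref{prop:Drop-preserving joins preserve race-freeness}, is race-free, each $n_i$ is negative and can only be in conflict with a negative transition; by maximality of the negative cluster $N$ in Definition \ref{def:Drop-preserving join}.\ref{enu:point 1}, $n_i$ is therefore not in conflict with any transition outside $N$, in particular not with the base configuration nor with any $e_j$, $j\in I$; and a direct conflict $n_i\sim n_j$ would, by clause \ref{enu:point2a} of Definition \ref{def:Drop-preserving join}, force $p_i=f(n_i)\sim p_j=f(n_j)$, contradicting that the $p_i$ are pairwise non-conflicting because $z_I\in C(S)$. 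Hence $z_I\cup\{n_i\mid i\in I\cap J\}$ is conflict-free, i.e. $y_I\in C(S')$. I expect the main obstacle to be precisely this bookkeeping: pushing the problem down to direct conflicts at the base marking, and then checking that race-freeness, maximality of $N$, and clause \ref{enu:point2a} jointly forbid the negative halves from ever contributing a conflict — with the disjointness $\pre{p_k}\cap\pre{n_k}=\varnothing$ being what excludes the degenerate situation in which one shared place would simultaneously witness a conflict and lie inside a joined pair.
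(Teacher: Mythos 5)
Your proof is correct and follows essentially the same route as the paper's: both reduce the question to direct conflicts among the events enabled at $x$ and rule out every potential conflict involving a negative half $n_i$ using the same three ingredients (race-freeness for mixed polarities, maximality of the negative cluster $N$ for negative events outside $N$, and clause (2a) of the drop-preserving join for $n_u\sim n_v$). Your packaging via an explicit un-joining correspondence together with the ``remove maximal elements'' argument for the $\Leftarrow$ direction is a clean variant of the paper's symmetric case analysis; the only quibble is that the disjointness $\pre{p_k}\cap\pre{n_k}=\varnothing$ is more naturally attributed to race-freeness of $S$ than to $S'$ being an occurrence net.
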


\tcbline{}

Let $I\subseteq\left\llbracket 1,n\right\rrbracket $.
\begin{itemize}
\item If $I\cap J=\emptyset$, $\forall i\in I,e_{i}=f_{i}$, so $z_{I}=y_{I}$.
Thus $z_{I}\in C\left(S\right)\iff y_{I}\in C\left(S'\right)$.
\item Suppose $I\cap J=L\neq\emptyset$. we write $z_{I}=x\sqcup e_{I\setminus J}\sqcup e_{I\cap J}=x\sqcup f_{I\setminus J}\sqcup p_{I\cap J}$.
\begin{description}
\item [{$\implies$:}] If $z_{I}\in C\left(S\right)$, since $z_{I}$ is
a configuration, the events $f_{I\setminus J}$ and $p_{I\cap J}$
are mutually compatible ($\star$). Let us show that $y_{I}=x\sqcup f_{I\setminus J}\sqcup\left(p_{u}\bowtie n_{u}\right)_{u\in I\cap J}\in C\left(S'\right)$.
This is verified iff the $f_{I\setminus J}$ and $\left(p_{u}\bowtie n_{u}\right)_{u\in I\cap J}$
are mutually compatible too. That is, iff 
\[
\left\{ \begin{array}{ll}
\text{no }\boxed{p_{u}\bowtie n_{u}}\sim\boxed{p_{v}\bowtie n_{v}} & u,v\in I\cap J\\
\text{no }\boxed{f_{i}}\sim\boxed{p_{u}} & i\in I\setminus J,u\in I\cap J\\
\text{no }\boxed{f_{i}}\sim\boxed{n_{u}} & i\in I\setminus J,u\in I\cap J
\end{array}\right.
\]

\begin{itemize}
\item Case 1 cannot happen, as it would imply any of the four following
conflict :%
\begin{tabular}{|c|c|c|}
\hline 
$\sim$ & $\boxed{p_{v}}$ & $\boxed{n_{v}}$\tabularnewline
\hline 
\hline 
$\boxed{p_{u}}$ & \begin{cellvarwidth}[t]
\centering
Contradicts \\
($\star$)
\end{cellvarwidth} & \begin{cellvarwidth}[t]
\centering
Contradicts \\
Race-freeness
\end{cellvarwidth}\tabularnewline
\hline 
$\boxed{n_{u}}$ & \begin{cellvarwidth}[t]
\centering
Contradicts\\
 Race-freeness
\end{cellvarwidth} & \begin{cellvarwidth}[t]
\centering
If $\boxed{n_{u}}\sim\boxed{n_{v}}$, \\
drop-preserv. join\\
 implies $\boxed{p_{u}}\sim\boxed{p_{v}}$. \\
Contradiction
\end{cellvarwidth}\tabularnewline
\hline 
\end{tabular}.
\item Case 2 contradicts ($\star$). 
\item Finally, Case 3 would contradict race-freeness or the drop-preserving
join assumptions: if $p\left(\boxed{f_{i}}\right)=0/\oplus$, $S$
is not race-free. if $p\left(\boxed{f_{i}}\right)=\ominus$, then
the negative cluster that was joined was not maximal since $\boxed{f_{i}}$
extends it.
\end{itemize}
\end{description}
\begin{itemize}
\item So $y_{I}\in C\left(S'\right)$
\end{itemize}
\begin{description}
\item [{$\impliedby$:}] If $y_{I}\in C\left(S'\right)$, then the events
$f_{I\setminus J}=e_{I\setminus J}$ and $\left(p_{u}\bowtie n_{u}\right)_{u\in I\cap J}$
are mutually compatible. Similar reasoning to previously. 
\end{description}
\end{itemize}
\end{tcolorbox}
\begin{tcolorbox}[breakable,enhanced,title= Properties of drop-preserving joins w.r.t
the drop condition ,colback=orange!5!white, colframe=orange!75!black]

\begin{prop}
\label{prop:Drop condition properties for join of events}Let $S$
be a race-free net skeleton, and $S'$ be the net obtained after having
performed a drop-preserving join on the pairs of events: $p_{j}\bowtie n_{j}$
for $j\in J\subseteq\left\llbracket 1,n\right\rrbracket $.

We look at a configuration $x$ in $S'$, enabling the events $f_{1},\ldots,f_{n}$
such that $x\overset{f_{i}}{\ext}y_{i}$, with $\forall j\in J,\hspace*{1em}f_{j}=p_{j}\bowtie n_{j}$.
In $S$, define the family of events $\left(e_{i}\right)_{i}$ where
the joined events are replaced by their positive contribution. Formally,
$\forall i,\hspace*{1em}e_{i}:=\begin{cases}
f_{i} & i\notin J\\
p_{i} & o/w
\end{cases}$.

Then the following properties hold:
\begin{enumerate}
\item $x$ is also a configuration in $S$, with $\cutpn x_{S'}=\cutpn x_{S}={\displaystyle \bigsqcup_{i\in\left\llbracket 1,n\right\rrbracket \setminus J}\pre{\boxed{e_{i}}}\sqcup\bigsqcup_{j\in J}\pre{\boxed{p_{j}}}\sqcup\bigsqcup_{j\in J}\pre{\boxed{n_{j}}}}$.\\
$\forall i,\hspace*{1em}x\overset{e_{i}}{\ext}z_{i}$, where $z_{i}:=x\sqcup e_{i}$.\\
$\forall j\in J,z_{j}=y_{j}\setminus p_{j}\bowtie n_{j}\sqcup p_{j}$.
\item if $p\bowtie n\in x$, $\qdv x{y_{1},\ldots,y_{n}}_{S_{p\bowtie n}}=\qdv{x\setminus p\bowtie n\sqcup\left\{ p,n\right\} }{\left(y_{i}\setminus p\bowtie n\sqcup\left\{ p,n\right\} \right){}_{i}}_{S}$
\item if $\forall j\in J,\hspace*{1em}p_{j}\bowtie n_{j}\notin x$, 
\[
\qdv x{y_{1},\ldots,y_{n}}_{S'}=\qdv x{z_{1},\ldots,z_{n}}_{S}
\]
\end{enumerate}
\end{prop}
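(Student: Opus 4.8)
The plan is to establish the three points in order, using the configuration correspondence of Lemma~\ref{prop:Configurations before and after a drop-preserving join} to reduce each identity between drop functions to a term-by-term comparison of the two sums, and the single-extension expression of $\operator{\cdot}$ from Section~\ref{subsec:Only checking conflict clusters is enough} together with Local Obliviousness to identify the individual operators. The recurring structural fact is that a join never touches conditions nor their $Q_{0}$-labels, and that by Definition~\ref{def:Join of two events} one has ${Q_{0}}_{e\bowtie e'}\left(\boxed{e\bowtie e'}\right)=\left[\Id{Q_{0}\left(\post{\boxed{e}}\right)}\otimes Q_{0}\left(\boxed{e'}\right)\right]\circ\left[Q_{0}\left(\boxed{e}\right)\otimes\Id{Q_{0}\left(\pre{\boxed{e'}}\right)}\right]$, with $Q_{0}\left(\boxed{e'}\right)$ forced by Local Obliviousness to be the identity $Q_{0}\left(\pre{\boxed{e'}}\right)\otimes H\left(e'\right)\to Q_{0}\left(\post{\boxed{e'}}\right)$, so that $Q_{0}\left(\post{\boxed{e'}}\right)=Q_{0}\left(\pre{\boxed{e'}}\right)\otimes H\left(e'\right)$ and $H(e')=H(e)$ by the join hypothesis.

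For point~(1) I would simply unfold the join: $\pre{\boxed{p_{j}\bowtie n_{j}}}=\pre{\boxed{p_{j}}}\sqcup\pre{\boxed{n_{j}}}$ and $\post{\boxed{p_{j}\bowtie n_{j}}}=\post{\boxed{p_{j}}}\sqcup\post{\boxed{n_{j}}}$, while conditions are unchanged, so $\cutpn x$ is literally the same set of conditions in $S$ and $S'$; the displayed description of $\cutpn x$ then follows since $x$ enables $\boxed{f_{i}}$ in $S'$. Conversely $x$ is already downward-closed and conflict-free in $S$ (a join creates no new conflict among events of $S$, by the analysis behind Proposition~\ref{prop:Drop-preserving joins preserve race-freeness}), hence $x\in C(S)$, each $e_{i}$ is enabled at $x$, so $x\overset{e_{i}}{\ext}z_{i}$ with $z_{i}=x\sqcup e_{i}$; for $j\in J$ this $z_{j}$ is exactly $y_{j}$ with $\boxed{p_{j}\bowtie n_{j}}$ replaced by $\boxed{p_{j}}$. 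For point~(2), since $\boxed{p\bowtie n}\in x$ the joined event lies strictly below every interval $\int{\cutpn x}{\cutpn{y_{I}}}$ and so never occurs inside any restriction; writing $x'=x\setminus\{p\bowtie n\}\sqcup\{p,n\}$ and $y_{I}'$ likewise, the annotated net $\rest{\int{\cutpn x}{\cutpn{y_{I}}}}$ in $S_{p\bowtie n}$ coincides with the annotated net $\rest{\int{\cutpn{x'}}{\cutpn{y_{I}'}}}$ in $S$, whence $\operator{\rest{\int{\cutpn x}{\cutpn{y_{I}}}}}=\operator{\rest{\int{\cutpn{x'}}{\cutpn{y_{I}'}}}}$; Lemma~\ref{prop:Configurations before and after a drop-preserving join} matches the index sets, so the two drop functions agree term by term.

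Point~(3) is the substantial case: the $\boxed{f_{j}}$ with $j\in J$ are joined events and, as $\boxed{p_{j}\bowtie n_{j}}\notin x$, they now occur inside the single-extension intervals. I would use
\[
\operator{\rest{\int{\cutpn x}{\cutpn{y_{I}}}}}=\bigotimes_{i\in I}{Q_{0}}'\!\left(\boxed{f_{i}}\right)\otimes\Id{Q_{0}\left(K_{I}'\right)}\ \text{ in }S',\qquad\operator{\rest{\int{\cutpn x}{\cutpn{z_{I}}}}}=\bigotimes_{i\in I}Q_{0}\!\left(\boxed{e_{i}}\right)\otimes\Id{Q_{0}\left(K_{I}\right)}\ \text{ in }S.
\]
For $i\notin J$ the corresponding factors are literally equal. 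For $i=j\in I\cap J$, the factorisation of ${Q_{0}}'\!\left(\boxed{p_{j}\bowtie n_{j}}\right)$ recalled above exhibits it as $Q_{0}\!\left(\boxed{p_{j}}\right)$ acting on $Q_{0}\!\left(\pre{\boxed{p_{j}}}\right)$, tensored with a pass-through of $Q_{0}\!\left(\pre{\boxed{n_{j}}}\right)$ that Local Obliviousness rebundles with $H(p_{j})=H(n_{j})$ as $Q_{0}\!\left(\post{\boxed{n_{j}}}\right)$; on the $S$ side, firing $e_{j}=p_{j}$ contributes $Q_{0}\!\left(\boxed{p_{j}}\right)$ while $\pre{\boxed{n_{j}}}$ remains marked and thus sits inside $\Id{Q_{0}\left(K_{I}\right)}$. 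Hence the two operators are equal up to the obliviousness identification $Q_{0}\!\left(\post{\boxed{n_{j}}}\right)\cong Q_{0}\!\left(\pre{\boxed{n_{j}}}\right)\otimes H(n_{j})$ — which is precisely how $K_{I}'$ and $K_{I}$, and $\cutpn{y_{I}}$ and $\cutpn{z_{I}}$, differ — and since the drop function composes everything with $\tr_{Q\left(\cutpn{y_{I}}\right)\otimes H\left(\sigma\int{\cutpn x}{\cutpn{y_{I}}}\right)}$, which traces out the $Q_{0}\!\left(\pre{\boxed{n_{j}}}\right)$ and $H(n_{j})$ factors identically on both sides, one gets $\tr\circ\operator{\rest{\int{\cutpn x}{\cutpn{y_{I}}}}}=\tr\circ\operator{\rest{\int{\cutpn x}{\cutpn{z_{I}}}}}$ for every $I$. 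Lemma~\ref{prop:Configurations before and after a drop-preserving join} then equates the index sets of the two sums, yielding $\qdv x{y_{1},\ldots,y_{n}}_{S'}=\qdv x{z_{1},\ldots,z_{n}}_{S}$.

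The step I expect to be the main obstacle is this per-$I$ operator identification in point~(3): keeping straight which Hilbert-space factors (the pre- and post-conditions of the $n_{j}$'s, and the $H(n_{j})=H(p_{j})$) live inside the event operator versus inside the constant identity block, across $S$ and $S'$, and confirming that the final trace produces no spurious dimension factor. The cleanest way to discharge this is to observe that, under the canonical Local-Obliviousness isomorphisms $Q_{0}\!\left(\post{\boxed{n_{j}}}\right)\cong Q_{0}\!\left(\pre{\boxed{n_{j}}}\right)\otimes H(n_{j})$, both operators are one and the same $CPTNI$ $\bigotimes_{i\in I}Q_{0}\!\left(\boxed{e_{i}}\right)\otimes\Id{Q_{0}\left(K_{I}\right)}$, read with $H(p_{j})$ grouped either as the positive output of $\boxed{p_{j}}$ (in $S$) or inside $Q_{0}\!\left(\post{\boxed{n_{j}}}\right)$ (in $S'$); since $\tr$ is insensitive to this regrouping, the two sums coincide.
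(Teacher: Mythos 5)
Your proposal is correct and follows essentially the same route as the paper: point-by-point, it reduces each drop-function identity to a term-by-term comparison of the sums via the configuration-correspondence lemma, identifies the restriction nets directly in point~(2), and in point~(3) matches $\operator{\restg{S'}{\int{\cutpn x}{\cutpn{y_{I}}}}}$ with $\operator{\restg{S}{\int{\cutpn x}{\cutpn{z_{I}}}}}$ by absorbing the $\Id{Q_{0}\left(\pre{\boxed{n_{j}}}\right)}$ factors of the joined annotations into the constant block $K_{I}^{e}=K_{I}^{f}\sqcup\pre{\boxed{n_{I\cap J}}}$. The only difference is presentational: the paper uses the tensor form ${Q_{0}}_{e\bowtie e'}\left(\boxed{p\bowtie n}\right)=Q_{0}\left(\boxed{p}\right)\otimes\Id{Q_{0}\left(\pre{\boxed{n}}\right)}$ directly so the operator identity is literal, whereas you route through the composition form and Local Obliviousness and then note the trace is insensitive to the resulting regrouping of $H(p_{j})=H(n_{j})$ — a slightly longer path that is, if anything, more explicit about the Hilbert-space bookkeeping the paper glosses over.
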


\tcbline{}
\begin{enumerate}
\item -
\item If $p\bowtie n\in x$ and $\boxed{p\bowtie n}\notin\cut x$, $\post{\boxed{p\bowtie n}}\cap\cutpn x=\emptyset$.
So $\restg{S_{p\bowtie n}}{\int{\cutpn x}{\cutpn{y_{I}}}}=\restg P{\int{\cutpn x}{\cutpn{y_{I}}}}$,
$\operator{\restg{S_{p\bowtie n}}{\int{\cutpn x}{\cutpn{y_{I}}}}}=\operator{\restg S{\int{\cutpn x}{\cutpn{y_{I}}}}}$\\
If $p\bowtie n\in x$ and $\boxed{p\bowtie n}\in\cut x$, $\post{\boxed{p\bowtie n}}\subset\cutpn x$.
But by construction $\post{\boxed{p\bowtie n}}_{\left|S_{e\bowtie e'}\right.}=\left[\post{\boxed{p}}\sqcup\post{\boxed{n}}\right]_{\left|S\right.}$.
So $\restg{S_{p\bowtie n}}{\int{\cutpn x}{\cutpn{y_{I}}}}=\restg S{\int{\cutpn x}{\cutpn{y_{I}}}}$.
\item Let $I\subseteq\left\llbracket 1,n\right\rrbracket $
\begin{description}
\item [{\uline{If~\mbox{$I\cap J=\emptyset$}:}}] $\forall i\in I,f_{i}=e_{i}$,
so $y_{I}=z_{I}$. $\operator{\restg{S'}{\int{\cutpn x}{\cutpn{y_{I}}}}}=\operator{\restg S{\int{\cutpn x}{\cutpn{z_{I}}}}}$ 
\item [{\uline{If~\mbox{$I\cap J\neq\emptyset$}:}}] For clarity, let
us distinguish the sets of constant pre-places under the action of
$\boxed{f_{I}}$ and $\boxed{e_{I}}$ respectively with an upper script:
\[
K_{I}^{f}:=R_{I}^{f}\sqcup X^{f}={\displaystyle \bigsqcup_{t\notin I}}\pre{\boxed{f_{t}}}\sqcup{\displaystyle \bigsqcup_{\begin{array}{c}
\ci a\in\cutpn x\\
\forall i,\,\ci a\notin\pre{\boxed{f_{i}}}
\end{array}}}\ci a
\]
 and 
\[
K_{I}^{e}:=R_{I}^{e}\sqcup X^{e}={\displaystyle \bigsqcup_{t\notin I}}\pre{\boxed{e_{t}}}\sqcup{\displaystyle \bigsqcup_{\begin{array}{c}
\ci a\in\cutpn x\\
\forall i,\,\ci a\notin\pre{\boxed{e_{i}}}
\end{array}}}\ci a
\]
 Remark that $K_{I}^{e}=K_{I}^{f}\sqcup\pre{\boxed{n_{I\cap J}}}$.
In that sense, $\operator{\restg{S'}{\int{\cutpn x}{\cutpn{y_{I}}}}}=\bigotimes_{i\in I}Q_{0}\left(\boxed{f_{i}}\right)\otimes\Id{K_{I}^{f}}$
and $\operator{\restg S{\int{\cutpn x}{\cutpn{z_{I}}}}}=\bigotimes_{i\in I}Q_{0}\left(\boxed{e_{i}}\right)\otimes\Id{K_{I}^{e}}$.
\end{description}
\begin{itemize}
\item Recalling that by \vref{def:Join of two events}, $\forall j\in J,\hspace*{1em}Q_{0}\left(\boxed{e_{j}}\right)=Q_{0}\left(\boxed{p_{j}\bowtie n_{j}}\right)=Q_{0}\left(\boxed{p_{j}}\right)\otimes\Id{Q_{0}\left(\pre{\boxed{n_{j}}}\right)}$,
we have
\begin{align*}
\operator{\restg{S'}{\int{\cutpn x}{\cutpn{y_{I}}}}} & =\bigotimes_{i\in I\setminus J}Q_{0}\left(\boxed{f_{i}}\right)\otimes\bigotimes_{u\in I\cap J}Q_{0}\left(\boxed{f_{u}}\right)\otimes\Id{K_{I}^{f}}\\
 & =\bigotimes_{i\in I\setminus J}Q_{0}\left(\boxed{f_{i}}\right)\otimes\left[\bigotimes_{u\in I\cap J}Q_{0}\left(\boxed{p_{u}}\right)\otimes\bigotimes_{u\in I\cap J}\Id{Q_{0}\left(\pre{\boxed{n_{u}}}\right)}\right]\otimes\Id{K_{I}^{f}}\\
 & =\bigotimes_{i\in I\setminus J}Q_{0}\left(\boxed{e_{i}}\right)\otimes\bigotimes_{u\in I\cap J}Q_{0}\left(\boxed{p_{u}}\right)\otimes\Id{Q_{0}\left(K_{I}^{f}\sqcup\pre{\boxed{n_{I\cap J}}}\right)}\\
 & =\bigotimes_{i\in I}Q_{0}\left(\boxed{e_{i}}\right)\otimes\Id{K_{I}^{e}}\\
 & =\operator{\restg S{\int{\cutpn x}{\cutpn{z_{I}}}}}
\end{align*}
\item As such, using \ref{prop:Configurations before and after a drop-preserving join}
for the correspondence of configurations in $S'$ and $S$, we obtain
\begin{align*}
\qdv x{y_{1},\ldots,y_{n}}_{S'} & =\sum_{\begin{array}{c}
I\subseteq\left\llbracket 1,n\right\rrbracket \\
y_{I}\in C(S')
\end{array}}\left(-1\right)^{|I|}\tr\operator{\restg S{\int{\cutpn x}{\cutpn{y_{I}}}}}\\
 & =\sum_{\begin{array}{c}
I\subseteq\left\llbracket 1,n\right\rrbracket \\
z_{I}\in C(S)
\end{array}}\left(-1\right)^{|I|}\tr\operator{\restg S{\int{\cutpn x}{\cutpn{z_{I}}}}}\\
 & =\qdv x{y_{1},\ldots,y_{n}}_{S}
\end{align*}
\end{itemize}
\end{enumerate}
\end{tcolorbox}

\paragraph{Conservation of the Drop Condition after a drop-preserving join}

We now conclude with the following final theorem, stating that drop-preserving
join preserve the drop condition. This builds the ground for a sensible
notion of composition for Quantum Petri Nets.

\begin{tcolorbox}[breakable,enhanced,title=Drop-preserving joins preserve the drop condition,colback=orange!5!white,
colframe=red!75!black]

\begin{thm}
\label{thm:Drop-preserving joins preserve the drop condition}Let
$S$ be a net skeleton, and $S'$ obtained after one drop-preserving
join from clusters $P\subseteq\mathds{P}$ and $N$, according to
the map $f$ (Def. \ref{def:Drop-preserving join}).

Then if $S$ is race-free and satisfies the Drop Condition, $S'$
is also race-free and satisfies the drop condition.

An immediate induction also yields that any finite sequence of Drop-preserving
joins preserve the drop condition.
\end{thm}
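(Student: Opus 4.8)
The plan is to reduce the Drop Condition for $S'$ to that for $S$, which holds by hypothesis, using the machinery already developed. Race-freeness of $S'$ is immediate from Proposition \ref{prop:Drop-preserving joins preserve race-freeness}, so only the Drop Condition needs proof. By Theorem \ref{conj:Considering-single-extensions} it suffices to check it on single extensions: fix a configuration $x$ of $S'$ and $(0,\oplus)$-polarized events $f_1,\dots,f_n$ with $x\ext^{0,\oplus}x\sqcup f_i$, and aim to show $\qdv x{x\sqcup f_1,\dots,x\sqcup f_n}_{S'}\sqsupseteq0$. Recall that joined events $p\bowtie n$ have polarity $0$ (cf.\ the proof of Proposition \ref{prop:Drop-preserving joins preserve race-freeness}), so some of the $f_i$ may themselves be joined events.

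The core of the argument transports this drop function down to $S$ via Proposition \ref{prop:Drop condition properties for join of events}. I would first deal with every joined event $p\bowtie n$ contained in the base configuration $x$: applying part~(2) of that Proposition (legitimate since the constituent single joins of a drop-preserving join commute, by Definition \ref{def:Drop-preserving join}), each such join can be undone, splitting $p\bowtie n$ into $\{p,n\}$ in $x$ and in the $x\sqcup f_i$, without changing the value of the drop function. Iterating, I reach a partially un-joined net $S''$ and a configuration $x''$ of $S''$ containing no joined event, with the drop function unchanged. Then part~(3) of the Proposition applies --- no remaining joined event of $S''$ lies in $x''$ --- and rewrites the drop function as $\qdv{x''}{x''\sqcup e_1,\dots,x''\sqcup e_n}_{S}$, where, by part~(1), $x''\in C(S)$, each $e_i$ equals $f_i$ when $f_i$ is not joined and equals its positive contribution $p_i$ otherwise, and $x''\overset{e_i}{\ext}x''\sqcup e_i$ in $S$. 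Since a positive contribution has polarity $\oplus$ and the $f_i$ were $(0,\oplus)$, each $e_i$ is $(0,\oplus)$; hence $x''\ext^{0,\oplus}x''\sqcup e_i$, and the Drop Condition of $S$ gives $\qdv{x''}{x''\sqcup e_1,\dots,x''\sqcup e_n}_{S}\sqsupseteq0$. Tracing back, $\qdv x{x\sqcup f_1,\dots,x\sqcup f_n}_{S'}\sqsupseteq0$, so $S'$ satisfies the Drop Condition.

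For a finite sequence of drop-preserving joins I would argue by induction on its length: the net obtained after one join is again race-free and satisfies the Drop Condition by the above, and Remark \ref{rem:After-a-drop-preserving} describes its cluster structure explicitly, so the next drop-preserving join in the sequence is performed on a net meeting the theorem's hypotheses; the conclusion propagates.

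The hard part will be the bookkeeping in the middle step: one must apply part~(2) only to the joins whose joined event lies in the base configuration, in an order that leaves the hypothesis of part~(3) satisfied afterwards, while verifying that splitting and re-indexing keeps each $x\sqcup f_i$ a genuine single extension of the updated base configuration and preserves $(0,\oplus)$-polarity, so that the Drop Condition of $S$ can legitimately be invoked at the end.
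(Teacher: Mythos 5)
Your proposal is correct and follows essentially the same route as the paper: race-freeness via Proposition \ref{prop:Drop-preserving joins preserve race-freeness}, reduction to single-extension clusters, and transport of the drop function from $S'$ to $S$ via Proposition \ref{prop:Drop condition properties for join of events}. The only difference is that you spell out the use of part~(2) to handle joined events already lying in the base configuration before invoking part~(3), a bookkeeping step the paper's proof leaves implicit.
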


\tcbline{}

$S'$ is race-free as per \ref{prop:Drop-preserving joins preserve race-freeness}.

By the Cluster property, the Drop Condition is satisfied iff it is
satisfied on every cluster of single extensions of $\oplus/0$ events,
for every configuration $x$.

Let $x$ be a configuration in $S'$, and $F$ be such a cluster enabled
by $x$. Let $E$ be the set of events of $S$ such that $E$ is the
set $F$, where one has replaced the new joined elements $\boxed{p_{u}\bowtie n_{u}}$
by their positive contribution $\boxed{p_{u}}$. 
\begin{itemize}
\item If $F$ contains no new joined events (every event of $F$ is also
in $S$), it is unchanged by the join operation, and $E=F$ satisfies
the Drop condition in $S$.
\item Otherwise, we are exactly in the situation of Property \#3 of Proposition
\ref{prop:Drop condition properties for join of events}; and the
drop condition on $E$ and $F$ are equal; and the drop-condition
on $F$ is also satisfied.
\end{itemize}
\end{tcolorbox}

\subsubsection{Future Extensions}

A complete compositional operation for Quantum Petri Net should preserve
all the Quantum Petri Net axioms--not only the local Drop Condition
should be preserved after the transformation, but also should be the
Local Obliviousness and the local Functoriality of the quantum valuation.
We have shown that performing a drop-preserving join preserves the
local drop-condition. It is not known whether the drop-preserving
joins are minimally restrictive w.r.t. conservation of the drop-condition. 

Further extensions need to derive sufficient (and necessary) minimal
conditions for the other properties to be preserved.

\section{Conclusion}

We have defined \emph{Quantum Petri Nets} (QPNs) as Petri nets annotated
with a quantum valuation $Q_{0}$, whose properties are compatible
with the quantum event structure semantics of \autocite{clairambaultConcurrentQuantumStrategies2019,clairambaultGameSemanticsQuantum2019}.
This yields a natural unfolding semantics, paralleling the categorical
correspondence between classical Petri nets, their unfoldings, and
event structures. Our main technical contributions are: (i) a local
definition of \emph{Quantum Occurrence Nets} compatible with quantum
event structures, (ii) an extension to QPNs via unfolding semantics,
and (iii) a base for a compositional framework for QPNs--where \emph{drop-preserving
joins} are introduced as a sufficient criteria to preserve the drop-condition
after a composition operation.

Future directions include establishing explicit categorical links
between Quantum Petri Nets, Quantum Occurrence Nets, and Quantum Event
Structures, and understanding the expressiveness of the new models
and their relations with existing Petri net classes. Characterizing
precisely the properties for joins to preserve QPNs properties is
also necessary. Finally, additional combinatorial simplifications
of the drop-conditions could be investigated for verificational purposes. 

\printbibliography

\end{document}